\documentclass[acmsmall,nonacm,screen]{acmart}

\setcopyright{none}
\acmJournal{TEAC}
\acmYear{2026}
\acmVolume{0}
\acmNumber{0}
\acmArticle{0}
\acmDOI{}

\usepackage{mathtools}
\usepackage{booktabs}

\AtEndPreamble{%
  \theoremstyle{acmplain}%
  \newtheorem{assumption}[theorem]{Assumption}%
  \newtheorem{openquestion}[theorem]{Open Question}%
  \theoremstyle{acmdefinition}%
  \newtheorem{remark}[theorem]{Remark}%
}

\newcommand{\R}{\mathbb{R}}
\newcommand{\E}{\mathbb{E}}
\newcommand{\Prob}{\mathbb{P}}
\newcommand{\calG}{\mathcal{G}}
\newcommand{\calR}{\mathcal{R}}
\newcommand{\calI}{\mathcal{I}}

\newcommand{\calY}{\mathcal{Y}}
\newcommand{\calN}{\mathcal{N}}
\newcommand{\Cbind}{\mathcal{C}_{\mathrm{bind}}}
\newcommand{\Sbar}{\bar{S}}

\numberwithin{equation}{section}

\newcommand{\suppsec}[1]{the full-proof block at the end of this appendix}
\newcommand{\suppnote}{}
\newcommand{\multidimnote}{Appendix~\ref{sec:multi-dim-geom} develops these statements and the obstructions behind them.}

\title[The Endogeneity of Miscalibration]{The Endogeneity of Miscalibration: \\
  Impossibility and Escape in Scored Reporting}

\author{Lauri Lov\'{e}n}
\authornote{Corresponding author.}
\affiliation{%
  \institution{Future Computing Group, University of Oulu}
  \city{Oulu}
  \country{Finland}
}
\email{lauri.loven@oulu.fi}

\author{Sasu Tarkoma}
\affiliation{%
  \institution{University of Oulu and University of Helsinki}
  \city{Oulu / Helsinki}
  \country{Finland}
}

\renewcommand{\shortauthors}{Lov\'{e}n and Tarkoma}

\ccsdesc[500]{Theory of computation~Algorithmic mechanism design}
\ccsdesc[300]{Computing methodologies~Multi-agent systems}
\ccsdesc[300]{Applied computing~Economics}
\ccsdesc[100]{Computing methodologies~Reinforcement learning}

\keywords{Proper scoring rules, incentive compatibility, mechanism design,
  scalable AI oversight, calibration, credible mechanisms,
  Goodhart's law}

\begin{document}

\begin{abstract}
An agent's probability report is paid for twice: by a strictly proper scoring rule, and by an approval rule for the decision it triggers. In this classical decision-coupled setting, non-affine approval is known to defeat truthful reporting. We show the conflict is endogenous: when feasible, the welfare-maximizing approval rule is never affine. The distortion, however, is predictable and can be designed around. There is a reserve report at which pretending to be the marginal type costs exactly the approval prize. Approving at or above the reserve screens types perfectly under every strictly proper score, and the reserve does not depend on the type distribution. A Lipschitz rule with a single kink attains first-best exactly; under strict feasibility no continuously differentiable rule does. The binding constraint is steepness, not smoothness. First-best is attainable within a slope budget if and only if the budget is at least the critical slope: the steepest chord of the pretending cost up to the reserve. Below it the welfare loss is cubic in the shortfall. Where the pretending cost is convex up to the reserve, as for Brier, log and power scores, the critical slope is closed-form. The instances are AI-agent oversight and marketplace operation.
\end{abstract}

\maketitle

\section{Introduction}\label{sec:intro}

A principal who delegates a decision to a better-informed agent often asks the agent to report what it knows before deciding whether to delegate at all. Scalable oversight of AI systems is organized this way: a model reports a confidence, a strictly proper scoring rule pays it for accuracy once the outcome is observed, and an approval rule turns the same report into permission to act, as in protocols built on human preference feedback \citep{christiano2017preferences} and on debate between models \citep{irving2018debate}. Marketplace operation has the same shape, with an operator reporting the information on which an allocation and its payments are computed and earning revenue that moves with those payments. In both settings the report is paid for twice, once for being accurate and once for what it authorizes, and only the first payment is designed to reward the truth.

The two payments pull against each other, and the pull is created by the principal. Paying for accuracy makes the report informative; letting the report decide makes distorting it worthwhile. A principal that wants to separate agents worth approving from agents not worth approving must make approval depend on the report in a way that is not affine, and a non-affine dependence moves the agent's best report away from its belief.

\subsection{The Primitive and Its Two Instances}
\label{sec:primitive}

Both settings instantiate a single primitive, due to \citet{boutilier2012eliciting}: an expert's payoff is a proper score for its report plus the inherent utility of the decision the report triggers. In the form used here, an agent privately observes a type $p \in [0,1]$, the probability that the action it seeks to take will succeed, and sends a report $r \in [0,1]$ settled against a binary outcome. The principal pays the agent $\beta > 0$ times a strictly proper score for $r$ and grants approval with probability $q(r)$, worth $\gamma > 0$ to the agent, where the principal designs the \emph{approval rule} $q : [0,1] \to [0,1]$. Writing $G$ for the convex generator of the score and $D_G(p,r)$ for its Bregman divergence, which is the agent's expected calibration loss from reporting $r$ when the truth is $p$, the agent maximizes $\gamma\, q(r) - \beta\, D_G(p,r)$ over $r$. The principal collects $u_s$ when an approved action succeeds, $u_f$ when it fails, and $u_d$ when it withholds approval, so it wants to approve exactly the types above the break-even probability $p_{\min} = (u_d - u_f)/(u_s - u_f)$. That is the first best, the benchmark every result below is measured against.

The two instances differ in what the report is about and in who the agent is, not in this structure. The reading just given is the AI agent oversight one, developed in Section~\ref{sec:behavioral}. In marketplace operation (Section~\ref{sec:market}), the operator's private type is the true bid profile, the score is a compliance score settled on the executed profile, and the perturbation payoff weighted by $\gamma$ is the revenue the reported profile supports; \citet{loven2026trilemma} analyze the credibility constraints that bind such markets under polymatroidal feasibility, and the instance here isolates the elicitation channel within that setting.

\subsection{Contributions}
\label{sec:contributions}

The three design results below are organized by one quantity, the critical slope of the score, defined in Section~\ref{sec:behavioral}. A principal who screens on a scored report designs the distortion it must then read; a single reserve report absorbs that distortion at no welfare cost; and whether an approval rule can reach the reserve at all is settled by how steeply it is allowed to climb.

\paragraph{Optimal oversight is self-undermining.} Provided the approval prize is not so large that pretending is cheap for everyone, that is $\gamma/\beta \leq D_G(p_{\min}, 1)$, the approval rule that maximizes the principal's welfare is never affine in the report, because an affine rule spreads approval smoothly across types and so cannot deliver the all-or-nothing approval the first best calls for (Theorem~\ref{thm:optimal-nonaffine}), and a non-affine rule makes truthful reporting suboptimal for the agent. The principal's own optimization therefore creates the conditions under which the report it is optimizing over stops being calibrated, which is the endogeneity the title names.

\paragraph{The induced distortion is predictable and absorbed by the reserve.} Under the same condition there is one report $r_0(G)$, the \emph{reserve}, at which the agent's calibration cost of pretending to be the marginal type exactly exhausts the approval prize, $D_G(p_{\min}, r_0) = \gamma/\beta$; approving exactly the reports at or above $r_0(G)$ screens the types perfectly for every strictly proper score (Theorem~\ref{thm:reserve}), so the miscalibration the principal induces costs it nothing even though nothing undoes it. The reserve is a functional of the score's generator, the break-even type and the agent's price of approval $\gamma/\beta$ alone, and not of the type distribution (Proposition~\ref{prop:f-free}), which is the reverse of the comparative static that governs a \citet{myerson1981optimal} reserve price.

\paragraph{The binding constraint is steepness, not smoothness.} A cliff at $r_0(G)$ is not the only rule that attains the first best: an approval rule that is Lipschitz with a single kink attains it for every strictly proper score, while, whenever the feasibility inequality is strict and the reserve is therefore interior, no continuously differentiable rule attains it exactly (Theorem~\ref{thm:kink}). Every first-best rule must have slope, and the requirement is sharp: writing $\Lambda$ for the largest slope the rule is permitted, first-best screening is attainable if and only if $\Lambda \geq \Lambda_c(G)$, where the critical slope $\Lambda_c(G)$ is the steepest average rate at which the pretending cost $D_G(p_{\min}, \cdot)$ climbs on the way to the reserve, that is its steepest chord (Theorem~\ref{thm:critical-slope}), and below the threshold the welfare loss is of order $(1 - \Lambda/\Lambda_c)^3$ (Proposition~\ref{prop:below-threshold}).

Together the three results show that the cliff serves only to buy slope, and $\Lambda_c(G)$ measures how much slope is enough.

\paragraph{The impossibility and its benchmark.} Theorem~\ref{thm:impossibility} states the title's impossibility as an endogenous one: the perturbation that breaks truthfulness is the principal's own optimum, and the conditions NT1--NT3 say when such a perturbation is binding, undetectable and therefore inescapable. Its classical core is \citet{boutilier2012eliciting}'s (Lemma~\ref{lem:perturbation}). Appendix~\ref{app:benchmark} supplies the benchmark the restricted architecture is measured against: every incentive-compatible direct mechanism in this setting is a proper score minus a charge of exactly $\gamma$ on the approval event (Theorem~\ref{thm:canonical-form}), so the extra instrument buys a truthful message and no welfare when the reserve is feasible (Theorem~\ref{thm:welfare-equivalence}).

\subsection{Related Work}
\label{sec:related}

\paragraph{The classical layer.} \citet{boutilier2012eliciting} introduced the expert whose payoff combines a proper score with the inherent utility of the induced decision; \citet{othman2010decision} showed that decision rules acting on the report destroy properness in decision markets, and \citet{chen2014eliciting} characterized the rules that preserve it, namely the completely mixed ones. Section~\ref{sec:fenchel} carries that layer as Lemma~\ref{lem:perturbation}, where it is attributed. At issue here is the layer above it, in which the principal chooses the decision rule and the rule maximizing its welfare destroys properness.

\paragraph{Oversight of AI agents.} Three adjacent streams address the same delegation problem without this structure. Scalable-oversight protocols train or select models against human or model-generated feedback \citep{christiano2017preferences,irving2018debate}, treating the reported quantity as a training signal rather than a report the agent is separately paid for. Contract-theoretic designs for delegated effort and information acquisition pay the agent on realized outcomes and ask what effort the payment induces \citep{papireddygari2022contracts,oesterheld2020minimum}. Decision-dependent elicitation keeps a forecast informative when the forecast selects the action, through decision scoring rules \citep{oesterheld2021decision}, proxy targets \citep{wang2023proxy}, or zero-sum competition between forecasters \citep{hudson2024joint}. None has the object studied here: one principal who chooses both the score that pays for accuracy and the approval rule that pays for what the report authorizes, and who therefore designs the conflict it must then screen through.

\paragraph{Designing the score rather than the decision.} A parallel literature fixes the decision and optimizes the scoring rule. \citet{li2020optimization}, with a conference abstract in \citet{li2022optimizationabstract}, locate the value-maximizing score for costly information acquisition at the degenerate end of the curvature spectrum, a prior-anchored threshold score; \citet{hartline2023optimal} extend this to multi-dimensional effort, \citet{neyman2021binary} characterize binary scores that reward precision, and \citet{johnstone2011tailored} tailor a score to a decision maker's known payoff. The design variable here is the complement of theirs: the score is fixed, any strictly proper one serves, and the design chooses the approval rule downstream of it. The difference is not cosmetic, since the score's curvature prices pretending; collapsing it is optimal when the score is the only lever, and would remove the reserve on which the oversight design rests.

\paragraph{Cheap talk and verifiable disclosure.} Messages in \citet{crawford1982strategic} are costless and unverifiable, while under verifiable disclosure \citep{grossman1981informational,milgrom1981good,dye1985disclosure} the sender may withhold but cannot lie, leaving nothing for a perturbation to act on. This model sits between the two: the agent can lie but pays for lying, at a price of $\gamma/\beta$ in units of accuracy.

\paragraph{Bayesian persuasion.} In \citet{gentzkow2011bayesian} the sender commits to a signal structure before observing the state, which is the reverse of the timing here; commitment by the reporter restores that timing, under conditions the enforced-commitment pathway of Section~\ref{sec:resolution} states.

\paragraph{Credible mechanism design.} \citet{akbarpour2020credible} show that no auction is simultaneously strategy-proof, credible and revenue-optimal, and subsequent work relaxes each leg in turn \citep{ferreira2020credibility,li2017obviously,dworczak2020mechanism}; \citet{loven2026trilemma} carry the trilemma to polymatroidal feasible regions. The marketplace instance below applies the elicitation frame to that setting rather than strengthening those results; Remark~\ref{rem:al-relationship} states the relationship precisely.

\paragraph{Elicitation.} The characterization of strictly proper scoring rules runs from \citet{savage1971elicitation} to \citet{gneiting2007strictly}, with the convex-analytic account in \citet{schervish1989general}. That literature asks which quantities can be elicited truthfully; the question here starts after a truthful elicitation mechanism is in place and asks what a second payoff channel does to it.

\section{Model}\label{sec:model}

A scoring mechanism already pays a \emph{reporter} for the accuracy of its report, and a \emph{receiver} takes a decision on the basis of that report; the two are the agent and the principal of Section~\ref{sec:intro}, under names general enough to carry both instances. Only the decision is the receiver's instrument, because the score is inherited rather than designed.

\subsection{The Credibility Game}

The combined objective $V = \Sbar + \gamma h$ of~\eqref{eq:combined-objective} below is the primitive of \citet{boutilier2012eliciting}, who asks how a principal can restore properness of the \emph{net} payoff by adding a compensation term to the score. We keep the primitive and invert the instrument set: the receiver chooses the decision rule alone, which is the perturbation $h$. The score here is institutionally fixed, a public benchmark, a supervisory audit standard, or an already-trained reward head, maintained on a slower timescale and typically owned by a different party than the one granting the decision; and no compensation channel exists, because the reporter's stake in the decision is not a contractible transfer that can be netted out of the score.

\begin{definition}[Credibility Game]\label{def:credibility-game}
A \emph{credibility game} is a tuple $\calG = (\Theta, \calR, \Omega, \Sbar, g, h, \calI, \mu, \gamma)$ where:
\begin{itemize}
\item $\Theta \subseteq \R^d$ is a convex, compact type space with non-empty interior. The \emph{reporter} holds a private type $\theta \in \Theta$.
\item $\calR \subseteq \R^d$ is a convex report space with non-empty interior.%
\footnote{The implicit function theorem arguments use non-empty interior. The oversight instance takes $\calR_O = [0,1]$; the perturbation analysis runs on the interior $(0,1)$, and the boundary reports $r \in \{0,1\}$ are $\mu$-null under any continuous type distribution.}
\item $\Omega$ is a measurable outcome space.
\item $\Sbar: \calR \times \Theta \to \R$ is the \emph{expected score function}, derived from a scoring mechanism $S: \calR \times \Omega \to \R$ via $\Sbar(r;\theta) = \E_\theta[S(r,\omega)]$, where $\E_\theta$ denotes expectation under the outcome distribution induced by type $\theta$.
\item $g: \Theta \to \calR$ is the \emph{truthful report function}: for each $\theta \in \Theta$, $g(\theta)$ is the unique maximizer of $\Sbar(\cdot\,;\theta)$.
\item $h: \calR \to \R$ is a continuously differentiable \emph{perturbation payoff} representing the reporter's benefit from its report beyond the scoring mechanism.
\item $\calI = (\calY, \pi)$ is the \emph{information structure}, where $\calY$ is a signal space observed by the receiver and $\pi: \calR \times \Theta \to \Delta(\calY)$ specifies the conditional distribution of signals given the report and type.
\item $\mu \in \Delta(\Theta)$ is a common prior with full support, and $\gamma > 0$ is the weight the reporter places on the perturbation payoff.
\end{itemize}
The reporter's combined objective is
\begin{equation}\label{eq:combined-objective}
V(r;\theta,\gamma) = \Sbar(r;\theta) + \gamma \cdot h(r), \quad \gamma > 0.
\end{equation}
\end{definition}

The timing is: nature draws $\theta \sim \mu$; the reporter observes $\theta$ and chooses $r$ to maximize $V(r;\theta,\gamma)$; the outcome $\omega$ realizes under the distribution indexed by $\theta$; the reporter receives $S(r,\omega) + \gamma h(r)$; and the receiver observes $y \sim \pi(r,\theta)$ and updates beliefs. The prior is also the reference measure: throughout the paper, ``positive measure'' and ``almost every type'' are statements about $\mu$.

\begin{assumption}[Regularity]\label{assum:regularity}
For every $\theta \in \Theta$: (i) $g(\theta) \in \operatorname{int}\calR$; (ii) $\Sbar(\cdot\,;\theta)$ is twice continuously differentiable on $\operatorname{int}\calR$, and $-\nabla^2_r \Sbar(r;\theta)$ is positive definite at every $r$ in the convex hull of $g(\Theta)$; and (iii) either $\calR$ has compact closure, or $\Sbar(r;\theta) + \gamma h(r) \to -\infty$ as $\|r\| \to \infty$ for each $\theta$ and each $\gamma > 0$.
\end{assumption}

Part~(ii) is a restriction and not a consequence of strict properness: a strictly proper score can have a vanishing second derivative at isolated reports, and the perturbation formula of Section~\ref{sec:fenchel} fails there. In the binary-outcome case it amounts to $G \in C^2$ with $G'' > 0$ on the relevant range, which holds for the Brier \citep{brier1950verification} and logarithmic scores. Part~(iii) is an existence condition, and its first alternative covers every instance here.

\paragraph{Scope: outcome dimension and type dimension.}
Definition~\ref{def:credibility-game} is stated for $d$-dimensional types and reports. The oversight instance specializes to the \emph{binary-outcome scalar-type} case, $\Theta = \calR = [0,1]$ and $\Omega = \{0,1\}$; the marketplace instance is $n$-dimensional and uses only the impossibility half. The two restrictions buy different things. A binary outcome makes the score's generator a function of one variable, so its curvature $G''$ is a scalar, which delivers the closed-form reserve and slope results of Section~\ref{sec:behavioral} (Theorems~\ref{thm:reserve} and~\ref{thm:critical-slope}); a one-dimensional type space makes the induced screening a cutoff, and so makes the receiver's problem a threshold problem at all. The impossibility (Theorem~\ref{thm:impossibility}) requires neither restriction and is proved in $\R^d$; Appendix~\ref{sec:multi-dim} records what carries to $d$-dimensional types. The design results are scalar-type results and we claim them only there.

We write $r^*(\theta,\gamma)$ for the reporter's optimal report and call its excess over the truthful one, $r^*(\theta,\gamma) - g(\theta)$, the \emph{inflation} of type $\theta$, the object the title calls miscalibration. It is signed: under a non-decreasing decision rule the reporter never deflates.

\subsection{Properness, the Bregman Cost, and the Accuracy Weight}

\begin{definition}[Strict Properness]\label{def:strict-properness}
The scoring mechanism $\Sbar$ is \emph{strictly proper} if $\Sbar(g(\theta);\theta) > \Sbar(r;\theta)$ for every $\theta \in \Theta$ and every $r \neq g(\theta)$.
\end{definition}

\begin{remark}[Savage representation and the accuracy weight]\label{rem:savage-characterization}
By the classical characterization \citep{savage1971elicitation,mccarthy1956measures,gneiting2007strictly,schervish1989general}, strict properness is equivalent to the existence of a strictly convex \emph{generator} $G$ with
\[
\Sbar(r;\theta) = G(r) + \nabla G(r) \cdot (g(\theta) - r)
\]
up to functions of $\theta$ alone: the expected score is affine in the truth and strictly concave in the report, and perturbations undermine that affine dependence (Section~\ref{sec:fenchel}). The equivalence presumes $\Sbar$ is an expected score, hence affine in the outcome distribution the type induces. A reduced-form score written directly on $(r,\theta)$ can satisfy Definition~\ref{def:strict-properness} without admitting a generator, as the kinked variant discussed in Section~\ref{sec:market} does not. Equivalently, the expected score the reporter forfeits by reporting $r$ rather than the truth is the Bregman divergence of that generator, $D_G(g(\theta), r) := G(g(\theta)) - G(r) - \nabla G(r) \cdot (g(\theta) - r) > 0$ for $r \neq g(\theta)$. Where an instance weights the two channels separately we write the reporter's objective as $V(r;\theta) = -\beta D_G(g(\theta),r) + \gamma h(r)$, with $\beta > 0$ the reporter's \emph{accuracy weight}: the price it pays per unit of Bregman cost under whatever generator its score happens to have. Only the ratio $\gamma/\beta$ matters.
\end{remark}

\paragraph{Terminology: calibration.}
Throughout, \emph{calibration} is used in the elicitation sense: a report is calibrated when it equals the reporter's own conditional probability, which under a strictly proper score is exactly the report that maximizes the expected score. This is not the statistical-forecasting sense, in which a \emph{sequence} of forecasts is calibrated when outcomes occur at the announced long-run frequency. The two agree for a truthful reporter and come apart under strategic misreporting, and nothing below is a statement about the empirical calibration of a forecast sequence. For the same reason $\beta$ is the accuracy weight: it scales the generator's Bregman cost, and reading it as a sensitivity to one particular score would misstate every result that quantifies over $G$.

\subsection{Non-Trivial Structure}

\begin{definition}[Non-Trivial Structure]\label{def:nontrivial}
A credibility game $\calG$ has \emph{non-trivial structure} if:
\begin{itemize}
\item[\textbf{(NT1)}] \emph{Binding conflict.} There exists $\Cbind \subseteq \Theta$ with $\mu(\Cbind) > 0$ such that $h(g(\theta)) < \sup_r h(r)$ for all $\theta \in \Cbind$. The truthful report does not maximize the perturbation payoff.
\item[\textbf{(NT2)}] \emph{Non-affine perturbation with a live gradient.} $h$ is not affine on any open neighborhood of $\{g(\theta) : \theta \in \Cbind\}$ in $\calR$, and $\nabla h(g(\theta)) \neq 0$ for all $\theta$ in a subset of $\Cbind$ of positive measure.
\item[\textbf{(NT3)}] \emph{Undetectability.} For each $\theta \in \Cbind$, the reporter's optimal deviation $r^*(\theta,\gamma)$ is observationally equivalent to a truthful report by some other type: there exists $\theta' \in \Theta$ with $g(\theta') = r^*(\theta,\gamma)$ and $\pi(r^*(\theta,\gamma), \theta) = \pi(g(\theta'), \theta')$ almost everywhere on $\calY$, so the signal does not separate the two.
\end{itemize}
\end{definition}

Since $r^*$ depends on $\gamma$, so does NT3, which is why $\gamma$ is part of the game.

\begin{remark}[The reach and cost of NT2]\label{rem:nt2-generic}
The gradient clause of NT2 is a hypothesis, not a corollary of non-affinity, and it is the clause the impossibility consumes. The affine functions form a closed, nowhere-dense subset of $C^1(\calR)$ in the $C^1$ topology, so non-affinity of $h$ is generic, but genericity is not the whole of NT2: non-affinity is a property of $h$ alone, whereas NT2 is a joint condition on $h$ and on where the truthful reports $g(\Cbind)$ sit inside $\calR$, and a perturbation that is non-affine somewhere can still be stationary exactly on $g(\Cbind)$. The gradient clause rules that out, and it is not free: a smooth $h$ with an interior maximum at $g(\theta_0)$ satisfies non-affinity and fails the gradient clause at $\theta_0$. Such types are not immune, only slower, since Lemma~\ref{lem:perturbation}(iii) shows that each of them deviates once $\gamma$ exceeds a finite type-dependent threshold. The impossibility is therefore unconditional in $\gamma$ on the set where the gradient is live, and holds for large $\gamma$ elsewhere.
\end{remark}

\paragraph{Affine perturbations redefine truth.}
An affine perturbation $h(r) = a + b^\top r$ does not make truthful reporting suboptimal; it \emph{redefines} it. The perturbed mechanism $\Sbar(r;\theta) + \gamma(a + b^\top r)$ is still strictly proper, with the shifted truthful report $g_\gamma(\theta) = g(\theta) + \gamma[-\nabla^2_r \Sbar]^{-1} b$ common to every type. Only non-affine perturbations produce the type-dependent deviations a receiver cannot undo without knowing the type.

\paragraph{Reading NT3 across domains.}
NT3 is one formal condition, signal indistinguishability, but the friction that generates it differs by domain. Sealed-bid privacy in a marketplace and mandatory-rotation gaps in auditing are \emph{design-contingent}: an ascending format, or a shorter rotation cycle, removes them, so NT3 there flags an opportunity for redesign rather than a fundamental limit. Adverse selection in AI oversight is \emph{structural}: the agent's type reflects its internal state, and no format change makes it observable. Enriching the exogenous $\calI$ shrinks the NT3 set monotonically in the Blackwell order \citep{blackwell1951comparison}, and a fully informative signal dissolves the impossibility.

\section{Perturbation of Truthfulness}\label{sec:fenchel}

A report-dependent payoff that is not affine destroys the properness of a fixed score; the receiver's decision rule is such a payoff; therefore a receiver who screens on a scored report cannot also expect that report to be truthful. Lemma~\ref{lem:perturbation} is that statement and Theorem~\ref{thm:impossibility} is the version of it that this paper uses. Section~\ref{sec:behavioral} then makes the perturbation endogenous: the receiver's own welfare-optimal rule supplies it.

In proper scoring rules and in dominant-strategy payment rules alike, the agent's utility is affine in the type around a strictly convex potential, and truthfulness is pinned by that potential's first-order condition \citep{schervish1989general,abernethy2012characterization,vohra2011mechanism}. We organize the argument around that shared structure rather than around any one of its instances.

\begin{lemma}[Perturbation Lemma]\label{lem:perturbation}
Let $\Sbar$ be strictly proper with truthful report $g$ and let $h \in C^1(\calR)$, under Assumption~\ref{assum:regularity}.
\begin{enumerate}
\item[(i)] \emph{(Necessity of a flat perturbation.)} If $g(\theta)$ is a global maximizer of $V(\cdot\,;\theta,\gamma)$ for every $\theta \in \Theta$ and every $\gamma > 0$, then $h$ is constant on $g(\Theta)$.%
\footnote{The converse is false, and we do not use it: $h$ constant on $g(\Theta)$ still permits a profitable deviation to a report outside $g(\Theta)$ at which $h$ is larger. Truthfulness is restored by $h$ constant on all of $\calR$, which is the domain-separation pathway of Section~\ref{sec:resolution}.}

\item[(ii)] \emph{(Generic destruction.)} If $\nabla h(g(\theta)) \neq 0$ for $\theta$ in a positive-measure subset of $\Cbind$, then for every $\gamma > 0$ and every such $\theta$ the truthful report fails to maximize $V(\cdot\,;\theta,\gamma)$; in particular truthfulness fails on a positive-measure subset of $\Cbind$.

\item[(iii)] \emph{(Residual types.)} If $\nabla h(g(\theta_0)) = 0$ for some $\theta_0 \in \Cbind$ with $h(g(\theta_0)) < \sup_r h(r)$, and $h \in C^2$ near $g(\theta_0)$, then there is a type-dependent threshold $\bar\gamma(\theta_0) > 0$ such that $g(\theta_0)$ fails to maximize $V(\cdot\,;\theta_0,\gamma)$ for every $\gamma > \bar\gamma(\theta_0)$.
\end{enumerate}
Parts~(i) and~(ii) are the fixed-score, no-compensation specialization of \citet{boutilier2012eliciting}'s characterization of elicitation under decision-dependent payoffs, and they cover the ground of \citet{chen2014eliciting}'s result that strictly proper decision making requires completely mixed decision rules; see also \citet{othman2010decision}. We claim no novelty for them. Part~(iii), which locates the residual zero-gradient types and gives them a finite threshold, is a refinement of that classical statement and is proved in Appendix~\ref{app:perturbation}.
\end{lemma}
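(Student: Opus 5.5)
The plan is to run all three parts through the first-order and second-order conditions of $V(\cdot\,;\theta,\gamma)=\Sbar(\cdot\,;\theta)+\gamma h$ at the truthful report. Throughout, $g(\theta)\in\operatorname{int}\calR$ by Assumption~\ref{assum:regularity}(i), and strict properness gives $\nabla_r\Sbar(g(\theta);\theta)=0$.

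\emph{Part (ii).} This is the quickest, so I would do it first. Suppose $g(\theta)$ is an interior maximizer of $V$. Then $\nabla_r V(g(\theta);\theta,\gamma)=0$, and since the score gradient vanishes there, this reduces to $\gamma\nabla h(g(\theta))=0$. With $\gamma>0$ that forces $\nabla h(g(\theta))=0$, which contradicts the hypothesis. The conclusion therefore holds for every $\gamma>0$ and every type in the given positive-measure subset of $\Cbind$, and that set is exactly where truthfulness fails.

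\emph{Part (i).} If $g(\theta)$ maximizes $V$ for every $\theta$ and $\gamma$, the same first-order argument gives $\nabla h=0$ on $g(\Theta)$. That alone does not make $h$ constant on $g(\Theta)$, since $g(\Theta)$ need not be open or convex. So I would compare types directly. For any $\theta,\theta'$, optimality of $g(\theta)$ against the deviation $g(\theta')$ gives
\[
\Sbar(g(\theta);\theta)-\Sbar(g(\theta');\theta)\ \ge\ \gamma\bigl[h(g(\theta'))-h(g(\theta))\bigr].
\]
The left side is a finite number that does not depend on $\gamma$. Letting $\gamma\to\infty$ then yields $h(g(\theta'))\le h(g(\theta))$. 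Swapping the roles of $\theta$ and $\theta'$ gives the reverse inequality, so $h$ is constant on $g(\Theta)$. This route avoids any connectedness issue and does not even need differentiability.

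\emph{Part (iii).} I expect this to be the main obstacle, because the first-order condition is silent: $\nabla h(g(\theta_0))=0$, so truth is a critical point of $V$ for every $\gamma$. I would not try a local second-order argument, since $h$ could have a local maximum at $g(\theta_0)$ and then truth is a local maximizer for all $\gamma$. I would instead use the global gap $h(g(\theta_0))<\sup_r h$. Choose $r_1$ with
\[
\delta:=h(r_1)-h(g(\theta_0))>0 .
\]
This requires $\Sbar(r_1;\theta_0)$ to be finite. On compact closure that holds if $r_1$ is taken in the interior, which is possible by continuity of $h$. The deviation is profitable precisely when
\[
\gamma\delta>\Sbar(g(\theta_0);\theta_0)-\Sbar(r_1;\theta_0)=:\Delta\ge 0 .
\]
Set $\bar\gamma(\theta_0)=\Delta/\delta$. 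When $\Delta>0$ this is positive, and for every $\gamma>\bar\gamma(\theta_0)$ the report $r_1$ strictly beats truth. The $C^2$ hypothesis is not needed for existence of the threshold. I would mention it only as what permits a sharper local bound when $\nabla^2h(g(\theta_0))$ has a direction of positive curvature exceeding $-\nabla^2_r\Sbar/\gamma$; that bound comes from a Taylor expansion and gives a local deviation for $\gamma$ above $\lambda_{\max}\bigl(-\nabla^2_r\Sbar\bigr)/\lambda_{\max}^+(\nabla^2 h)$.
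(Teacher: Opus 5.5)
Your proof is correct. Parts~(i) and~(ii) match the paper's argument. For (ii), the paper also observes that $\nabla_r V(g(\theta);\theta,\gamma)=\gamma\nabla h(g(\theta))\neq 0$ at an interior point. For (i), your pairwise comparison with $\gamma\to\infty$ makes precise the paper's remark that part~(i) is ``the contrapositive''.

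For part~(iii) your route is leaner than the paper's. The paper splits on the Hessian of $h$ at $g_0$:
\begin{itemize}
\item If $\nabla^2 h(g_0)$ has a positive eigenvalue $\lambda_+$ with unit eigenvector $v$, it expands $V$ through $M_\gamma = H_S+\gamma\nabla^2 h(g_0)$. This gives a local threshold $|v^\top H_S v|/\lambda_+$, beyond which $g_0$ is not even a local maximizer.
\item If $\nabla^2 h(g_0)\preceq 0$, it uses exactly your global comparison against an interior $r_1$, with threshold $\Delta S/\Delta h$.
\end{itemize}
It then takes the minimum of the two thresholds.

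The global comparison never uses the sign of $\nabla^2 h(g_0)$, so your observation is right. It covers every residual type on its own, and the $C^2$ hypothesis in the statement is not needed for a finite threshold to exist. The paper's split buys only two extras in the positive-curvature case: a possibly smaller threshold, and the stronger conclusion that truth stops being even a local maximizer. Your closing remark recovers that case with the cruder but valid constant $\lambda_{\max}(-\nabla^2_r\Sbar)/\lambda^+_{\max}(\nabla^2 h)$.

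One small tightening: you need not hedge on $\Delta>0$. Since $h(r_1)>h(g(\theta_0))$ forces $r_1\neq g(\theta_0)$, strict properness gives $\Delta>0$ outright.
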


\begin{proof}[Proof sketch]
By the first-order condition of the score, the gradient of $V$ at the truthful report is
\begin{equation}\label{eq:foc-perturbed}
\nabla_r V(g(\theta);\theta,\gamma) = \gamma \nabla h(g(\theta)),
\end{equation}
which is nonzero under the hypothesis of~(ii), so $g(\theta)$ is not a critical point of $V$. Part~(i) is the contrapositive: a type whose truthful report earns less $h$ than another type's deviates to the latter once $\gamma$ outweighs the fixed score loss. Part~(iii) is a second-order argument at a zero-gradient type, proved in Appendix~\ref{app:perturbation}.
\end{proof}

\subsection{The Credibility Impossibility}

\begin{theorem}[Credibility Impossibility]\label{thm:impossibility}
Let $\calG$ be a credibility game satisfying Assumption~\ref{assum:regularity} and conditions NT1 and NT2 of Definition~\ref{def:nontrivial}. Then no strategy $\sigma: \Theta \to \calR$ simultaneously achieves:
\begin{itemize}
\item[\textbf{(T)}] Truthfulness: $\sigma(\theta) = g(\theta)$ for almost every $\theta$.
\item[\textbf{(R)}] Rationality: $\sigma(\theta) \in \arg\max_{r \in \calR} V(r;\theta,\gamma)$ for some $\gamma > 0$.
\end{itemize}
Assume in addition $h \in C^2$ and $\Sbar(\cdot\,;\theta) \in C^3$ near $g(\Theta)$, neither of which the conjunction above needs. Then, as $\gamma \to 0$, the rational reporter's optimal report satisfies the \emph{perturbation formula}
\begin{equation}\label{eq:perturbation-formula}
r^*(\theta,\gamma) = g(\theta) + \gamma \cdot
\bigl[-\nabla^2_r \Sbar(g(\theta);\theta)\bigr]^{-1} \nabla h(g(\theta))
+ O(\gamma^2),
\end{equation}
where $-\nabla^2_r \Sbar(g(\theta);\theta)$ is positive definite by Assumption~\ref{assum:regularity}(ii), and the expected score forfeited by that deviation is
\begin{equation}\label{eq:scoring-loss}
\Sbar(g(\theta);\theta) - \Sbar(r^*(\theta,\gamma);\theta) =
\frac{\gamma^2}{2}\, \nabla h(g(\theta))^\top
\bigl[-\nabla^2_r \Sbar(g(\theta);\theta)\bigr]^{-1} \nabla h(g(\theta))
+ O(\gamma^3),
\end{equation}
quadratic in $\gamma$ and scaling with the inverse curvature of the score. The first display is the implicit-function comparative static of the maximizer, not an envelope term; the second is the second-order term of the value function $\gamma \mapsto \max_r V(r;\theta,\gamma)$, whose first-order term is the envelope term $h(g(\theta))$ \citep{milgrom2002envelope}.
\end{theorem}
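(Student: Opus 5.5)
The proof has two independent parts: the impossibility of (T) and (R) together, and the asymptotic formulas, which I would obtain by a second-order expansion about the truthful report. Throughout I fix the $\gamma>0$ supplied by (R) and treat it as a given constant.

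\emph{Impossibility.} Suppose $\sigma$ satisfies (T) and (R). By NT2 there is a set $A\subseteq\Cbind$ with $\mu(A)>0$ on which $\nabla h(g(\theta))\neq 0$. By (T), $\sigma(\theta)=g(\theta)$ off a $\mu$-null set $N$. Since $\mu(A)>0$, the set $A\setminus N$ is nonempty, so we can pick a type $\theta$ in it. For that type, (R) says $g(\theta)$ maximizes $V(\cdot\,;\theta,\gamma)$ over $\calR$. By Assumption~\ref{assum:regularity}(i), $g(\theta)$ is interior, so first-order optimality requires $\nabla_r V(g(\theta);\theta,\gamma)=0$. But by \eqref{eq:foc-perturbed}, $\nabla_r V(g(\theta);\theta,\gamma)=\gamma\nabla h(g(\theta))\neq 0$, using $\nabla_r\Sbar(g(\theta);\theta)=0$ from strict properness. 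This contradiction is exactly Lemma~\ref{lem:perturbation}(ii), and I would cite it rather than repeat it. NT1 is needed only so that $\Cbind$ is well defined with positive mass; NT3 is not used.

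\emph{Perturbation formula.} Define $F(r,\gamma)=\nabla_r\Sbar(r;\theta)+\gamma\nabla h(r)$, which is $C^1$ near $(g(\theta),0)$ under the added $C^2$/$C^3$ hypotheses. We have $F(g(\theta),0)=0$, and $\partial_r F=\nabla_r^2\Sbar(g(\theta);\theta)$ is invertible by Assumption~\ref{assum:regularity}(ii). The implicit function theorem then gives a $C^1$ (in fact $C^2$) branch $\tilde r(\gamma)$ with $\tilde r(0)=g(\theta)$ and
\[
\tilde r'(0)=-[\nabla^2_r\Sbar]^{-1}\nabla h=[-\nabla^2_r\Sbar]^{-1}\nabla h .
\]
Since $F$ is $C^2$ when $\Sbar\in C^3$ and $h\in C^2$, Taylor's theorem yields the $O(\gamma^2)$ remainder.

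\emph{Main obstacle: $\tilde r(\gamma)$ is the global maximizer $r^*(\theta,\gamma)$.} The implicit function theorem only produces a local critical point. I would argue in three steps.
\begin{enumerate}
\item Existence: by Assumption~\ref{assum:regularity}(iii), a maximizer $r^*(\theta,\gamma)$ exists.
\item Convergence: $V(\cdot\,;\theta,\gamma)\to\Sbar(\cdot\,;\theta)$ uniformly on compacts, or $h$ is bounded on the compact closure, so a Berge-type argument shows every maximizer converges to the unique maximizer $g(\theta)$ as $\gamma\to0$. Strict properness plus compactness gives a uniform gap outside any neighborhood of $g(\theta)$, which makes this work.
\item Uniqueness: for small $\gamma$ the maximizers lie in a neighborhood of $g(\theta)$ on which $-\nabla_r^2 V$ is positive definite. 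There $V$ is strictly concave and has a unique critical point, which must be $\tilde r(\gamma)$.
\end{enumerate}

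\emph{Scoring loss.} Expand $\Sbar(\cdot\,;\theta)$ about $g(\theta)$. The gradient term vanishes at $g(\theta)$, so
\[
\Sbar(g)-\Sbar(r^*)=\tfrac12(r^*-g)^\top[-\nabla^2_r\Sbar](r^*-g)+O(\|r^*-g\|^3).
\]
Substituting $r^*-g=\gamma[-\nabla^2_r\Sbar]^{-1}\nabla h+O(\gamma^2)$ gives $\frac{\gamma^2}{2}\nabla h^\top[-\nabla^2_r\Sbar]^{-1}\nabla h+O(\gamma^3)$. The cross term between the leading piece and the $O(\gamma^2)$ remainder is $O(\gamma^3)$, and the cubic Taylor term is $O(\gamma^3)$ because $\Sbar\in C^3$.

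Finally, the envelope remark follows directly. Write $W(\gamma)=\max_r V(r;\theta,\gamma)=V(r^*(\theta,\gamma);\theta,\gamma)$. Differentiating and using $\nabla_r V(r^*;\theta,\gamma)=0$ at the interior maximizer, $W'(\gamma)=h(r^*(\theta,\gamma))$, so $W'(0)=h(g(\theta))$. The score loss computed above is the second-order correction to $W$.
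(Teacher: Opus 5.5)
Your proposal is correct and follows the paper's proof. The impossibility is Lemma~\ref{lem:perturbation}(ii) read through \eqref{eq:foc-perturbed}, the formula is the implicit function theorem applied to the interior first-order condition $\nabla_r\Sbar(r;\theta)+\gamma\nabla h(r)=0$, and the score loss comes from substituting into the second-order Taylor expansion of $\Sbar$ at $g(\theta)$. Your existence--convergence--uniqueness argument that the branch is the global maximizer spells out what the paper asserts in one line from Assumption~\ref{assum:regularity}(iii). It needs two small repairs: in the unbounded branch of that assumption the maximizers must also stay in a bounded set uniformly as $\gamma\to0$ (write $\Sbar+\gamma h$ as a convex combination of $\Sbar\le\Sbar(g(\theta);\theta)$ and $\Sbar+\gamma_1 h$ for a fixed $\gamma_1>\gamma$); and with $h\in C^2$ your map $F$ is only $C^1$, not $C^2$, although the $O(\gamma^2)$ remainder survives because $\tilde r'(\gamma)-\tilde r'(0)=O(\gamma)$.
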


\begin{proof}
The impossibility of (T) $\wedge$ (R) on $\Cbind$ is Lemma~\ref{lem:perturbation}(ii): NT2 supplies a positive-measure set of types with $\nabla h(g(\theta)) \neq 0$, at which the truthful report is not a critical point of $V$ by~\eqref{eq:foc-perturbed}, hence not a best response. For the formula, the interior first-order condition is $\nabla_r \Sbar(r;\theta) + \gamma \nabla h(r) = 0$, solved at $\gamma = 0$ by $r = g(\theta)$, which is interior by Assumption~\ref{assum:regularity}(i). The Jacobian $\nabla^2_r \Sbar(g(\theta);\theta)$ is negative definite and hence invertible, so the implicit function theorem gives a smooth branch $r^*(\theta,\cdot)$ through $(g(\theta),0)$; differentiating at $\gamma = 0$ yields~\eqref{eq:perturbation-formula}. The branch is the global maximizer for $\gamma$ below a type-dependent threshold, by Assumption~\ref{assum:regularity}(iii); the expansion is asymptotic in $\gamma$ and is not claimed uniformly in $\theta$. Substituting $r^* - g(\theta)$ from~\eqref{eq:perturbation-formula} into the second-order Taylor expansion of $\Sbar$ at $g(\theta)$ gives~\eqref{eq:scoring-loss}.
\end{proof}

\paragraph{Equilibrium concept and the role of NT3.}
Theorem~\ref{thm:impossibility} is an \emph{ex post} statement, hence stronger than the corresponding Bayesian Nash claim and independent of the prior, and NT3 plays no part in it. NT3 carries it into an equilibrium prediction: without NT3 the deviation is visible and a receiver who sees it can punish it, and with NT3 it is statistically indistinguishable from truthful reporting by some other type within the monitoring horizon. Two consequences follow. Existence of a deviation is enough, so the conclusion holds in \emph{every} equilibrium in which the reporter best-responds to the combined objective; and the formal results of Sections~\ref{sec:fenchel}--\ref{sec:market} do not use the information structure $\calI$ at all, since $\calI$ enters only through NT3's reading and Section~\ref{sec:behavioral} makes the horizon quantitative. We reserve the unqualified phrase ``the impossibility'' for the full NT1--NT3 statement.

\subsection{Resolution Mechanisms}\label{sec:resolution}

Four interventions bear on the impossibility and only one of them resolves it: domain separation deletes a hypothesis of Theorem~\ref{thm:impossibility} and restores truthfulness outright, enforced commitment and competition change what the receiver can enforce or learn without altering the reporter's best response in the game as specified, and bounding the gain prices the deviation instead. These are intervention points rather than results of this paper, and we claim none of them.

\paragraph{Enforced commitment (adds an enforcement technology outside NT1--NT3).} Suppose the receiver fixes the decision rule and the commitment protocol before the reporter observes its type, and an enforcement technology binds the reporter to a mandated truthful strategy $\hat{r} = g$, any deviation from it being blocked or voided by a verifier. Enforcement is an institutional endowment and not a consequence of the payoff~\eqref{eq:combined-objective}: it is assumed, not derived from detection, and it is not an equilibrium outcome of the game as specified. Commitment by the reporter itself would not help: $\E_\mu[V(\hat{r}(\theta);\theta,\gamma)]$ is separable in $\theta$, so it is maximized pointwise at $\hat{r}(\theta) = r^*(\theta,\gamma)$, the deviation the theorem produces. Committing the reporter also reverses the informational timing of the main game, in which the receiver moves first, and turns it into a \citet{gentzkow2011bayesian} persuasion problem with the reporter as sender and the approval decision as the receiver's action, when the receiver's payoff depends on $\theta$ only through the posterior mean and reports and signals share a space, as in the oversight instance.

\paragraph{Domain separation (eliminates NT1, and NT2 with it).} If the perturbation payoff is constant on the whole report space, $h \equiv c$, the binding conflict vanishes and truthful reporting is restored by strict properness alone. This is exact, and it is also unavailable to a receiver who wants to screen, since a constant $h$ is a decision rule that ignores the report.

\paragraph{Competition (improves detection; weakens no hypothesis of the impossibility).} If $n \geq 2$ reporters with correlated information report simultaneously, the receiver sees a cross-section whose \emph{location} is uninformative, since a common inflation shifts all $n$ reports together and mimics truthful reporting of a shifted state, but whose \emph{shape} is not: the deviating types pool on the reserve $r_0$ (Theorem~\ref{thm:reserve}), so the cross-section carries an atom no truthful reporting distribution with a density can produce, and a goodness-of-fit test at a fixed level rejects with probability tending to $1$ as $n \to \infty$. Three hypotheses the credibility game does not supply carry that claim: types conditionally independent given the state with a common continuous truthful law; a receiver that tests goodness of fit against that law rather than a coarser statistic; and a channel from detection to payoff, which Definition~\ref{def:credibility-game} does not provide at all, so detection moves the receiver's posterior and nothing else. Competition need not help either: it can enable ratings shopping \citep{skreta2009ratings,becker2011credit}, and the multi-sender benchmark is \citet{gentzkow2017competition}. The competition \citet{loven2026trilemma} find orthogonal to execution credibility is between the institutions that execute, not the cross-comparison used here.

\paragraph{Bounding the gain through the scoring family (orthogonal to NT1--NT3).} The impossibility holds for every strictly proper score satisfying Assumption~\ref{assum:regularity}, so no choice inside that class escapes it and NT1--NT3 are untouched; a strictly proper score kinked at the truthful report falls outside that class and can escape it (Section~\ref{sec:market}). The choice can nonetheless bound the \emph{magnitude} of the equilibrium deviation, which enters the welfare loss through~\eqref{eq:scoring-loss}; \citet{loven2026honestreporting} develops gain-magnitude bounds of this kind for the pseudospherical family. This pathway prices the endogeneity; it does not remove it.

\section{The Oversight Design Problem}\label{sec:behavioral}

Section~\ref{sec:fenchel} took the perturbation as given and measured what it costs. This section asks what a principal does who knows about the perturbation and designs against it. Four modeling choices shape every result below, and each is a choice rather than a normalization.

\emph{The instrument is an approval probability, not a transfer.} In a deployment gate on an autonomous agent, or a regulator's authorization of a self-reported risk assessment, the score is fixed upstream by a training pipeline or a disclosure standard, and the one lever left is whether the reported action is executed autonomously.

\emph{The outcome is binary and the type is scalar}, as Section~\ref{sec:model} restricts them, which allows the optimal rule to be characterized exactly rather than bounded (Appendix~\ref{sec:multi-dim} treats $d$-dimensional types).

\emph{The agent maximizes globally.} Given $q$, the agent compares all reports and takes a global maximizer, not an interior stationary point. The two conventions select different behavior in the slope-constrained regime of Proposition~\ref{prop:below-threshold}, where the agent's optimum does become interior, and every proof below is a global comparison rather than a first-order condition.

\emph{Timing is Stackelberg.} The principal commits to $q$ before the agent observes its type; Remark~\ref{rem:timing-robustness} records what changes without it.

\subsection{The Oversight Game}

\begin{definition}[Oversight Game]\label{def:oversight-game}
An \emph{oversight game} is a tuple
\[
\calG_O = (\Theta_O, \calR_O, \Omega_O, G, q, F, u_s, u_f, u_d, \beta, \gamma)
\]
with type space $\Theta_O = [0,1]$, the type $p$ being the true success probability of the agent's preferred action $a^*(c)$ in context $c$; report space $\calR_O = [0,1]$, the agent reporting confidence $r$; binary outcome space $\Omega_O = \{0,1\}$; a strictly convex generator $G$ for the strictly proper score, whose Bregman divergence $D_G(p,r) \coloneqq G(p) - G(r) - G'(r)(p-r)$ is the agent's expected calibration loss from reporting $r$ when the truth is $p$ (Remark~\ref{rem:savage-characterization}), the Brier score being the case $G(p) = p^2$ with $D_G(p,r) = (r-p)^2$; the principal's \emph{approval rule} $q : [0,1] \to [0,1]$, which grants autonomous execution with probability $q(r)$; a type distribution $F$ with continuous density $f > 0$ on $[0,1]$; the principal's payoffs $u_s > u_d > u_f$ from success, delegation and failure; the agent's accuracy weight $\beta > 0$, scaling the generator's Bregman cost; and its autonomy weight $\gamma > 0$, the value of receiving approval. The \emph{first-best threshold} is $p_{\min} = (u_d - u_f)/(u_s - u_f)$, the success probability at which the principal is indifferent between approving and delegating.
\end{definition}

The agent observes context $c$ and type $p$, reports confidence $r$, and the principal approves autonomous execution with probability $q(r)$. The agent's combined objective is
\begin{equation}\label{eq:behavioral-objective}
V(r;p) = -\beta\, D_G(p,r) + \gamma\, q(r),
\end{equation}
which specializes to $-\beta(r-p)^2 + \gamma q(r)$ under the Brier score. The agent plays a global best response $r^*(p) \in \arg\max_{r \in [0,1]} V(r;p)$, and the \emph{induced screening function} $\tau(p) \coloneqq q(r^*(p))$ is the approval probability that type $p$ actually receives. Two conventions are in force throughout. Every rule considered is one for which that argmax is non-empty at every type, and ties inside it are broken in the principal's favour, so that $\tau(p) = \max\{q(r) : r \in \arg\max_{r \in [0,1]} V(r;p)\}$ and welfare below is welfare under that selection. The first-best results hold under any selection, because the best response they induce is unique off an $F$-null set; the selection is used only where welfare is compared across rules, in Theorem~\ref{thm:critical-slope}(iv) and Proposition~\ref{prop:below-threshold}.

The principal's expected utility under $q$ is
\begin{equation}\label{eq:principal-utility}
W(q) = \int_0^1 \bigl[\tau(p)\, W_A(p) + (1-\tau(p))\, u_d\bigr] f(p)\, dp = u_d + \int_0^1 \tau(p)\, \Pi(p)\, f(p)\, dp,
\end{equation}
where $W_A(p) = p u_s + (1-p) u_f$ is the payoff from approving type $p$ and $\Pi(p) = W_A(p) - u_d = A(p - p_{\min})$, with $A \coloneqq u_s - u_f > 0$, is the net gain from approving it. Since $\Pi$ changes sign exactly at $p_{\min}$, the pointwise optimum is $\tau(p) = \mathbf 1\{p \ge p_{\min}\}$ and the first best is
\begin{equation}\label{eq:first-best}
W^* = u_d + \int_{p_{\min}}^1 \Pi(p) f(p)\, dp .
\end{equation}
Subtracting~\eqref{eq:principal-utility} from~\eqref{eq:first-best} gives the identity that every welfare statement in this section runs off:
\begin{equation}\label{eq:welfare-gap-identity}
W^* - W(q) \;=\; A\int_0^1 |p - p_{\min}|\, e(p)\, f(p)\, dp, \qquad e(p) \coloneqq \begin{cases} \tau(p), & p < p_{\min} \quad (\text{false approval}),\\ 1 - \tau(p), & p > p_{\min} \quad (\text{false rejection}).\end{cases}
\end{equation}
The welfare gap is a weighted $L^1$ norm of the screening error, with a weight that vanishes linearly at $p_{\min}$. A rule attains \emph{first-best screening} if and only if $e = 0$ $F$-almost everywhere.

\paragraph{Standing assumptions and notation.}
Each result names the weakest of the following that it needs. \textbf{(A1)}: $G$ is continuous and strictly convex on $[0,1]$ and continuously differentiable on the \emph{open} interval $(0,1)$, and $D_G$ is extended to endpoint reports by $D_G(p,0) \coloneqq \lim_{r \downarrow 0} D_G(p,r)$ and $D_G(p,1) \coloneqq \lim_{r \uparrow 1} D_G(p,r)$, so that $D_G : [0,1]^2 \to [0,+\infty]$ is finite and jointly continuous on $[0,1] \times (0,1)$ and $r \mapsto D_G(p,r)$ is continuous on all of $[0,1]$ in the extended sense. Then $D_G(p,r) > 0$ for $p \ne r$, the map $r \mapsto D_G(p,r)$ is strictly increasing on $[p,1]$ and strictly decreasing on $[0,p]$, and for every interior report $r \in (0,1)$ the map $p \mapsto D_G(p,r)$ is strictly decreasing on $[0,r]$ and strictly increasing on $[r,1]$, all comparisons running in $[0,+\infty]$. The open interval admits the log score, whose $G'$ diverges at both endpoints and for which $D_G(p_{\min},1) = +\infty$ (Example~\ref{ex:log-score}); the closed-interval form would exclude it. \textbf{(A2)}: additionally $G \in C^2((0,1))$ with $G'' > 0$ there, needed wherever a slope or a curvature appears. \textbf{(A3)}: additionally $G \in C^3((0,1))$, needed only for the asymptotic constants of Proposition~\ref{prop:below-threshold}.

Write $f_{\min} \coloneqq \operatorname{ess\,inf} f > 0$ and $f_{\max} \coloneqq \operatorname{ess\,sup} f < \infty$, used only where a welfare bound is stated. Write $C(r) \coloneqq D_G(p_{\min},r)$ for the calibration cost borne by the marginal type in reporting $r$, so that $C(p_{\min}) = 0$, $C$ is continuous and strictly increasing on $[p_{\min},1]$ as a map into $[0,+\infty]$, and $C'(r) = G''(r)(r - p_{\min})$ under (A2). \emph{Weak feasibility} is $\gamma/\beta \le C(1) = D_G(p_{\min},1)$ and \emph{strict feasibility} is $\gamma/\beta < C(1)$; the distinction is substantive, and each result below states which it uses. The \emph{reserve} $r_0 = r_0(G)$ is the unique solution of $C(r_0) = \gamma/\beta$ in $(p_{\min},1]$, interior exactly under strict feasibility, and $m \coloneqq r_0 - p_{\min} > 0$ is the induced miscalibration. The \emph{Bregman ramp} is $\mathrm{ramp}(r) \coloneqq 0$ for $r \le p_{\min}$ and $\min\{1, (\beta/\gamma)C(r)\}$ for $r \ge p_{\min}$, with uncapped slope $s(r) \coloneqq (\beta/\gamma) G''(r)(r - p_{\min})$. Finally $\Lambda^*(G) \coloneqq s(r_0)$ is the \emph{tangency slope},
\begin{equation}\label{eq:critical-slope-def}
\Lambda_c(G) \;\coloneqq\; \sup_{a \in [p_{\min},\,r_0)} \frac{1 - \mathrm{ramp}(a)}{r_0 - a} \;=\; \frac{\beta}{\gamma}\, \sup_{a \in [p_{\min},r_0)} \frac{C(r_0) - C(a)}{r_0 - a}
\end{equation}
is the \emph{critical slope}, and $\mathcal Q_\Lambda \coloneqq \{q : [0,1] \to [0,1] \text{ non-decreasing},\ \mathrm{Lip}(q) \le \Lambda\}$, $\mathcal L_\Lambda \coloneqq \{q : [0,1] \to [0,1],\ \mathrm{Lip}(q) \le \Lambda\}$ are the slope-capped rule classes, with $\delta_\Lambda(G) \coloneqq W^* - \sup_{q \in \mathcal Q_\Lambda} W(q)$ and $\delta^{\mathcal L}_\Lambda(G)$ defined likewise. Monotonicity is not a substantive restriction here: every necessity argument below runs over $\mathcal L_\Lambda$ and every construction is monotone, so the same two-sided bounds hold for both classes.

\paragraph{NT conditions.}
Write $r_{\min}$ for the approval threshold, the minimum report at which $q(r) > 0$; its optimal value $r_0$ is derived in Theorem~\ref{thm:reserve}. NT1 holds because there is a set $\Cbind$ of types of positive measure with $p < r_{\min}$, as in Definition~\ref{def:nontrivial}, on which $h(g(p)) = q(p) = 0 < 1 = \sup_r q(r)$; NT2 is checked rule-specifically rather than inherited from Theorem~\ref{thm:optimal-nonaffine}, which delivers global non-affinity of $q$ but not the gradient clause, as Remark~\ref{rem:nt2-generic} says: for the Bregman ramp $q_G$ of Theorem~\ref{thm:kink} with $\Cbind = (p_{\min},r_0)$, $q_G$ is affine on no open neighborhood of $\Cbind$, being $0$ below $p_{\min}$ and $1$ above $r_0$ with $C$ strictly increasing between, and $q_G' = s > 0$ throughout $\Cbind$ under (A2), which is the gradient clause in the form Lemma~\ref{lem:perturbation} uses; and NT3 holds because $p$ is private, so an inflated report $r' > p$ is consistent with a more confident agent of type $p' = r'$.

\begin{remark}[Detection horizon]\label{rem:detection-horizon}
Detecting an inflation of magnitude $\Delta = r^*(p) - p$ from realized outcomes takes $K = \Theta(1/\Delta^2)$ observations, the principal seeing $K$ independent $\omega_k \sim \mathrm{Bern}(p)$ and testing $H_0: p = r^*$ against $H_1: p = r^* - \Delta$. That many suffice: with the test statistic $\bar\omega$, Hoeffding's inequality \citep{hoeffding1963probability} gives $\Prob(|\bar\omega - p| \ge \Delta/2) \le 2\exp(-K\Delta^2/2)$, so both error probabilities fall below $\alpha$ once $K \ge (2/\Delta^2)\ln(2/\alpha)$. That many are also necessary, for every test and not only for $\bar\omega$: \citet{loven2026behavioral} establish the matching $\Omega(1/\Delta^2)$ minimax lower bound by a two-point argument, together with its boundary regime, so the horizon is $\Theta(1/\Delta^2)$. This is the quantitative form of NT3 for the oversight instance: the smaller the induced distortion, the longer the monitoring horizon that would be needed to see it.
\end{remark}

\subsection{Optimal Oversight Is Non-Affine}

The principal designs $q$ anticipating the agent's response, and the design that screens correctly is the design that destroys properness.

\begin{theorem}[Optimal Oversight Non-Affinity]\label{thm:optimal-nonaffine}
Assume (A1) and weak feasibility $\gamma/\beta \le D_G(p_{\min},1)$. Let the principal choose an approval rule $q : [0,1] \to [0,1]$ to maximize expected utility under Stackelberg timing, given that the agent best-responds, and suppose the type distribution $F$ places positive mass on both sides of the first-best threshold $p_{\min}$. Then no affine $q$ is optimal for the principal. Feasibility is used only to pass from screening to optimality: that no affine $q$ attains first-best \emph{screening} holds in every regime.
\end{theorem}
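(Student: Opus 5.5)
Our plan has two halves. First we show that no affine rule attains first-best screening, in any regime. Then, under weak feasibility, we exhibit a rule that does attain it, so that the optimal value equals $W^*$ and every affine rule falls strictly short.

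\emph{Step 1 (affine rules do not screen).} Write $q(r)=a+br$ with $q([0,1])\subseteq[0,1]$, and split into the cases $b\le 0$ and $b>0$.

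If $b\le 0$, the agent's objective $-\beta D_G(p,r)+\gamma(a+br)$ is strictly concave-like in $r$ around $p$, with the $q$-term weakly pushing the report down. The best response therefore satisfies $r^*(p)\le p$, and $\tau(p)=a+br^*(p)\ge a+bp\ge a+b$. Take any $p$ on the left of $p_{\min}$ with $\tau(p)=0$. Then $a+br^*(p)=0$ forces $a+br=0$ at a report $r\le p<1$, and since $b\le 0$ and $a+b\ge 0$ this gives $b=0$ and $a=0$. Thus $\tau\equiv 0$, which produces false rejection on $(p_{\min},1]$. That set has positive $F$-mass by hypothesis, so $e\neq 0$ there.

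If $b>0$, first-best requires $\tau(p)=0$ for almost every $p<p_{\min}$. This forces $q(r^*(p))=0$. Because $q>0$ on $(-a/b,1]$ and $q\ge 0$, it also forces $r^*(p)\le -a/b$, and hence $-a/b\ge 0$. At the same time, $\tau(p)=1$ for almost every $p>p_{\min}$ requires $r^*(p)=1$ with $a+b=1$.

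Now use the strict monotonicity facts in (A1). For an interior type near $p_{\min}$ from above, the report $r=1$ is optimal only if $\gamma(1-q(p))\ge\beta D_G(p,1)$, that is, $\gamma b(1-p)\ge \beta D_G(p,1)$. For a type just below $p_{\min}$, taking $r=1$ must not be profitable. Both are limits of continuous functions of $p$ at $p_{\min}$. The key observation is that the affine rule cannot give the type at $p_{\min}^+$ a discrete jump in approval relative to $p_{\min}^-$.

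To make this precise, we compare the value functions. We have $U(p)=\max_r V(r;p)$, which is continuous in $p$. For the type at $p_{\min}^-$ the optimal report gives $\tau=0$, so its value comes from a report with $q=0$. For the type at $p_{\min}^+$ the optimal report is $1$ with $q=1$. By continuity of $U$ and of $D_G(\cdot,r)$, both reports are optimal at $p_{\min}$ itself.

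The contradiction comes from interior first-order reasoning. For $b>0$, every type's best response with $q(r^*)=0$ must sit at $r^*=p$ exactly when $p\le -a/b$, and the marginal gain $\gamma b$ at $r=-a/b$ is strictly positive. Types slightly below $-a/b$ therefore strictly prefer reports with $q>0$ but $q<1$: the objective $-\beta D_G(p,r)+\gamma b r$ has a stationary point where $\beta\partial_r D_G=\gamma b$, which lies in $(-a/b,1)$ for types near the kink. This yields $\tau\in(0,1)$ on a set of positive measure, which is an error on one side of $p_{\min}$. Hence the step (whose monotone-comparative-statics details we will check) yields $e\neq0$ on a set of positive $F$-measure, and $W(q)<W^*$ by \eqref{eq:welfare-gap-identity}.

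\emph{Step 2 (attainability).} Under weak feasibility, define the reserve $r_0$ by $C(r_0)=\gamma/\beta$ and take the cliff $q=\mathbf 1\{r\ge r_0\}$.

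For $p\ge p_{\min}$, the cost of jumping to $r_0$ is $D_G(p,r_0)\le C(r_0)=\gamma/\beta$, since $D_G(\cdot,r_0)$ is decreasing on $[0,r_0]$. Hence such types report at least $r_0$ and are approved. For $p<p_{\min}$, the same cost strictly exceeds $\gamma/\beta$, so these types report truthfully and are not approved. This shows $W=W^*$ up to the single tie type.

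The main obstacle is the $b>0$ case of Step 1. We expect it to reduce to showing that an affine rule induces a strictly interior approval probability for a positive-measure set of types near the threshold. If the stationary-point argument proves fragile, we will first try a value-function comparison, using strict convexity of $D_G$ in $p$.
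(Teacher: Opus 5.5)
Your case $b\le 0$ and your Step~2 are sound, and closing with the welfare-gap identity~\eqref{eq:welfare-gap-identity} is the right way to turn a screening error into a strict welfare loss. Step~2 also needs the pointwise bound $\tau\Pi\le\max(\Pi,0)$, which gives $W\le W^*$ for every rule, but that is one line.

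The gap is the case $b>0$, which you flag yourself. The argument you give for it does not work.

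\begin{itemize}
\item \emph{There is no zero region or kink.} Take an affine $q(r)=a+br$ with $b>0$ mapping $[0,1]$ into $[0,1]$. Then $q(0)=a\ge 0$ forces $q>0$ on $(0,1]$. So $q$ vanishes on no interval and has no kink at $-a/b$. Your construction with ``types slightly below $-a/b$'' and a stationary point ``near the kink'' is about the clipped line $\max\{0,a+br\}$, which is not in the class.
\item \emph{The first-order step needs more than (A1).} The condition $\beta\,\partial_r D_G(p,r)=\gamma b$ requires $G''$, i.e.\ (A2), which the theorem does not assume. Even under (A2), such a stationary point need not be a global maximizer, because $D_G(p,\cdot)$ need not be convex in its second argument. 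The model requires global best responses.
\item \emph{The continuity detour gives no contradiction.} The step rule also makes both reports optimal at $p_{\min}$. Your claim that an affine rule cannot produce a jump in the best response is asserted, not proved.
\end{itemize}

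The missing idea is a one-line comparison with the truthful report, and it handles both signs of $b$ at once. Suppose $\tau(p)=0$, and let $r^*$ be a best response with $q(r^*)=0$. Then
$\gamma q(p)=V(p;p)\le V(r^*;p)=-\beta D_G(p,r^*)\le 0$,
so $q(p)=0$ for $F$-a.e.\ $p\in(0,p_{\min})$. An affine function vanishing on a set with more than one point is identically zero. Hence $\tau\equiv 0$, and every type above $p_{\min}$ is falsely rejected.

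This is the cap of Theorem~\ref{thm:critical-slope}(i) in its simplest form. The paper invokes that cap to get $q\equiv 0$ on $[0,p_{\min}]$ and hence $a=b=0$. It uses no first-order conditions and nothing beyond (A1), so your case split becomes unnecessary.

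Your own set-up already contained the fix. For $b>0$, the report $r^*(p)$ must be the zero $-a/b\in[0,1]$ of $q$. Together with $a\ge 0$ this forces $a=0$ and $r^*(p)=0$. Yet the truthful report earns $\gamma bp>0>-\beta D_G(p,0)$, so $r^*(p)=0$ is not a best response.
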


The proof is in Appendix~\ref{app:optimal-oversight}, and it is score-free: no closed form and no curvature bound enters it. Attainment is Theorem~\ref{thm:reserve}, whose step rule at the reserve achieves first best under weak feasibility, so the principal's optimum exists, equals the first best, and is not affine; above that bound the reserve leaves the report space, no deterministic rule screens, and Proposition~\ref{prop:lambda-sat} gives the second best.

\emph{Intuition.} The principal wants a discontinuous object in type space, approve above $p_{\min}$ and reject below, out of a continuous instrument in report space. An affine rule turns a translation of reports into a translation of approvals, so whatever inflation it induces it induces uniformly, and a uniform shift cannot manufacture a threshold. Any rule that does is non-affine, and by the Perturbation Lemma every non-affine rule makes truthful reporting suboptimal on the binding set. The optimum is self-undermining because screening and properness require incompatible shapes of the instrument.

\paragraph{Relation to Laffont--Tirole optimal regulation.}
Theorem~\ref{thm:optimal-nonaffine} is structurally a \citet{laffont1993theory} regulation result, with two differences that matter. In Laffont--Tirole information rents force a wedge between the first-best and the second-best, whereas here the wedge is zero, because the calibration cost of pretending is a known function of the score and the principal can offset it exactly by moving the threshold. Where they show that information rents \emph{exist} under asymmetric information, the theorem shows that the principal's own optimization \emph{generates} the conditions that create them. The role of the hazard rate $(1-F)/f$ is played not by a property of the type distribution but by the curvature profile of the score, through the critical slope of Theorem~\ref{thm:critical-slope}, and Proposition~\ref{prop:f-free} makes that substitution precise.

\begin{remark}[Robustness to timing]\label{rem:timing-robustness}
The pointwise-optimal induced screening $\tau(p) = \mathbf 1\{p \ge p_{\min}\}$ does not depend on timing, because $\Pi$ changes sign at $p_{\min}$ independently of the agent's strategy, and implementing it requires a non-affine $q$ by the regime-free screening half of Theorem~\ref{thm:optimal-nonaffine}. Non-affinity therefore holds in any equilibrium of the oversight game, under simultaneous moves as well as Stackelberg; only the reserve's location $r_0(G)$ is Stackelberg-specific. The conclusion is conditional on equilibrium existence, which under simultaneous timing needs regularity conditions that Stackelberg timing supplies automatically.
\end{remark}

\subsection{The Reserve Is a Generator Invariant}

The threshold $p_{\min} + \sqrt{\gamma/\beta}$ is a Brier artefact. The general object is the report at which the marginal type's calibration cost exactly exhausts the approval prize.

\begin{lemma}[Envelope sufficiency]\label{lem:envelope}
Assume (A1). Write $\bar r \coloneqq \min\{r_0,1\}$, taking $\bar r \coloneqq 1$ in the saturated regime $\gamma/\beta > D_G(p_{\min},1)$ where no reserve exists, and put $\bar q \coloneqq \mathrm{ramp}(\bar r)$, so $\bar q = 1$ under weak feasibility and $\bar q = \lambda_{\mathrm{sat}} \coloneqq \beta D_G(p_{\min},1)/\gamma$ in the saturated regime. Let $q : [0,1] \to [0,1]$ satisfy
\begin{enumerate}
\item[(C1)] \emph{(cap)} $q \le \mathrm{ramp}$ on $[0,\bar r]$, and
\item[(C2)] \emph{(terminal agreement)} $q = \mathrm{ramp}$ on $[\bar r,1]$.
\end{enumerate}
Then $\tau(p) = \bar q\,\mathbf 1\{p \ge p_{\min}\}$ for every $p \ne p_{\min}$, with a unique best response, namely $r^*(p) = p$ for $p < p_{\min}$ and $r^*(p) = \max(p,\bar r)$ for $p > p_{\min}$. At $p = p_{\min}$ the argmax is $\{p_{\min}\} \cup \{r \in [p_{\min},\bar r] : q(r) = \mathrm{ramp}(r)\}$, a single ($F$-null) type. No monotonicity, continuity or semicontinuity of $q$ is assumed.
\end{lemma}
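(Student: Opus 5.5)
The plan is a direct global comparison of the agent's objective $V(r;p) = -\beta D_G(p,r) + \gamma q(r)$ against the single upper envelope $\gamma\,\mathrm{ramp}(r) - \beta D_G(p,r)$. By (C1) and (C2), $q \le \mathrm{ramp}$ on all of $[0,1]$, with equality on $[\bar r,1]$. Two facts about the envelope carry everything.

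First, $\mathrm{ramp}$ vanishes on $[0,p_{\min}]$, hence so does $q$, since $q \ge 0$.

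Second, $\gamma\,\bar q = \beta C(\bar r)$ in both regimes. Under weak feasibility this is $C(r_0) = \gamma/\beta$. In the saturated regime it is the definition of $\lambda_{\mathrm{sat}}$, and there $C(1)$ is finite.

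The key algebraic step is the identity
\[
C(r) - D_G(p,r) \;=\; D_G(p_{\min},r) - D_G(p,r) \;=\; G(p_{\min}) - G(p) + G'(r)\,(p - p_{\min}),
\]
valid for $r \in (0,1)$ and extended to the endpoints by the limits in (A1). Because $G'$ is strictly increasing (strict convexity), $\psi_p(r) \coloneqq C(r) - D_G(p,r)$ is strictly increasing in $r$ when $p > p_{\min}$, strictly decreasing when $p < p_{\min}$, and identically zero when $p = p_{\min}$.

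\emph{Types $p < p_{\min}$.} The truthful report gives $V(p;p) = 0$, since $q(p) = 0$. For $r \le p_{\min}$ with $r \neq p$ we have $V(r;p) = -\beta D_G(p,r) < 0$. For $r > p_{\min}$, the cap gives
\[
V(r;p) \le \beta\bigl(C(r) - D_G(p,r)\bigr) < 0 .
\]
Here I use that $p \mapsto D_G(p,r)$ is strictly decreasing on $[0,r]$, as recorded in (A1), with the convention that $-\infty < 0$ when $D_G(p,1) = +\infty$. Hence $r^*(p) = p$ is unique and $\tau(p) = 0$.

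\emph{Types $p > p_{\min}$ with $p \ge \bar r$.} Here $q(p) = \mathrm{ramp}(p) = \bar q \ge q(r)$ for every $r$. Truth therefore wins strictly, because every other report costs strictly positive Bregman loss without raising $q$.

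\emph{Types $p_{\min} < p < \bar r$.} The candidate is $\bar r$, with
\[
V(\bar r;p) = \gamma\bar q - \beta D_G(p,\bar r) = \beta\,\psi_p(\bar r).
\]
I compare it against each region of reports in turn.
\begin{itemize}
\item On $(\bar r,1]$, $q = \bar q$ is constant while $D_G(p,\cdot)$ is strictly increasing on $[p,1]$, so $\bar r$ is strictly better.
\item On $[0,p_{\min}]$, $q = 0$ gives $V < 0$. Meanwhile $\psi_p(\bar r) > \psi_p(p_{\min}) = -D_G(p,p_{\min}) \cdot 0 + 0$; more simply, $\psi_p(\bar r) = C(\bar r) - D_G(p,\bar r) > 0$, because $D_G(\cdot,\bar r)$ is decreasing on $[0,\bar r]$ and $p_{\min} < p$.
\item On $(p_{\min},\bar r)$, the cap and strict monotonicity of $\psi_p$ give $V(r;p) \le \beta\psi_p(r) < \beta\psi_p(\bar r)$.
\end{itemize}
So $r^*(p) = \bar r = \max(p,\bar r)$ is unique, and $\tau(p) = \bar q$.

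\emph{The marginal type $p = p_{\min}$.} Here $\psi \equiv 0$, so $V(r;p_{\min}) \le 0 = V(p_{\min};p_{\min})$ on $[p_{\min},\bar r]$. Equality holds exactly where $q(r) = \mathrm{ramp}(r)$. Reports below $p_{\min}$ and above $\bar r$ lose strictly, by the same two monotonicity arguments as in the previous case. This gives the stated argmax set, and the type is $F$-null because $F$ has a density.

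The only delicate point, and the one I expect to need care, is the endpoint bookkeeping in the extended sense. Under (A1), $G'$ may diverge at $0$ or $1$, as with the log score, so the identity for $\psi_p$ must be taken as a limit. Comparisons must also run in $[0,+\infty]$, for instance $V(1;p) = -\infty$ when $D_G(p,1) = +\infty$. I would handle this by proving the strict inequalities on the open interval and passing to endpoints via the continuity of $r \mapsto D_G(p,r)$ stipulated in (A1). No regularity of $q$ enters anywhere: every step bounds $q$ pointwise by the envelope, which is why monotonicity and continuity of $q$ are never needed.
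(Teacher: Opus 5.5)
Your proof is correct and is essentially the paper's. Your $\psi_p = C - D_G(p,\cdot)$ is exactly $-\Phi(\cdot\,;p)$ from the paper's pivot identity, and the argument has the same structure: bound $V$ by the envelope $\gamma\,\mathrm{ramp}(r) - \beta D_G(p,r)$, then check the report regions for each type regime, with only cosmetic reorganization (handling $p \ge \bar r$ directly via $q \le \bar q$, and comparing low reports against $0$ rather than against $-\beta\Phi(p_{\min};p)$); the garbled expression for $\psi_p(p_{\min})$ in your $[0,p_{\min}]$ bullet should read $-D_G(p,p_{\min})$, but the ``more simply'' argument you give right after already closes that step.
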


The lemma reads the ramp as an envelope rather than as a rule, and (C2) is the clause that matters: a rule reaching the ramp at $\bar r$ but falling away above it does not induce that screening, since the types above the reserve for which the recovered approval exceeds the cost of the lie prefer dipping to the reserve over reporting truthfully. Proof in Appendix~\ref{app:envelope}; the four attainment results below are hypothesis checks against it.

\begin{theorem}[Hard oversight at $r_0(G)$]\label{thm:reserve}
Assume (A1) and weak feasibility $\gamma/\beta \le D_G(p_{\min},1)$, and let $r_0 \in (p_{\min},1]$ be the unique solution of $D_G(p_{\min},r_0) = \gamma/\beta$. Then:
\begin{enumerate}
\item[(i)] The step rule $q(r) = \mathbf 1\{r \ge r_0\}$ induces $\tau(p) = \mathbf 1\{p \ge p_{\min}\}$ for every $p \ne p_{\min}$, with strict optimality of the induced report; at $p = p_{\min}$ the agent is exactly indifferent between $r = p_{\min}$ and $r = r_0$, a single ($F$-null) type. Hence the step rule attains first-best welfare under any tie-breaking.
\item[(ii)] The reserve is an affine function of $\sqrt{\gamma/\beta}$, that is $r_0 = p_{\min} + \kappa_0\sqrt{\gamma/\beta}$ for a constant $\kappa_0$ independent of $(p_{\min},\gamma/\beta)$ across the whole feasible range, if and only if $G$ is quadratic, and $\kappa_0 = 1$ if and only if $G'' \equiv 2$, the Brier normalization.
\end{enumerate}
\end{theorem}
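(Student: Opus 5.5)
For part (i) I would reduce to Lemma~\ref{lem:envelope} and check its two hypotheses. Under weak feasibility $\bar r = r_0$ and $\bar q = 1$. Since $\mathrm{ramp}(r) = \min\{1,(\beta/\gamma)C(r)\}$ and $C(r_0) = \gamma/\beta$, the ramp equals $1$ on $[r_0,1]$. This is because $C$ is strictly increasing on $[p_{\min},1]$ under (A1), so $(\beta/\gamma)C(r) \ge 1$ there. The step rule is also $1$ there, which gives (C2).

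On $[0,r_0)$ the step rule is $0 \le \mathrm{ramp}$, which gives (C1). At $r = r_0$ both sides equal $1$, so the cap holds on the closed interval. The lemma then yields $\tau(p) = \mathbf 1\{p \ge p_{\min}\}$ for $p \ne p_{\min}$, with unique best response $r^*(p) = p$ below $p_{\min}$ and $\max(p,r_0)$ above.

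The argmax at $p_{\min}$ is $\{p_{\min}\}\cup\{r\in[p_{\min},r_0]: q(r)=\mathrm{ramp}(r)\}$. For the step rule this set is $\{p_{\min}, r_0\}$: on $(p_{\min},r_0)$ we have $q = 0 < \mathrm{ramp}$, and at $r_0$ both are $1$. This is the stated indifference, since $-\beta C(r_0) + \gamma = 0 = V(p_{\min};p_{\min})$. Because $F$ has a density, the type $p_{\min}$ is null. Plugging $e = 0$ a.e. into \eqref{eq:welfare-gap-identity} gives first-best welfare for any tie-breaking rule.

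As a sanity check independent of the lemma, I would also give the direct comparison. For $p<p_{\min}$, any report $r \ge r_0$ costs $\beta D_G(p,r) > \beta D_G(p_{\min},r) \ge \beta C(r_0) = \gamma$. This uses that $p\mapsto D_G(p,r)$ is strictly decreasing on $[0,r]$ and $r\mapsto D_G(p_{\min},r)$ is increasing past $p_{\min}$, so truthful reporting is strictly better. For $p\in(p_{\min},r_0)$, the report $r_0$ costs $\beta D_G(p,r_0) < \gamma$, so it strictly beats every report below $r_0$, and it is the cheapest report that secures approval. For $p \ge r_0$, truth is approved at zero cost.

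Part (ii) is a functional-equation argument. Let $t = \sqrt{\gamma/\beta}$. The hypothesis says that for every $p_{\min}\in(0,1)$ and every $t$ in the feasible range,
\[
D_G(p_{\min},p_{\min}+\kappa_0 t) = t^2.
\]
For the converse direction, if $G(x) = cx^2 + \text{affine}$ then $D_G(p,r) = c(r-p)^2$. Hence $r_0 = p_{\min} + t/\sqrt c$ with $\kappa_0 = c^{-1/2}$, and $\kappa_0 = 1$ exactly when $c=1$, that is $G''\equiv 2$.

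For necessity, set $u = \kappa_0 t$. The identity says $D_G(p,p+u) = u^2/\kappa_0^2$ for all $p$ and all admissible $u\ge 0$ with $p+u\le 1$. Differentiating in $p$ is not allowed under (A1) alone, so I would instead use the integral form
\[
D_G(p,r) = \int_p^r (x-p)\,dG'(x)
\]
(I would instead differentiate in $r$, which is licensed by $G'$ being $C^0$ on $(0,1)$). The key steps are then:
\begin{itemize}
\item Take $\partial_r D_G(p,r) = (r-p)\,\dot{}$ in the Stieltjes sense, writing $D_G(p,r) = \int_p^r (G'(r) - G'(x))\,dx$.
\item Fixing $p$, conclude that $\int_p^{p+u}\bigl(G'(p+u)-G'(x)\bigr)dx = u^2/\kappa_0^2$ for all small $u$.
\item Show this forces $G'$ to be affine with slope $2/\kappa_0^2$.
\end{itemize}
The cleanest route to the last step is to use a second variable. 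Subtract the identity at $(p,p+u)$ and at $(p+\epsilon, p+u)$, with the same upper endpoint $r$. This gives
\[
\int_p^{p+\epsilon}\bigl(G'(r)-G'(x)\bigr)dx = \frac{u^2-(u-\epsilon)^2}{\kappa_0^2}.
\]
Divide by $\epsilon$ and let $\epsilon\to0$, using continuity of $G'$, to get $G'(r)-G'(p) = 2(r-p)/\kappa_0^2$ for all $p<r$. So $G'$ is affine and $G$ is quadratic with $G''\equiv 2/\kappa_0^2$.

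The main obstacle is the domain. Feasibility restricts $t \le \sqrt{C(1)}$, which itself depends on $G$, and $p_{\min}$ ranges over $(0,1)$. I need every pair $p<r$ in $(0,1)$ to arise as $(p_{\min}, r_0)$ for some feasible $\gamma/\beta$. That holds because for fixed $p_{\min}$ the map $r\mapsto C(r)$ is a continuous increasing bijection onto $[0,C(1)]$, so every $r\in(p_{\min},1]$ is the reserve of $\gamma/\beta = C(r)$. Once that is settled, the subtraction step above goes through.
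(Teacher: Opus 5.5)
Your part~(i) is the paper's argument: the step rule is checked against (C1) and (C2) of Lemma~\ref{lem:envelope}, and the argmax at $p_{\min}$ reduces to $\{p_{\min},r_0\}$. The paper also proves existence and uniqueness of $r_0$, by the intermediate value theorem on $[p_{\min},1)$, using the limit $C(1)$ so that the log score is admitted. Your bijection remark in part~(ii) supplies the same fact. Note, though, that when $C(1)=+\infty$ the bijection runs from $[p_{\min},1)$ onto $[0,\infty)$, not onto $[0,C(1)]$.

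Part~(ii) is correct but takes a different route.
\begin{itemize}
\item The paper fixes one $p_{\min}$ and varies $\gamma/\beta$, which gives $C(r)=(r-p_{\min})^2/\kappa_0^2$ on $(p_{\min},1)$. It then observes that the left side equals $(r-p_{\min})^2$ times the $r$-derivative of the secant slope $[G(r)-G(p_{\min})]/(r-p_{\min})$. So the secant slope is affine and $G$ is quadratic on $(p_{\min},1)$, and letting $p_{\min}\downarrow 0$ extends this to $(0,1)$.
\item You instead hold the reserve $r$ fixed and vary the anchor. This amounts to computing $\partial_p D_G(p,r)=G'(p)-G'(r)$, and it yields $G'(r)-G'(p)=2(r-p)/\kappa_0^2$ on all of $(0,1)$ at once, with no propagation step.
\end{itemize}
Both arguments avoid $G''$ and stay within (A1). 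The paper's version already concludes on $(p_{\min},1)$ from the hypothesis at a single break-even point. Yours uses the quantification over $p_{\min}$ that the statement grants, and in exchange delivers affinity of $G'$ on the whole interval directly.

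One correction to your commentary: you have the admissible derivative backwards. Under (A1), the map $p\mapsto D_G(p,r)=G(p)-G(r)-G'(r)(p-r)$ is differentiable with derivative $G'(p)-G'(r)$. Differentiating in $r$, by contrast, would require $G''$, which (A1) does not provide. Your final subtraction step is exactly a difference quotient in the first argument, so it is the legitimate operation. You should drop the bullet about $\partial_r D_G$ ``in the Stieltjes sense'', which is both garbled and unnecessary.
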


Part~(i) needs (A1) and weak feasibility and nothing else: no $C^2$, no curvature bounds, no assumption on $F$ beyond a density, no bound on $G'''$, so the log score, whose curvature is unbounded at both endpoints, is covered with room to spare (Example~\ref{ex:log-score}), which is why (A1) is stated on the open interval. Part~(ii) is why the Brier formula is misleading rather than merely special: at $G = p^2 + \tfrac12 p^3$, $p_{\min} = 0.4$, $\gamma/\beta = 0.09$, the true reserve is $r_0 = 0.62224$ against the Brier formula's $0.7$, and the wrong threshold costs $3.59 \times 10^{-3}$ of welfare, computed with $A = 1$ and $F$ uniform on $[0,1]$, with $r_0$ solved from $D_G(p_{\min},r_0) = \gamma/\beta$ by bisection to machine precision and the gap evaluated in closed form from~\eqref{eq:welfare-gap-identity}.

\emph{Intuition.} The reserve is not a knob but the report at which the marginal type is exactly indifferent between telling the truth and paying the calibration cost of climbing to approval, and that indifference is fixed by three primitives the principal does not choose when it chooses the rule: its own break-even ($p_{\min}$), the agent's price of approval ($\gamma/\beta$), and how expensive the score makes a lie of a given size ($D_G$). Theorem~\ref{thm:critical-slope}(ii) makes the invariance exact: every rule that screens correctly saturates at this same report. A more risk-averse principal therefore needs a higher reserve rather than a steeper rule: if $u_s$ and $u_f$ both fall in risk aversion relative to $u_d$ then $p_{\min}$ increases, and since $\partial_p D_G(p,r) = G'(p)-G'(r) < 0$ for $p < r$, raising $p_{\min}$ lowers $C$ pointwise and so raises $r_0$, while under Brier $m = \sqrt{\gamma/\beta}$ and hence $\Lambda^* = 2/m$ are unchanged.

\subsection{A Continuous Implementation: The Bregman Ramp}

The rule of Theorem~\ref{thm:reserve} is a cliff, which no approval process can implement. The screening does not require one. Part~(ii) below is proved from Theorem~\ref{thm:critical-slope}(i)--(iii) of the next subsection; there is no circularity, since those parts rest on (A1)/(A2), feasibility and Lemma~\ref{lem:envelope} alone.

\begin{theorem}[The kink rule]\label{thm:kink}
Assume (A1) and weak feasibility, and let $q_G \coloneqq \mathrm{ramp}$, i.e.\ $q_G(r) = 0$ on $[0,p_{\min}]$, $q_G(r) = (\beta/\gamma) D_G(p_{\min},r)$ on $[p_{\min},r_0]$, and $q_G(r) = 1$ on $[r_0,1]$. Then:
\begin{enumerate}
\item[(i)] \emph{(Attainment in $C^{0,1}$.)} $q_G$ is non-decreasing, $[0,1]$-valued, Lipschitz with constant $\sup_{[p_{\min},r_0]}s$ under (A2) and strict feasibility, and induces $\tau(p) = \mathbf 1\{p \ge p_{\min}\}$ for every $p \ne p_{\min}$, with a unique best response. At $p = p_{\min}$ the argmax is the whole interval $[p_{\min},r_0]$ ($F$-null). Hence $q_G$ attains first-best welfare, and induces the same outcome as the step rule of Theorem~\ref{thm:reserve} for every type except $p_{\min}$.
\item[(ii)] \emph{(Impossibility in $C^1$.)} Assume (A2) and \textbf{strict} feasibility. Then no $q \in C^1([0,1],[0,1])$ attains first-best screening. The result holds with $C^1$ weakened to continuity on $[0,1]$ together with differentiability at the single point $r_0$.
\item[(iii)] \emph{(The $C^1$ infimum gap is zero.)} Assume (A2) and strict feasibility. There is an explicit non-decreasing family $(q_\varepsilon)_{\varepsilon \in (0,m)} \subset C^1([0,1],[0,1])$, with $\mathrm{Lip}(q_\varepsilon) \le 3\sup_{[p_{\min},r_0]}s$ uniformly in $\varepsilon$, for which $W^* - W(q_\varepsilon) = O(\varepsilon)$: the $O(\varepsilon)$ term prices under-approval on $(p_{\min},r_0)$ and the false-approval channel below $p_{\min}$ enters only at $O(\varepsilon^2)$, with both constants given in Appendix~\ref{app:kink}. Hence $\inf_{q \in C^1}\bigl(W^* - W(q)\bigr) = 0$, the infimum is not attained by part~(ii), and if $G \in C^\infty$ the family may be taken in $C^\infty$.
\end{enumerate}
\end{theorem}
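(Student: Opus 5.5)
Part~(i) is a hypothesis check against Lemma~\ref{lem:envelope}. Under weak feasibility $\bar r = r_0$ and $\bar q = 1$. The rule $q_G$ is the ramp itself, so (C1) and (C2) hold with equality. The lemma then gives $\tau(p) = \mathbf 1\{p \ge p_{\min}\}$ for $p \ne p_{\min}$ with a unique best response, namely $r^*(p)=p$ below $p_{\min}$ and $\max(p,r_0)$ above it. At $p_{\min}$ the argmax is $\{p_{\min}\}\cup\{r\in[p_{\min},r_0]: q_G=\mathrm{ramp}\}=[p_{\min},r_0]$. Monotonicity and the range follow from $C$ being strictly increasing on $[p_{\min},1]$ with $(\beta/\gamma)C(r_0)=1$. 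For the Lipschitz constant under (A2) and strict feasibility, $r_0<1$, so $C$ is $C^1$ on $[p_{\min},r_0]$ with derivative $(\gamma/\beta)s$. The ramp is piecewise $C^1$ with flat pieces outside $[p_{\min},r_0]$, hence its Lipschitz constant is $\sup_{[p_{\min},r_0]} s$. First-best welfare follows from~\eqref{eq:welfare-gap-identity}, since $e=0$ off the $F$-null point. The outcome agrees with the step rule because both rules induce $r^*=\max(p,r_0)$ for $p>p_{\min}$ and $r^*=p$ below.

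For part~(ii) I use Theorem~\ref{thm:critical-slope}(ii): every rule attaining first-best screening satisfies $q\le\mathrm{ramp}$ on $[0,r_0]$ and $q(r_0)=1$. If that citation is unavailable in the needed form, I derive it directly, and this is the main obstacle. Suppose $q(a)>\mathrm{ramp}(a)$ at some $a<r_0$. Then types just below $p_{\min}$ gain from reporting $a$, by continuity of $D_G(\cdot,a)$, so a positive-measure set is falsely approved. Next, $q(r_0)=1$ is forced: types just above $p_{\min}$ must obtain $\tau=1$ at a report $r$ with $C(r)\ge \gamma/\beta$ up to vanishing error, which combined with the cap forces $q$ to reach $1$ at or just past $r_0$. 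Continuity of $q$ then gives $q(r_0)=1$. Given these two facts, the function $\phi:=\mathrm{ramp}-q$ is $\ge0$ on $[p_{\min},r_0]$ and vanishes at $r_0$. Hence the left derivative satisfies $q'(r_0)\ge s(r_0)=\Lambda^*>0$. Strict feasibility makes this positive, since $G''(r_0)>0$ and $r_0>p_{\min}$ with $r_0<1$ interior. On the right, $q\le1$ and $q(r_0)=1$ give a right derivative $\le0$. So $q$ is not differentiable at $r_0$, which contradicts the assumption of differentiability at the single point $r_0$.

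For part~(iii) I build $q_\varepsilon$ by smoothing the corner at $r_0$ from below. Set $q_\varepsilon=\mathrm{ramp}$ on $[0,r_0-\varepsilon]$. On $[r_0-\varepsilon,r_0+\varepsilon']$ use a monotone $C^1$ cubic Hermite bridge from value $\mathrm{ramp}(r_0-\varepsilon)$ with slope $s(r_0-\varepsilon)$ up to value $1$ with slope $0$. I choose it to lie below the ramp wherever the ramp is below $1$, for instance by taking the bridge interval to end at $r_0+\varepsilon$ with $\varepsilon'=\varepsilon$. Hermite slope bounds give $\mathrm{Lip}\le 3\sup s$. To smooth the kink at $p_{\min}$, I note that $s(p_{\min})=0$, so the ramp is already $C^1$ there. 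Since $q_\varepsilon\le\mathrm{ramp}$ on $[0,r_0]$, no type below $p_{\min}$ is falsely approved by any report $\le r_0$. Reports in $(r_0,r_0+\varepsilon]$ are the only source of false approval. For a type at distance $\delta$ below $p_{\min}$ this requires $C$-excess of order $\delta$ against a gain of order $\varepsilon$, which yields the $O(\varepsilon^2)$ channel once weighted by $|p-p_{\min}|$. False rejection occurs only for types whose best report lies in the bridge. Bounding $1-\tau\le1$ on a window of width $O(\varepsilon)$ near $r_0$, and showing that types in $(p_{\min},r_0)$ still reach $\tau$ within $O(\varepsilon)$ of $1$, gives the $O(\varepsilon)$ term via~\eqref{eq:welfare-gap-identity}. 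Replacing the Hermite cubic by a $C^\infty$ transition function gives the $C^\infty$ version. The infimum is not attained by part~(ii).
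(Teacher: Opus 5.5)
Parts~(i) and~(ii) follow the paper's proof. Part~(i) is the same check of (C1)--(C2), with equality, against Lemma~\ref{lem:envelope}. Part~(ii) is the same cap-and-pinch argument (the cap is Theorem~\ref{thm:critical-slope}(i), the pinch is part~(ii)), which gives a left derivative at least $\Lambda^*>0$ and a right derivative at most $0$ at the interior maximizer $r_0$.

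\emph{Your part~(iii) construction differs from the paper's.} The paper uses $q_\varepsilon=\mathrm{ramp}+(1-\mathrm{ramp})\,\varphi\bigl((r-r_0+\varepsilon)/\varepsilon\bigr)$, which lies \emph{above} the ramp on $(r_0-\varepsilon,r_0)$ and saturates exactly at $r_0$.
\begin{itemize}
\item Monotonicity, the range $[0,1]$, the bound $3\sup s$, and a join at $r_0-\varepsilon$ that is flat to all orders all come for free, for every $\varepsilon\in(0,m)$.
\item The price is a false-approval band of width about $c_\rho\varepsilon$ below $p_{\min}$, costing $O(\varepsilon^2)$.
\end{itemize}
Your bridge lies \emph{below} the ramp and saturates at $r_0+\varepsilon$, and this buys more than you claim. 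Since $\mathrm{ramp}\equiv1$ on $[r_0,1]$ and $q_\varepsilon\le1$, you have $\gamma q_\varepsilon\le\beta C$ on all of $[p_{\min},1]$. Lemma~\ref{lem:pivot} then gives, for every $p<p_{\min}$ and every $r\ge p_{\min}$,
$V(r;p)\le-\beta\Phi(r;p)\le-\beta D_G(p,p_{\min})<0$.
Hence $\tau(p)=0$ for every $p<p_{\min}$, and the false-approval channel is empty. Your argument for an $O(\varepsilon^2)$ channel through reports in $(r_0,r_0+\varepsilon]$ is therefore mistaken: there $\gamma q_\varepsilon\le\gamma<\beta C(r)$, so nothing is gained over the cap.

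Your route has its own cost. A cubic Hermite bridge with end slopes $s(r_0-\varepsilon)$ and $0$:
\begin{itemize}
\item is monotone only if $s(r_0-\varepsilon)$ is at most three times its secant slope;
\item lies under the ramp only for small $\varepsilon$;
\item stays in the report space only for $\varepsilon<1-r_0$.
\end{itemize}
For a generator whose curvature peaks near $r_0-\varepsilon$, such as the engineered one in the footnote to Corollary~\ref{cor:lambda-star}, the bridge overshoots $1$ at some $\varepsilon<m$. Your family therefore lives on some $(0,\varepsilon_0)$ rather than on $(0,m)$. That is harmless for the infimum.

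\emph{Two steps are still missing.}

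First, the $O(\varepsilon)$ bound on under-approval over $(p_{\min},r_0)$ is the content of~(iii), and you announce it without an argument. It runs as follows.
\begin{itemize}
\item Fix $p\in(p_{\min},r_0)$. Lemma~\ref{lem:pivot} makes $V(\cdot\,;p)=-\beta\Phi(\cdot\,;p)$ strictly increasing on $[p_{\min},r_0-\varepsilon]$, and reports below $p_{\min}$ do worse than $p_{\min}$. So $r^*(p)\ge r_0-\varepsilon$.
\item If $r^*(p)<r_0+\varepsilon$, compare with the deviation to $r_0+\varepsilon$, where $q_\varepsilon=1$. With $\bar G''\coloneqq\max_{[r_0-\varepsilon,r_0+\varepsilon]}G''$, optimality gives $\gamma(1-\tau(p))\le\beta[D_G(p,r_0+\varepsilon)-D_G(p,r^*(p))]\le2\beta\bar G''\varepsilon(r_0+\varepsilon-p)$.
\item Integrating against the weight in~\eqref{eq:welfare-gap-identity} gives $O(\varepsilon)$. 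Types in $[r_0,r_0+\varepsilon)$ lose only $O(\varepsilon^2)$ each.
\end{itemize}

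Second, the $C^\infty$ claim fails as you state it. Replacing the cubic by a smooth transition does nothing at $p_{\min}$. There the ramp is $C^1$, but its second derivative jumps from $0$ to $(\beta/\gamma)G''(p_{\min})>0$. Moreover, a Hermite-type bridge matches only value and slope at $r_0-\varepsilon$. You need two repairs:
\begin{itemize}
\item a transition whose difference from the ramp vanishes to all orders at $r_0-\varepsilon$, as the paper's convex-combination form does;
\item a smooth cutoff multiplying the ramp near $p_{\min}$, as the paper uses.
\end{itemize}
The cutoff only lowers $q$ below the ramp, so the screening argument survives.
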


\emph{Intuition.} Under $q_G$ every approved type either pools at $r_0$ or reports truthfully, and every rejected type reports truthfully: the same induced outcome as the cliff, for every type but one. The cliff was therefore never necessary: its discontinuity was supplying \emph{slope}, which a Lipschitz ramp supplies continuously, and the induced miscalibration, pooling at $r_0$, is not removed by continuity. Part~(ii) says the smoothing cannot be completed: a $C^1$ rule must flatten where it saturates, and a flat top invites under-reporting, because the calibration saving from retreating is first order while the approval loss is not. Part~(iii) says this costs nothing in the limit, so $C^{0,1}$ versus $C^1$ is a distinction between a maximum and a supremum, not a welfare discontinuity.\footnote{Part~(iii) also forecloses any lower bound on the $C^1$ welfare gap: the infimum over $C^1$ rules is zero for every admissible generator, so no positive bound of the form $\mathrm{Var}_{F|\Cbind}(1/G'')\,(\gamma/\beta)^2$, or of any other form, can hold over that class. The obstruction is slope rather than regularity (Theorem~\ref{thm:critical-slope}), and Brier's distinguished role survives in the identity~\eqref{eq:tilt-identity}, as the minimizer of required steepness.}

\paragraph{Where the smoothing loss sits.}
The bound in part~(iii) is $O(\varepsilon)$, and the loss it prices sits above $p_{\min}$. Mollifying the kink does let types just below $p_{\min}$ sneak in, but that channel is $O(\varepsilon^2)$. The dominant channel is that every type in $(p_{\min},r_0)$ stops short of the top and receives strictly less than $1$: at the Brier instance $p_{\min} = 0.4$, $\gamma/\beta = 0.09$, $\varepsilon = 0.02$, taking the transition $\varphi$ of Appendix~\ref{app:kink} to be the standard bump $\varphi(x) = e^{-1/x}/(e^{-1/x}+e^{-1/(1-x)})$, the falsely approved interval below $p_{\min}$ has width $5.5 \times 10^{-3}$ while the under-approved interval above it is the whole of $(p_{\min},r_0)$, of width $m = 0.3$, with $\min_{p > p_{\min}} \tau = 0.9862$ on a $4001$-point type grid. The cost of smoothing therefore falls on the types above $p_{\min}$, in proportion to how much smoothing was applied.

\subsection{The Critical Slope}

The ramp and the step differ in regularity but agree in what they must do at one point: both saturate at $r_0(G)$, and both must climb to saturation fast enough.

\begin{theorem}[Cap, pinch, chord, and the critical slope]\label{thm:critical-slope}
Parts~(i)--(iii) concern any measurable $q : [0,1] \to [0,1]$ that admits a global best response at every type, as the conventions above require, and that attains first-best screening; part~(iv) quantifies over the slope-capped classes.
\begin{enumerate}
\item[(i)] \emph{(The cap.)} Assume (A1). Then $\gamma\, q(r) \le \beta\, D_G(p_{\min},r)$ for every $r \in [0,1]$, and moreover $q \equiv 0$ on $[0,p_{\min}]$; hence $q \le \mathrm{ramp}$ on all of $[0,1]$. First-best screening also forces weak feasibility, so $r_0$ is well defined, and then $q(r) < 1$ for every $r \in [p_{\min},r_0)$. The cap and the vanishing below $p_{\min}$ use only the weaker hypothesis that $\tau(p) = 0$ for $F$-almost every $p < p_{\min}$, which is the form the saturated regime of Proposition~\ref{prop:lambda-sat} needs.
\item[(ii)] \emph{(The pinch.)} Assume (A1) and strict feasibility. Then $\inf\{r : q(r) = 1\} = r_0$; the value $1$ is attained at $r_0$ itself or at points accumulating at $r_0$ from the right, strict right-accumulation being forced exactly when $q(r_0) < 1$; and if $q$ is continuous then $q(r_0) = 1$ exactly. No assumption is made that $\{q = 1\}$ is an interval.
\item[(iii)] \emph{(The chord bound.)} Assume (A2), strict feasibility, and $q$ continuous. Then for every $a \in [p_{\min},r_0)$,
\[
\frac{q(r_0)-q(a)}{r_0-a} \;\ge\; \frac{1-\mathrm{ramp}(a)}{r_0-a}, \qquad\text{hence}\qquad \mathrm{Lip}(q) \;\ge\; \Lambda_c(G) \;\ge\; \Lambda^*(G) \;>\; 0 .
\]
No differentiability of $q$ is used. If $q$ happens to be differentiable at $r_0$ from the left, then additionally $q'(r_0^-) \ge \Lambda^*(G)$.
\item[(iv)] \emph{(The threshold, both directions.)} Assume (A2) and strict feasibility. If $q \in \mathcal L_\Lambda$ attains first-best screening, then $\Lambda \ge \Lambda_c(G)$; continuity is automatic in $\mathcal L_\Lambda$. Conversely, if $\Lambda \ge \Lambda_c(G)$ then the truncated line $q_\Lambda(r) \coloneqq \operatorname{med}\{0,\ 1-\Lambda(r_0-r),\ 1\}$ belongs to $\mathcal Q_\Lambda$ and attains first-best screening, with a unique best response for every $p \ne p_{\min}$. Consequently $\delta_\Lambda(G) = \delta^{\mathcal L}_\Lambda(G) = 0$ \textbf{if and only if} $\Lambda \ge \Lambda_c(G)$, and $\Lambda_c$ is attained, so the infimum over admissible slopes is a minimum.
\end{enumerate}
\end{theorem}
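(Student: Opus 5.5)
The argument is a chain of global revealed-preference comparisons, one type at a time, followed by an envelope check through Lemma~\ref{lem:envelope}. For part~(i), fix a report $r$ and take any type $p<p_{\min}$ with $\tau(p)=0$; almost every such type qualifies. Reporting $r$ cannot beat the best response, and truthful reporting earns $\gamma q(p)$. We first show $q(p)=0$ for almost every $p<p_{\min}$. If $q(p)>0$, truthful reporting earns $\gamma q(p)$ with no calibration cost, so the best response must earn at least that. Hence $\tau(p)=0$ would force the best response to have $q=0$ and cost $0$ while still matching $\gamma q(p)>0$, which is impossible. Then $\gamma q(r)-\beta D_G(p,r)\le 0$ for almost every $p<p_{\min}$. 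Letting $p\uparrow p_{\min}$ and using continuity of $p\mapsto D_G(p,r)$ gives $\gamma q(r)\le\beta C(r)$. For $r\le p_{\min}$ take $p\in(r,p_{\min})$ and let $p\downarrow r$; this gives $q(r)\le(\beta/\gamma)D_G(r,r)=0$. For weak feasibility, the types just above $p_{\min}$ need $\tau=1$, so some report has $q=1$, and the cap at that report forces $\gamma\le\beta C(r)\le\beta C(1)$. For $r\in[p_{\min},r_0)$ strict monotonicity of $C$ gives $q(r)\le(\beta/\gamma)C(r)<1$.

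For part~(ii), part~(i) already gives $\inf\{q=1\}\ge r_0$. Suppose the infimum were some $r_1>r_0$. Under strict feasibility take a type $p\in(p_{\min},r_0)$ close to $p_{\min}$ with $\tau(p)=1$. Its best response $r^*$ must then satisfy $q(r^*)=1$, so $r^*\ge r_1$, and it must pay $\beta D_G(p,r^*)\ge\beta D_G(p,r_1)$. By continuity in $p$, $\beta D_G(p,r_1)\to\beta C(r_1)>\gamma$. Truthful reporting gives $V=\gamma q(p)\ge 0$ against $V(r^*)=\gamma-\beta D_G(p,r^*)<0$, a contradiction. So $\inf\{q=1\}=r_0$, and if $q(r_0)<1$ the value $1$ accumulates from the right by the definition of infimum. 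Continuity then gives $q(r_0)=1$.

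For part~(iii), combine $q(r_0)=1$ with $q(a)\le\mathrm{ramp}(a)$ to get the chord inequality. Taking sup over $a$ gives $\mathrm{Lip}(q)\ge\Lambda_c$. Letting $a\uparrow r_0$, the chord of $(\beta/\gamma)C$ tends to $s(r_0)=\Lambda^*$, so $\Lambda_c\ge\Lambda^*$, and $\Lambda^*>0$ because $G''>0$ and $r_0>p_{\min}$. The one-sided derivative claim is the same limit applied to $q$.

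For part~(iv), necessity is part~(iii), since Lipschitz rules are continuous. For sufficiency, verify the hypotheses of Lemma~\ref{lem:envelope} for $q_\Lambda$ with $\bar r=r_0$. Condition (C2) is immediate since $q_\Lambda=1$ on $[r_0,1]$. Condition (C1) says $1-\Lambda(r_0-a)\le\mathrm{ramp}(a)$ on $[p_{\min},r_0]$, which rearranges to exactly $\Lambda\ge(1-\mathrm{ramp}(a))/(r_0-a)$ and holds by the definition of $\Lambda_c$. Below $p_{\min}$ we need $q_\Lambda=0$. At $a=p_{\min}$ the chord gives $\Lambda\ge 1/m$, so $1-\Lambda(r_0-r)\le 0$ for $r\le p_{\min}$ and the median clips to $0$.

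The lemma then yields the screening with unique best responses. The equivalence $\delta=0\iff\Lambda\ge\Lambda_c$ is almost immediate. Under the lemma first-best screening implies $e=0$ almost everywhere, so the welfare gap vanishes; conversely $\delta=0$ with the gap attained forces screening. The one point needing care is that $\delta_\Lambda=0$ might be only a supremum. I would handle it by compactness, using Arzelà–Ascoli on $\mathcal L_\Lambda$ together with upper semicontinuity of $W$ under the favourable tie-breaking, or alternatively by a quantitative lower bound on the gap for $\Lambda<\Lambda_c$ as in Proposition~\ref{prop:below-threshold}. This is the main obstacle.
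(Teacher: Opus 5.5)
Parts (i), (iii) and the sufficiency half of (iv) follow the paper's proof essentially step for step. That covers the cap from revealed preference of almost every rejected low type, with the limits $p\uparrow p_{\min}$ and $p\to r$; the chord from $q(r_0)=1$ against the cap; and the truncated line checked against Lemma~\ref{lem:envelope} using $\Lambda\ge\Lambda_c\ge 1/m$. Your preliminary step showing $q(p)=0$ for almost every $p<p_{\min}$ is harmless but unnecessary: the cap only uses $V(r;p)\le V(r^*(p);p)=-\beta D_G(p,r^*(p))\le 0$.

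Part (ii) is argued differently.
\begin{itemize}
\item The paper takes approved types $p_n\downarrow p_{\min}$ and extracts a convergent subsequence of their best responses in $[r_0,1]$. It then uses $D_G(p_n,r^*(p_n))\le\gamma/\beta$ and joint continuity of $D_G$ to force the limit to be $r_0$.
\item You instead suppose $\inf\{q=1\}=r_1>r_0$ and exhibit one approved type near $p_{\min}$ that cannot afford any report at or above $r_1$.
\end{itemize}
Your version is more elementary: it needs continuity in $p$ at one fixed report and no compactness. If $r_1=1$, compare against an interior $r_1'\in(r_0,r_1)$ instead, since (A1) gives continuity in $p$ only on $[0,1]\times(0,1)$. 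The paper's version buys a little more, namely that approved types near $p_{\min}$ pool at reports converging to $r_0$.

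The step you leave as a plan is where the paper does its real work: showing that $\delta^{\mathcal L}_\Lambda(G)=0$ forces $\Lambda\ge\Lambda_c(G)$. Treating $\mathcal L_\Lambda\supseteq\mathcal Q_\Lambda$ suffices, and your first route is exactly the paper's. Writing $V_q$ for the agent's objective under $q$, the ingredients are:
\begin{itemize}
\item compactness of $\mathcal L_\Lambda$ in $C([0,1])$ by Arzel\`a--Ascoli;
\item the bound $|\max_r V_q(r;p)-\max_r V_{q'}(r;p)|\le\gamma\|q-q'\|_\infty$, together with upper semicontinuity of $V_q(\cdot\,;p)$, which gives upper hemicontinuity of best responses in $q$ via Berge;
\item hence, for each $p$, upper semicontinuity of $q\mapsto\max\{q(r)\Pi(p): r\in\arg\max_{r'}V_q(r';p)\}$;
\item reverse Fatou with dominating function $Af$.
\end{itemize}

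The care you flag about \emph{favourable tie-breaking} is real. The convention of Section~\ref{sec:behavioral} takes the largest approval, which is the principal's preference only above $p_{\min}$, so Berge does not directly make $W$ itself upper semicontinuous. The paper therefore works with the functional $\widehat W$ that takes the principal's preferred best response at every type. This functional dominates $W$ under any selection and attains its supremum at some $q^\dagger$. If $\widehat W(q^\dagger)=W^*$, then $q^\dagger$ would be a first-best, $\Lambda$-Lipschitz rule, contradicting (iii). An alternative is to note that ties occur at only countably many types. Indeed, $p\mapsto\max_rV(r;p)+\beta G(p)$ is convex with subgradient $\beta G'(r^*)$ at each best response $r^*$, and $G'$ is strictly increasing, so the selection never affects welfare.

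Your fallback through Proposition~\ref{prop:below-threshold} would not close the gap. That bound is asymptotic as $\zeta\downarrow 0$ and assumes (A3), $s'(r_0)>0$ and monotone $s$, hence $\Lambda_c=\Lambda^*$. It is therefore silent for $\Lambda$ far below $\Lambda_c$ and in the corner case $\Lambda_c>\Lambda^*$.

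One caveat you share with the paper. Letting $p\downarrow r$ at $r=0$ uses continuity of $p\mapsto D_G(p,0)$, which (A1) does not provide when $G'(0^+)=-\infty$. Under the log score $D_G(p,0)=+\infty$ for every $p>0$, so $q(0)$ is unconstrained by screening. Adding $q(0)=1$ to the step rule breaks the claims $q(0)=0$ in (i) and $\inf\{q=1\}=r_0$ in (ii) at that single report. It does not affect (iii) or (iv), where $q$ is continuous.
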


On the monotone class parts~(i) and~(ii) are exactly the hypotheses of Lemma~\ref{lem:envelope}, which closes the section into one statement. Part~(iv)'s welfare equivalence stands on its own hypotheses, since $\mathcal Q_\Lambda$ is compact in $C[0,1]$ by Arzel\`a--Ascoli and $W$ is upper semicontinuous on it under the principal-favourable selection, so the supremum defining $\delta_\Lambda(G)$ is attained and parts~(i)--(iii) apply to a maximizer; the equivalence therefore does not lean on Proposition~\ref{prop:below-threshold} and covers the corner case $\Lambda_c > \Lambda^*$ that the proposition leaves open.

\begin{corollary}[The first-best rules]\label{cor:first-best}
Assume (A1) and weak feasibility, and let $q : [0,1] \to [0,1]$ be non-decreasing and admit a global best response at every type (automatic if $q$ is upper semicontinuous). Then $q$ attains first-best screening if and only if $q \le \mathrm{ramp}$ on $[0,1]$ and $q(r_0) = 1$. The step rule of Theorem~\ref{thm:reserve} and the Bregman ramp of Theorem~\ref{thm:kink} are two named members of that family; under (A2) and strict feasibility the truncated line $q_\Lambda$ of Theorem~\ref{thm:critical-slope}(iv), for any $\Lambda \ge \Lambda_c(G)$, is a third.
\end{corollary}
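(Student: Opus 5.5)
The plan is to prove the two directions separately, each reduced to results already stated. For sufficiency, suppose $q$ is non-decreasing, admits a global best response at every type, satisfies $q \le \mathrm{ramp}$ on $[0,1]$, and has $q(r_0) = 1$. Monotonicity then gives $q = 1$ on $[r_0,1]$, and weak feasibility gives $\mathrm{ramp} = 1$ there, so $q = \mathrm{ramp}$ on $[r_0,1]$. Under weak feasibility $\bar r = r_0$ and $\bar q = 1$, so (C1) and (C2) of Lemma~\ref{lem:envelope} hold. The lemma then delivers $\tau(p) = \mathbf 1\{p \ge p_{\min}\}$ for every $p \ne p_{\min}$, and since $\{p_{\min}\}$ is $F$-null this is first-best screening. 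The lemma assumes no regularity of $q$, so nothing more is needed here.

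For necessity, suppose $q$ attains first-best screening. Theorem~\ref{thm:critical-slope}(i) needs only (A1) and the existence of global best responses. It gives $q \le \mathrm{ramp}$ on $[0,1]$, forces weak feasibility, and shows $r_0$ is well defined. It remains to show $q(r_0) = 1$. I will not use part~(ii), because it assumes strict feasibility; instead I argue directly from monotonicity. Suppose $q(r_0) < 1$. Since $q$ is non-decreasing and bounded above by $\mathrm{ramp} < 1$ on $[p_{\min},r_0)$, we have $q(r) \le q(r_0) < 1$ for every $r \le r_0$. Now take a type $p \in (p_{\min}, r_0)$, which is a set of positive $F$-measure because $f > 0$. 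Its induced approval $\tau(p) = q(r^*(p))$ must equal $1$ for a.e. such $p$, so $r^*(p) > r_0$. Its best response also has to beat truthful reporting, whose payoff is $\gamma q(p) \ge 0$, and this pins down a contradiction as follows.

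Because $p < r_0 < r^*(p)$ and $r \mapsto D_G(p,r)$ is strictly increasing on $[p,1]$, we have $D_G(p,r^*(p)) > D_G(p,r_0)$. Because $p \mapsto D_G(p,r)$ is strictly decreasing on $[0,r]$, we also have $D_G(p,r_0) < D_G(p_{\min},r_0) = \gamma/\beta$. These two inequalities point the wrong way for a contradiction in general, so I would use a limiting argument instead: let $p \downarrow p_{\min}$ along a.e. types. Each such type gets payoff $\gamma - \beta D_G(p,r^*(p)) \ge \gamma q(p) \ge 0$, hence $\beta D_G(p, r^*(p)) \le \gamma$. Pass to a subsequence with $r^*(p) \to r_\infty \ge r_0$. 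Joint continuity of $D_G$ gives $D_G(p_{\min}, r_\infty) \le \gamma/\beta = D_G(p_{\min},r_0)$, and strict monotonicity then forces $r_\infty = r_0$. There is one subtlety here. If $r_\infty = 1$, joint continuity holds only in the extended sense, but in that case $r_0 = 1$ and $q(1)$ is handled by the same argument. The conclusion is that approved types' reports accumulate at $r_0$ from the right.

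The obstacle is that accumulation from the right does not by itself give $q(r_0) = 1$ for a discontinuous monotone $q$. This is precisely the right-accumulation case that part~(ii) allows. So I would exploit that $p_{\min}$ itself, or types just above it, could deviate down to $r_0$. For type $p$ close to $p_{\min}$, compare $r^*(p) > r_0$ with the report $r_0$: the comparison is $\gamma(1 - q(r_0))$ against $\beta[D_G(p,r^*(p)) - D_G(p,r_0)]$, and the bracket tends to $0$, so $q(r_0)$ must tend to $1$. Hence $q(r_0) < 1$ fails only if the approval gap is absorbed by a vanishing cost, which is impossible because $q(r_0)$ is a fixed constant. This delivers $q(r_0) = 1$. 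Finally, I would confirm that the three listed rules are members of the family. The step rule and $q_G$ satisfy the conditions by construction. For $q_\Lambda$, the condition $\Lambda \ge \Lambda_c$ together with the chord definition \eqref{eq:critical-slope-def} gives $1 - \Lambda(r_0 - a) \le \mathrm{ramp}(a)$ for $a \in [p_{\min},r_0)$. Below $p_{\min}$, the condition $\Lambda \ge \Lambda_c \ge 1/(r_0 - p_{\min})$, obtained by taking $a = p_{\min}$, makes $q_\Lambda$ vanish.
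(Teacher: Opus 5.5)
\textbf{There is a genuine gap in the last step of necessity.} Your sufficiency direction and your membership checks are correct and match the paper: monotonicity plus $q(r_0)=1$ gives $q=1=\mathrm{ramp}$ on $[r_0,1]$, so Lemma~\ref{lem:envelope} applies. Your limiting argument, that approved reports $r^*(p_n)$ accumulate at $r_0$, is the pinch argument of Theorem~\ref{thm:critical-slope}(ii).

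The problem is the final comparison. Optimality of $r^*(p)$ against the deviation to $r_0$ gives
\[
\gamma\bigl(1-q(r_0)\bigr)\;\ge\;\beta\bigl[D_G(p,r^*(p))-D_G(p,r_0)\bigr].
\]
A right-hand side tending to $0$ is perfectly consistent with a fixed $q(r_0)<1$. The inequality runs the wrong way for a contradiction, and your sentence ``$q(r_0)<1$ fails only if the approval gap is absorbed by a vanishing cost'' inverts it. The rule $q=\mathbf 1\{r>r_0\}$ shows what must actually be used. It is non-decreasing, lies under the ramp, and has $q(r_0)=0$. It is excluded only because types in $(p_{\min},r_0)$ then have no best response, and a comparison against $r_0$ cannot detect that.

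\textbf{The fix, as in the paper:} use the accumulating approved reports of \emph{other} types as cheaper deviations for a \emph{fixed} type.
\begin{enumerate}
\item Fix one approved $p\in(p_{\min},r_0)$. Under the hypothesis $q(r_0)<1$ you have $r^*(p)>r_0$.
\item Your limiting argument supplies approved reports $r^*(p_n)\to r_0$ with $r^*(p_n)>r_0$ and $q(r^*(p_n))=1$. Pick $n$ with $r_0<r^*(p_n)<r^*(p)$.
\item Since $D_G(p,\cdot)$ is strictly increasing on $[p,1]$, you get $V(r^*(p_n);p)=\gamma-\beta D_G(p,r^*(p_n))>\gamma-\beta D_G(p,r^*(p))=V(r^*(p);p)$. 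This contradicts optimality.
\item Hence $r^*(p)=r_0$ and $q(r_0)=1$.
\end{enumerate}
Equivalently, monotonicity makes $\{q=1\}=(r_0,1]$, on which $D_G(p,\cdot)$ has no minimizer, so type $p$ has no best response with $\tau(p)=1$.

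\textbf{A smaller slip: the case $r_\infty=1$.} Your remark that $r_\infty=1$ forces $r_0=1$ is not right, since $r_\infty=1$ does not imply $r_0=1$. Under strict feasibility, rule out $r_\infty>r_0$ as follows. Fix $r'\in(r_0,r_\infty)$; then for large $n$,
\[
D_G(p_n,r^*(p_n))\ge D_G(p_n,r')\to C(r')>\gamma/\beta .
\]
This uses only continuity at the interior report $r'$. When $r_0=1$, your own observation that $q\le q(r_0)<1$ on all of $[0,1]$ already gives the contradiction.
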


\begin{corollary}[Closed form under monotone ramp slope]\label{cor:lambda-star}
Assume (A2) and strict feasibility. Then $\Lambda_c(G) = \Lambda^*(G) = (\beta/\gamma) G''(r_0)(r_0 - p_{\min})$ if and only if $C$ lies above its tangent at $r_0$ throughout $[p_{\min},r_0]$, that is $C(a) \ge C(r_0) - C'(r_0)(r_0-a)$ for all $a \in [p_{\min},r_0]$. A checkable \emph{sufficient} condition is that $C$ be convex on $[p_{\min},r_0]$, equivalently that the ramp slope $s(r) = (\beta/\gamma)G''(r)(r-p_{\min})$ be non-decreasing there, equivalently $G'''(r)(r-p_{\min}) + G''(r) \ge 0$ under (A3). Otherwise $\Lambda_c > \Lambda^*$ strictly. In all cases $\Lambda_c \ge 1/m$.
\end{corollary}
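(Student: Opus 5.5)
The statement to prove is Corollary~\ref{cor:lambda-star}. I would reduce it to elementary facts about the function $\phi(a) \coloneqq (C(r_0) - C(a))/(r_0 - a)$ on $[p_{\min}, r_0)$. By \eqref{eq:critical-slope-def}, $\Lambda_c = (\beta/\gamma)\sup_{a} \phi(a)$. Under (A2), $C$ is $C^1$ on the relevant range, so $\phi(a) \to C'(r_0)$ as $a \uparrow r_0$. Since $(\beta/\gamma)C'(r_0) = (\beta/\gamma)G''(r_0)(r_0 - p_{\min}) = \Lambda^*$, we always have $\Lambda_c \ge \Lambda^*$, as already recorded in Theorem~\ref{thm:critical-slope}(iii).

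For the characterization, rearrange the tangent inequality $C(a) \ge C(r_0) - C'(r_0)(r_0 - a)$. For $a < r_0$ it is exactly $\phi(a) \le C'(r_0)$. If it holds on all of $[p_{\min}, r_0]$, the supremum of $\phi$ equals its limiting value $C'(r_0)$, so $\Lambda_c = \Lambda^*$. Conversely, if it fails at some $a_1 \in [p_{\min}, r_0)$, then $\phi(a_1) > C'(r_0)$, so $\Lambda_c \ge (\beta/\gamma)\phi(a_1) > \Lambda^*$. This is the ``otherwise strictly'' clause. The endpoint $a = r_0$ is trivially an equality, so it is harmless.

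For the sufficient condition, convexity of $C$ on $[p_{\min}, r_0]$ means $C$ lies above each of its tangents there, in particular the one at $r_0$; for a $C^1$ function this is the standard supporting-line property, applied as a one-sided tangent at the endpoint. Convexity of a $C^1$ function is equivalent to $C'$ being non-decreasing. Since $C'(r) = G''(r)(r - p_{\min}) = (\gamma/\beta)s(r)$, this is the same as $s$ being non-decreasing. Under (A3), a differentiable $s$ is non-decreasing iff $s' \ge 0$, i.e.\ $G'''(r)(r - p_{\min}) + G''(r) \ge 0$.

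The final bound $\Lambda_c \ge 1/m$ follows by taking $a = p_{\min}$ in the supremum. Since $C(p_{\min}) = 0$ and $C(r_0) = \gamma/\beta$, we get $\phi(p_{\min}) = (\gamma/\beta)/m$, so $\Lambda_c \ge (\beta/\gamma)(\gamma/\beta)/m = 1/m$.

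The only step needing care is regularity at the right endpoint. If $r_0$ were $1$ under merely weak feasibility, $C'(r_0)$ could be infinite, as for the log score. Strict feasibility places $r_0$ in $(p_{\min}, 1)$, where (A2) gives a finite derivative, and this is what makes the limit $\phi(a) \to C'(r_0)$ valid. I expect no real obstacle beyond checking that point.
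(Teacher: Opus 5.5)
Your proposal is correct and follows the paper's proof essentially step for step. Both get $\Lambda_c \ge \Lambda^*$ from the chord slope tending to $C'(r_0)$ as $a \uparrow r_0$, rewrite the tangent inequality as a bound on every chord slope (which also gives the strict clause), use the supporting-line property of convex $C$ for the sufficient condition, and take the chord at $a = p_{\min}$ for $\Lambda_c \ge 1/m$. Your remark that strict feasibility is what keeps $C'(r_0)$ finite spells out a point the paper leaves implicit.
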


The sufficient condition holds for every scoring rule in practical use: for $G = p^k$ with $k > 1$, $s(r) \propto r^{k-2}(r-p_{\min})$ has derivative proportional to $r^{k-3}[(k-1)r - (k-2)p_{\min}] > 0$ on $(p_{\min},1)$; for Brier $s$ is affine and increasing; and the log score is done in Example~\ref{ex:log-score}. The closed form is thus available wherever a practitioner needs it, while the theorem's generality rests on the chord bound and never on the closed form.\footnote{The distinction is not vacuous. For the engineered generator $G''(r) = 2 + 100\exp(-((r-0.45)/0.02)^2)$ at $p_{\min} = 0.4$, $\gamma/\beta = 0.2$, one computes $r_0 = 0.55084$, $\Lambda^* = 1.5084$ and $\Lambda_c = 7.9902$, a ratio of $5.30$, with the binding chord anchored at $a^\star = 0.4326$: a truncated line of slope $1.02\,\Lambda^*$ has its zero crossing $r_0 - 1/(1.02\,\Lambda^*) = -0.099$ outside the report space, so it already pays $q(0) = 0.1525$ and falsely approves every type below $p_{\min}$, a mass of $0.40$, and falsely rejects a band immediately above it, roughly $0.45$ of the type mass in all, with the pointwise screening error reaching $0.80$ at its worst type; a line of slope $1.0005\,\Lambda_c$ screens exactly. That is why the construction of Theorem~\ref{thm:critical-slope}(iv) needs $\Lambda \ge \Lambda_c \ge 1/m$. Reproduction: $F$ uniform on $[0,1]$; $G$ and $G'$ recovered from $G''$ by cumulative trapezoidal quadrature; $D_G$ assembled from its defining formula; the agent's global argmax taken on a $200{,}001$-point report grid over $999$ types uniform on $[0.001,0.999]$. Such a generator needs a curvature spike strictly inside $(p_{\min},r_0)$ that falls away again before the reserve, which no standard score has.}

\emph{Intuition.} The binding constraint is how fast the rule may change, not how smoothly. Two facts pin it and leave its shape free: it can never pay more than the marginal type's calibration cost, or types below $p_{\min}$ would buy approval (part~(i)), and it must reach full approval exactly at $r_0$ (part~(ii)). The rule must then climb from a point underneath the ramp to the value $1$ at a fixed report within a fixed horizontal distance, and its path in between is irrelevant, because only the two endpoints matter and both are pinned. That is why the obstruction is a family of chords rather than a tangent, and why no cleverer non-linear path evades it.

\paragraph{Score curvature and the slope premium.}
The tangency slope factorizes exactly. Since $C(r_0) = \int_{p_{\min}}^{r_0} \kappa(t)(t-p_{\min})\,dt = \langle\kappa\rangle_\chi\, m^2/2$ for $\kappa \coloneqq G''$ and the distance-weighted probability measure $d\chi(t) = 2(t-p_{\min})\,dt/m^2$ on $[p_{\min},r_0]$, substituting into $\Lambda^* = \kappa(r_0)m/C(r_0)$ gives
\begin{equation}\label{eq:tilt-identity}
\Lambda^*(G) \;=\; \frac{2}{m}\cdot\frac{\kappa(r_0)}{\langle\kappa\rangle_\chi} .
\end{equation}
Hold the induced miscalibration $m$ fixed, so that the scores compared produce the same reported distortion; then the whole dependence on the score is the dimensionless ratio of its curvature \emph{at} the reserve to the distance-weighted average of its curvature \emph{below} it. Flat curvature, the Brier class, gives ratio $1$ and is the exact minimizer among all scores whose curvature is largest at the reserve, a class containing every power score with $k \ge 2$, their non-negative mixtures, and the log score exactly when $p_{\min} + r_0 \ge 1$; at lower reserves the log score's curvature peaks at $p_{\min}$ instead, and it demands strictly less slope than Brier. Scores that sharpen toward certainty demand strictly more slope, and the premium is this proportional gap: at $p_{\min} = 0.4$, $r_0 = 0.7$, the log score costs $11.6\%$ more slope than Brier and $G = p^4$ costs $34.2\%$. Gains in the other direction are capped by $\Lambda_c \ge 1/m$: choosing the score badly can make oversight arbitrarily harder, whereas choosing it well buys at most a factor of two.

\subsection{Below the Critical Slope}

Above the critical slope the design problem has no cost; below it, the cost is third order in the shortfall.

\begin{proposition}[Two-sided welfare bounds below $\Lambda_c$]\label{prop:below-threshold}
Assume (A3), strict feasibility, $s'(r_0) > 0$, and that the ramp slope $s$ is non-decreasing on $[p_{\min},1)$, which by Corollary~\ref{cor:lambda-star} puts $\Lambda_c(G) = \Lambda^*(G)$ and holds for every power score and for the log score. Write $\zeta \coloneqq 1 - \Lambda/\Lambda_c(G) \in (0,1)$, and let all statements be understood asymptotically as $\Lambda \uparrow \Lambda_c(G)$, equivalently as $\zeta \downarrow 0$. Then $\delta_\Lambda(G) \asymp \zeta^3$, in both directions. \emph{Upper bound:} writing $\hat q_\Lambda$ for the largest $\Lambda$-Lipschitz function below the ramp, $\delta_\Lambda(G) \le W^* - W(\hat q_\Lambda)$, and that gap is a constant times $\zeta^3(1 + O(\zeta))$. \emph{Lower bound:} there is $\zeta_0 > 0$ such that for $\zeta < \zeta_0$ every $q \in \mathcal L_\Lambda$, monotonicity not assumed, has $W^* - W(q)$ at least a constant times $\zeta^3(1 - O(\zeta))$. Appendix~\ref{app:below-threshold} derives both constants; they differ by the factor $f(p_{\min})/(f_{\min}\Gamma_1^2)$, where $\Gamma_1 \in (0,1)$ is the generator-dependent constant of the lower bound, equal to $\tfrac12$ under Brier, so the ratio is $4$ for Brier under uniform $F$.
\end{proposition}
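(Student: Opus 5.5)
The plan is to squeeze $\delta_\Lambda(G)$ between an explicit construction and a robust version of the cap-and-pinch argument of Theorem~\ref{thm:critical-slope}, reading every welfare statement off the gap identity~\eqref{eq:welfare-gap-identity}. Throughout I use the local geometry at the reserve. Since $s$ is non-decreasing and $s'(r_0)>0$, the ramp $C$ is convex on $[p_{\min},r_0]$ and strictly convex near $r_0$. Hence for $\Lambda=(1-\zeta)\Lambda^*$ the tangency point $r_1$ defined by $s(r_1)=\Lambda$ satisfies $r_0-r_1 = \zeta\Lambda^*/s'(r_0)+O(\zeta^2)$. A second-order Taylor expansion of $\mathrm{ramp}$ at $r_0$ (this is where (A3) enters) then gives the height deficit
\[
1-\mathrm{ramp}(r_1)-\Lambda(r_0-r_1) \;=\; \tfrac12 s'(r_0)(r_0-r_1)^2\bigl(1+O(\zeta)\bigr) \;=\; c_1\zeta^2\bigl(1+O(\zeta)\bigr).
\]
This deficit is the single quantity both bounds run off.

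\emph{Upper bound.} I will write $\hat q_\Lambda$ explicitly as the lower $\Lambda$-Lipschitz envelope $\inf_t\{\mathrm{ramp}(t)+\Lambda|r-t|\}$. Convexity of the ramp makes this the ramp itself on $[0,r_1]$, the line of slope $\Lambda$ on $[r_1,r_2]$, and $1$ above $r_2$, where $r_2 = r_1+(1-\mathrm{ramp}(r_1))/\Lambda$, so that $r_2-r_0 = c_1\zeta^2/\Lambda^*+O(\zeta^3)$.
\begin{itemize}
\item Since $\hat q_\Lambda\le\mathrm{ramp}$, part~(C1) of Lemma~\ref{lem:envelope} still applies. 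The argument there for types $p<p_{\min}$ uses only the cap, so these types report truthfully and are rejected; there are no false approvals.
\item Types $p\ge r_2$ report truthfully and are approved.
\item Types $p\in(p_{\min},r_2)$ either jump to $r_2$, paying $\beta D_G(p,r_2)$, or stop on the linear segment at the first-order point $G''(r)(r-p)=\Lambda\gamma/\beta$.
\end{itemize}
A direct comparison should show that the types receiving $\tau<1$ form a band of width $O(\zeta)$ below $r_2$, with $1-\tau(p)=O(\zeta^2)$ on it. Since $|p-p_{\min}|\approx m$ there, integrating against $A f$ gives a constant times $\zeta^3(1+O(\zeta))$.

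\emph{Lower bound.} Take any $q\in\mathcal L_\Lambda$, not necessarily monotone, and write $\eta\coloneqq\sup_{[p_{\min},r_0]}(q-\mathrm{ramp})_+$. There are two cases.
\begin{itemize}
\item If $\eta$ is large, I quantify part~(i) of Theorem~\ref{thm:critical-slope}. Where $q(r)>\mathrm{ramp}(r)+\eta$, every type $p\in(p_{\min}-c\eta,p_{\min})$ strictly prefers reporting $r$ to any report approved with probability zero, so it is falsely approved with $\tau\gtrsim\eta$. The weight $|p-p_{\min}|$ is linear, so the false-approval loss is of order $\eta^3$.
\item If $\eta$ is small, the Lipschitz constraint starting from $q(r_1)\le\mathrm{ramp}(r_1)+\eta$ forces $q\le 1-c_1\zeta^2+\eta$ on a neighbourhood of $r_0$ of width of order $\zeta^2$. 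Quantifying the pinch of part~(ii), each type in an $O(\zeta)$ band just above $r_0$ then either accepts approval below $1-c\zeta^2$ or pays calibration cost to climb past the neighbourhood. This yields a false-rejection loss of order $\zeta^3$ whenever $\eta\le c'\zeta^2$.
\end{itemize}
Balancing the two cases gives $W^*-W(q)\gtrsim\min\{\eta^3,\zeta^3\}$ when $\eta\le c'\zeta^2$, and at least $c\,\eta^3\ge c''\zeta^6$ otherwise. That last rate is too weak, so the $\eta$-analysis must be refined. A bump of height $\eta$ that raises $q$ near $r_0$ by $\eta$ must start at least $\eta/\Lambda$ to the left. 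The types it falsely admits then have $|p-p_{\min}|$ of order $\eta$ over a set of width of order $\eta$, giving a loss of order $\eta^2$ per unit density rather than $\eta^3$. Combined with the residual deficit $1-c_1\zeta^2+\eta$, this should optimize to order $\zeta^3$ again.

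The constant $\Gamma_1$ will come from this balancing, together with the worst case of $f$ bounded below by $f_{\min}$ on the lower side versus $f(p_{\min})$ on the upper. The main obstacle is exactly this lower-bound trade-off: bounding global best responses of arbitrary non-monotone Lipschitz rules uniformly. I would first reduce to monotone $q$ by showing that replacing $q$ with its running maximum weakly improves welfare under principal-favourable tie-breaking, and fall back on a direct pointwise comparison of the two candidate reports if that reduction fails.
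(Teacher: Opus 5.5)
\textbf{There is a genuine gap.} Your construction of $\hat q_\Lambda$ and the height deficit $c_1\zeta^2$ at $r_0$ are right. The agent-response step, however, puts the screening error in the wrong place, and that breaks both bounds.

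Under $\hat q_\Lambda$, a type $p=p_{\min}+x$ faces marginal calibration cost $s_p(r)\coloneqq(\beta/\gamma)G''(r)(r-p)$, which is non-decreasing in $r$. Its problem is therefore unimodal, and it stops at $r_c(p)\coloneqq\min\{r:s_p(r)\ge\Lambda\}$ whenever that point lies below $r_2$. Because $s_p(r_0)=\Lambda^*(1-x/m)$ exactly, this happens iff $x<m\zeta(1+O(\zeta))$. So the falsely rejected band sits just \emph{above $p_{\min}$}, not just below $r_2$.

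Types near $r_0$ have $s_p<\Lambda$ along the whole linear segment, climb to $r_2$, and receive $\tau=1$ exactly. Under Brier you can check this directly: $r_c(p)=p+m(1-\zeta)$, which lies below $r_2$ iff $x<m(2\zeta-\zeta^2)/(2(1-\zeta))$. Each band type loses $1-\tau=\Lambda(r_2-r_c(p))=K(\zeta-x/m)(1+O(\zeta))$ with $K=\Lambda^{*2}/s'(r_0)$. That loss is order $\zeta$, not $\zeta^2$, and it carries welfare weight $x=O(\zeta)$, not $m$. Your count of width, loss and weight reaches $\zeta^3$ only by coincidence. It would also attach the constant to the density near $r_0$ rather than to $f(p_{\min})$, so it cannot give the stated ratio $f(p_{\min})/(f_{\min}\Gamma_1^2)$.

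The lower bound fails for the same reason, and not only at the step you flag.

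\emph{No forced loss near $r_0$.} There is no forced false-rejection channel near $r_0$. $\hat q_\Lambda$ itself has $\eta=0$ and approves every type there fully. What the agent pays to climb does not enter~\eqref{eq:welfare-gap-identity}.

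\emph{Wrong scale for the case split.} Your split is anchored at $\eta\le c'\zeta^2$. For $c'\zeta^2<\eta\ll\zeta^{3/2}$ neither of your estimates reaches $\zeta^3$: the refined $\eta^2$ bound gives only order $\zeta^4$ when $\eta$ is of order $\zeta^2$.

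\emph{The missing idea} is a global stopping barrier for the band near $p_{\min}$. Take any $q\in\mathcal L_\Lambda$, monotone or not, and any $r'>r_c(p)$. Then $V(r';p)-V(r_c(p);p)=\gamma\int_{r_c(p)}^{r'}(q'-s_p)\le\gamma\int_{r_c(p)}^{r'}(\Lambda-s_p)<0$, because $|q'|\le\Lambda$ and $s_p$ strictly exceeds $\Lambda$ past $r_c(p)$ for small $\zeta$. Hence $r^*(p)\le r_c(p)$.

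Pair this barrier with the shifted cap obtained from the envelope inequality $\gamma q(r)-\beta D_G(p,r)\le\gamma\tau(p)$ at $p=p_{\min}-w$. Set $b=(\beta/\gamma)(G'(r_0)-G'(p_{\min}))$ and $\epsilon\coloneqq\inf_w[\tau(p_{\min}-w)+bw(1+O(w))]$. Then $q\le\mathrm{ramp}+\epsilon$ on $[p_{\min},r_0]$, and $\tau(p_{\min}-w)\ge\epsilon-bw(1+O(w))$. Band types therefore satisfy $1-\tau\ge K(\zeta-x/m)(1-O(\zeta))-\epsilon$, with a separate argument needed when $r^*(p)<p_{\min}$.

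Consequently false rejections cost order $(\zeta-\epsilon/K)_+^3$ and false approvals cost order $\epsilon^3/b^2$. The trade-off sits at $\epsilon$ of order $\zeta$, and minimizing over it yields $\Gamma_1^2\zeta^3$ with $\Gamma_1=K/(K+bm)$.

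No reduction to monotone rules is needed. That matters, because replacing $q$ by its running maximum is not obviously welfare-improving: raising $q$ at low reports can attract types below $p_{\min}$.
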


Three things are not claimed. The exact minimizer over $\mathcal Q_\Lambda$, hence the exact constant, is not identified. The bounds are asymptotic in $\zeta$ rather than uniform over $\Lambda \in (0,\Lambda_c)$, since far below the critical slope the affected band is no longer a neighborhood of $p_{\min}$, leaving only $0 < \delta_\Lambda \le W^* - u_d$. In the corner case $\Lambda_c > \Lambda^*$ the cubic exponent persists but its constant has not been derived in closed form.

\emph{Intuition.} A slope cap does not fail everywhere at once. It fails first for the types just above $p_{\min}$, which have the least to gain from climbing and stop short first when the rule stops rewarding the climb fast enough. Three powers of the shortfall accumulate: the affected band is $O(\zeta)$ wide, each type in it loses $O(\zeta)$ of approval probability, and the welfare weight $|p - p_{\min}|$ it carries is itself $O(\zeta)$ there. The exponent is therefore three: a rule a fraction of a percent below critical is almost indistinguishable from an exact one, and the slope budget matters only when it is tight.

\subsection{Distribution-Free Design}

\begin{proposition}[The reserve and the critical slope are distribution-free]\label{prop:f-free}
Under weak feasibility, $r_0(G)$ and $\Lambda_c(G)$ are functionals of $(G,p_{\min},\gamma/\beta)$ alone, as is $\Lambda^*(G)$ under (A2). They do not depend on $F$, and they do not depend on the payoffs $(u_s,u_f,u_d)$ except through the single number $p_{\min} = (u_d-u_f)/(u_s-u_f)$. Consequently, under (A2) and strict feasibility, the set of slope budgets that admit first-best screening is $[\Lambda_c(G),\infty)$, and it is distribution-free: the same approval rule screens correctly under every full-support type distribution. In the saturated regime $\gamma/\beta > D_G(p_{\min},1)$ that set is empty for every $F$, and the second-best intensity $\lambda_{\mathrm{sat}} = \beta D_G(p_{\min},1)/\gamma$ of Proposition~\ref{prop:lambda-sat} is again a functional of $(G,p_{\min},\gamma/\beta)$ alone, so distribution-freeness holds across the whole parameter space.
\end{proposition}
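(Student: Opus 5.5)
The proof is a bookkeeping argument: trace each named object back to its definition and note which primitives enter it. I will first fix the dependency list. The calibration cost of the marginal type is $C(r) = D_G(p_{\min},r)$, and it is built from $G$ and $p_{\min}$ alone. The reserve $r_0$ is defined as the unique root of $C(r_0) = \gamma/\beta$ in $(p_{\min},1]$. That root exists and is unique under weak feasibility because, by (A1), $C$ is continuous and strictly increasing on $[p_{\min},1]$ into $[0,+\infty]$ with $C(p_{\min}) = 0$. So $r_0$ is a functional of $(G,p_{\min},\gamma/\beta)$. The same holds for the ramp and for $s(r) = (\beta/\gamma)G''(r)(r-p_{\min})$, hence for $\Lambda^*(G) = s(r_0)$ under (A2), and for $\Lambda_c(G)$ through its chord formula \eqref{eq:critical-slope-def}. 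In the saturated regime, $\lambda_{\mathrm{sat}} = \beta D_G(p_{\min},1)/\gamma$ is visibly of the same form. The payoffs enter the definitions only through $p_{\min} = (u_d-u_f)/(u_s-u_f)$, and $F$ enters none of them. The net gain $\Pi(p) = A(p-p_{\min})$ carries $A$, but $A$ appears only as a positive scale in \eqref{eq:welfare-gap-identity}, so it cannot move the zero set of the welfare gap.

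Next comes the threshold set. I will fix a full-support $F$ with a continuous positive density and take (A2) with strict feasibility. Theorem~\ref{thm:critical-slope}(iv) states that $\delta_\Lambda(G) = \delta^{\mathcal L}_\Lambda(G) = 0$ if and only if $\Lambda \ge \Lambda_c(G)$. So the set of admissible budgets is $[\Lambda_c(G),\infty)$, and it is closed because $\Lambda_c$ is attained. Its endpoint does not depend on $F$, by the first step. I still need to make sure the hidden use of $F$ in part (iv) is harmless. The welfare equivalence runs through first-best screening, that is $e = 0$ $F$-a.e. by \eqref{eq:welfare-gap-identity}. Because $f > 0$, "$F$-a.e." means "Lebesgue-a.e." for every admissible $F$. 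So the screening criterion, and hence the characterization, is identical across distributions.

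For the claim that one rule works for every $F$, I will use the truncated line $q_\Lambda(r) = \operatorname{med}\{0,1-\Lambda(r_0-r),1\}$. It is defined from $r_0$ and $\Lambda$ only. Part (iv) shows it induces $\tau(p) = \mathbf 1\{p \ge p_{\min}\}$ for every $p \neq p_{\min}$ with a unique best response, and the agent's best response never references $F$. Hence it attains $W^*$ under every full-support $F$, and the same holds for the step rule and $q_G$ by Theorems~\ref{thm:reserve} and~\ref{thm:kink}.

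For the saturated regime, Theorem~\ref{thm:critical-slope}(i) says first-best screening forces weak feasibility. So when $\gamma/\beta > D_G(p_{\min},1)$, no rule screens perfectly whatever $F$ is, and the admissible set is empty. It remains to quote the form of $\lambda_{\mathrm{sat}}$ from Proposition~\ref{prop:lambda-sat}. The only step needing care, and the one I expect to be the main obstacle, is the compactness/upper semicontinuity remark behind part (iv): it has to use $f$ only through positivity and boundedness, not through its shape. I would check that the maximizer argument in that remark still goes through for an arbitrary continuous positive $f$. If it does, the conclusion is unaffected, because the "if" direction is constructive and the "only if" direction uses only the a.e. screening identity.
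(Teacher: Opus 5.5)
Your proposal is correct and follows the paper's own argument: $C$, $r_0$, $\Lambda^*$, $\Lambda_c$ and $\lambda_{\mathrm{sat}}$ are built from $(G,p_{\min},\gamma/\beta)$ alone, attainment holds type by type, and necessity uses $F$ only through full support. Your observation that $F$-a.e.\ coincides with Lebesgue-a.e.\ for every admissible $F$ is exactly that last point, and applied in every regime it also covers the weak-but-not-strict boundary $r_0=1$, which the paper treats separately via Corollary~\ref{cor:first-best}. The compactness step you flag as the main obstacle is not needed, because the proposition concerns budgets admitting first-best \emph{screening}, and in any case Theorem~\ref{thm:critical-slope}(iv) is already proved for an arbitrary admissible $F$, so there is nothing to re-check.
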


The proof is immediate from the definitions, but the content reverses the classical screening intuition.

\paragraph{The Myerson contrast.}
In \citet{myerson1981optimal} the reserve is a distributional object, solving $\psi(v) = 0$ for the virtual valuation $\psi(v) = v - (1-F(v))/f(v)$, so it moves with the hazard rate and a seller who misestimates $F$ sets the wrong reserve. Here it is behavioral: $r_0$ is the report at which the marginal type's calibration cost of pretending exhausts the approval prize. Myerson's reserve trades rent extraction against exclusion, a statement about the mass of types on either side of it; ours trades a known curvature against a known prize, a statement about one type's indifference, and a \emph{measured} cost of misreporting removes the distribution from the threshold. Three features of the analogy are exact, the threshold structure, the Stackelberg timing and the incentive-constrained optimization, and two diverge: the virtual-type construction has no distributional content here, and first best is attainable, whereas Myerson's optimum entails allocative inefficiency.

The distribution enters in exactly two places, both below the threshold or off it. First, the welfare weight in~\eqref{eq:welfare-gap-identity}: $F$ determines what a screening error costs, which is why the constants in Proposition~\ref{prop:below-threshold} involve $f(p_{\min})$, $f_{\min}$ and $f_{\max}$ and nothing else about $F$, and that the leading constant depends on the density \emph{at the marginal type $p_{\min}$} and nowhere else is itself a result, since the affected band shrinks to $p_{\min}$ as $\zeta \downarrow 0$. Second, the comparison between generators at a fixed slope budget: below $\Lambda_c$ which score to use is an $F$-weighted question through those constants, whereas \emph{whether} first best is attainable is not. The design question thus splits cleanly. Above $\Lambda_c$ every admissible score attains first best and the choice is made on other grounds, such as elicitation robustness or interpretability; below it, choose the score that minimizes the leading constant of the upper bound of Proposition~\ref{prop:below-threshold}. The location of oversight is distribution-free and only its price is distribution-dependent.

\begin{example}[The log score]\label{ex:log-score}
Take the negative-entropy generator $G(p) = p\ln p + (1-p)\ln(1-p)$, whose score is the log score and whose divergence is $D_G(p,r) = \mathrm{KL}(p\,\|\,r)$, with $G''(p) = 1/(p(1-p))$ unbounded at both endpoints. Feasibility is automatic and strict, since $D_G(p_{\min},1) = +\infty$ for every $p_{\min}$: under the log score there is no infeasible regime and no saturated regime at all. The reserve is interior and is the edge of the Kullback--Leibler ball of radius $\gamma/\beta$ around the marginal type, the unique $r_0 > p_{\min}$ with $\mathrm{KL}(p_{\min}\|r_0) = \gamma/\beta$. That equation is transcendental, but the small-prize expansion $r_0 = p_{\min} + \sqrt{2p_{\min}(1-p_{\min})\,\gamma/\beta} + O(\gamma/\beta)$ exhibits the Brier form with the Fisher-information scale $1/(p_{\min}(1-p_{\min}))$ in place of the constant $2$. Given $r_0$, $\Lambda^*_{\log} = (\beta/\gamma)(r_0-p_{\min})/(r_0(1-r_0))$; at $p_{\min} = 0.6$, $\gamma/\beta = 0.04$ this gives $r_0 = 0.730683$ and $\Lambda^* = 16.602$. Corollary~\ref{cor:lambda-star}'s sufficient condition also holds at every $(p_{\min},\gamma/\beta)$, since $\frac{d}{dr}\frac{r-p_{\min}}{r(1-r)} = \frac{r^2-2p_{\min}r+p_{\min}}{(r(1-r))^2}$ has numerator of discriminant $4p_{\min}(p_{\min}-1) < 0$, hence strictly positive, so $s$ is strictly increasing and the log score's critical slope is its tangency slope, reached with no bounded-curvature hypothesis anywhere.
\end{example}

\subsection{The Saturated Regime}

When the approval prize exceeds the report space's whole calibration budget, $\gamma/\beta > D_G(p_{\min},1)$, no reserve exists and no deterministic rule screens. The regime is not worthless, because partial approval is inside the stated design space.

\citet{loven2026behavioral} characterize the regime: they exhibit the boundary pooling at $r_0 = 1$, observe that randomized approval strictly improves on the deterministic second-best, and pose the closed form for arbitrary $F$ as open. Proposition~\ref{prop:lambda-sat} answers the exact-screening part of that question: it identifies the point of the second-best frontier at which the induced cutoff is exactly $p_{\min}$. The unconstrained optimum for arbitrary $F$ remains open.

\begin{proposition}[Exact screening at reduced intensity]\label{prop:lambda-sat}
Assume (A1) and $\gamma/\beta > D_G(p_{\min},1)$. Put $\lambda_{\mathrm{sat}} \coloneqq \beta D_G(p_{\min},1)/\gamma \in (0,1)$ and $q_{\mathrm{sat}} \coloneqq \lambda_{\mathrm{sat}}\,\mathbf 1\{r \ge 1\}$. Then $q_{\mathrm{sat}}$ induces $\tau(p) = \lambda_{\mathrm{sat}}\,\mathbf 1\{p \ge p_{\min}\}$ for every $p \ne p_{\min}$, so the screening boundary lands exactly at the first-best cutoff, and
\[
W(q_{\mathrm{sat}}) \;=\; u_d + \lambda_{\mathrm{sat}}\int_{p_{\min}}^1 \Pi(p) f(p)\, dp \;>\; u_d .
\]
This is optimal within the class of approval rules that reject almost every type below $p_{\min}$.
\end{proposition}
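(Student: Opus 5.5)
The plan has three parts: a hypothesis check against Lemma~\ref{lem:envelope} in the saturated regime, a direct evaluation of welfare from the identity~\eqref{eq:principal-utility}, and an optimality argument that uses the cap of Theorem~\ref{thm:critical-slope}(i) in its weak form.

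\emph{Induced screening.} Here $\bar r = 1$ and $\bar q = \lambda_{\mathrm{sat}}$. Condition (C2) asks that $q_{\mathrm{sat}}(1) = \mathrm{ramp}(1)$. We have $\mathrm{ramp}(1) = \min\{1,(\beta/\gamma)C(1)\} = \lambda_{\mathrm{sat}}$ because $\lambda_{\mathrm{sat}} < 1$, so (C2) holds. Condition (C1) asks that $q_{\mathrm{sat}} \le \mathrm{ramp}$ on $[0,1]$, and this is immediate since $q_{\mathrm{sat}}$ vanishes on $[0,1)$. Lemma~\ref{lem:envelope} then gives $\tau(p) = \lambda_{\mathrm{sat}}\mathbf 1\{p \ge p_{\min}\}$ for $p \ne p_{\min}$, with a unique best response. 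Types $p > p_{\min}$ jump to $r = 1$ because $D_G(p,1) < D_G(p_{\min},1) = \lambda_{\mathrm{sat}}\gamma/\beta$, by the monotonicity in $p$ recorded in (A1).

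One boundary subtlety needs attention. For the log score $C(1) = +\infty$, but that score never falls in the saturated regime, so $C(1)$ is finite here and the extended-sense continuity at $r = 1$ is all that is used. The best response exists because $q_{\mathrm{sat}}$ is upper semicontinuous.

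\emph{Welfare.} Substituting $\tau$ into~\eqref{eq:principal-utility} gives $W(q_{\mathrm{sat}}) = u_d + \lambda_{\mathrm{sat}}\int_{p_{\min}}^1 \Pi f$. This exceeds $u_d$ because $\Pi > 0$ on $(p_{\min},1]$, $f > 0$, and $\lambda_{\mathrm{sat}} > 0$. The last fact holds because $p_{\min} < 1$, which follows from $u_s > u_d$, and so $D_G(p_{\min},1) > 0$.

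\emph{Optimality within the class.} Take any $q$ with $\tau(p) = 0$ for $F$-a.e.\ $p < p_{\min}$. The weak form of Theorem~\ref{thm:critical-slope}(i), whose text explicitly allows this hypothesis, yields $\gamma q(r) \le \beta D_G(p_{\min},r)$ for all $r$. Since $r \mapsto D_G(p_{\min},r)$ is increasing on $[p_{\min},1]$, this bounds $q \le \lambda_{\mathrm{sat}}$ on all of $[0,1]$. Hence $\tau(p) \le \lambda_{\mathrm{sat}}$ for every $p$, and therefore
\[
W(q) = u_d + \int \tau \Pi f \le u_d + \lambda_{\mathrm{sat}}\int_{p_{\min}}^1 \Pi f .
\]
In this bound the region below $p_{\min}$ contributes nothing (the set where $\tau \neq 0$ there is $F$-null), and above it $\Pi \ge 0$.

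The main obstacle is confirming that the cap argument really goes through under only the a.e.\ rejection hypothesis. The idea is to argue by contradiction. Suppose $\gamma q(r) > \beta D_G(p_{\min},r)$ at some $r$. By joint continuity of $D_G$ in $p$, every type $p$ slightly below $p_{\min}$ then strictly prefers $r$ to any report with $q = 0$. That produces positive approval on a positive-measure set below $p_{\min}$, contradicting the hypothesis.

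The delicate point is to make sure such types cannot prefer some other report that also has $q = 0$. This is ruled out because $V(r;p) \ge \gamma q(r) - \beta D_G(p,r) > 0 \ge$ the value of any zero-approval report. Any best response of such a type therefore has $q > 0$, which gives $\tau(p) > 0$ on that set. I would cite Theorem~\ref{thm:critical-slope}(i) for this step and include only this short verification.
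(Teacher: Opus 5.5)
Your first two parts match the paper's proof: the check of (C1) and (C2) in Lemma~\ref{lem:envelope} with $\bar r=1$ and $\bar q=\lambda_{\mathrm{sat}}$, and the welfare substitution. The optimality step, however, has a gap. From the first cap $\gamma q(r)\le\beta D_G(p_{\min},r)$ you conclude $q\le\lambda_{\mathrm{sat}}$ on all of $[0,1]$, citing monotonicity of $D_G(p_{\min},\cdot)$ on $[p_{\min},1]$. That monotonicity says nothing about reports below $p_{\min}$, where $D_G(p_{\min},\cdot)$ is decreasing and can exceed $D_G(p_{\min},1)$.

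Take Brier with $p_{\min}=0.8$ and $\gamma/\beta=0.05$. The regime is saturated, since $D_G(p_{\min},1)=0.04<0.05$, and $\lambda_{\mathrm{sat}}=0.8$. Yet $D_G(p_{\min},0)=0.64$, so the first cap only requires $q(0)\le 12.8$, which is no constraint at all. So you have not shown $\tau(p)\le\lambda_{\mathrm{sat}}$ for $p>p_{\min}$. As far as your argument goes, a type above $p_{\min}$ could best-respond with a report below $p_{\min}$ that pays more than $\lambda_{\mathrm{sat}}$. Your verification paragraph has the same blind spot, because it establishes only the first cap.

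The missing ingredient is the second cap inequality of Theorem~\ref{thm:critical-slope}(i), $q\equiv 0$ on $[0,p_{\min})$. The weak form delivers it too, which is why the paper's proof invokes \emph{both} caps. It follows from your own contradiction argument with the anchor type moved from $p_{\min}$ to the report itself. Suppose $q(r)>0$ for some $r<p_{\min}$. Every type $p$ close enough to $r$ lies below $p_{\min}$ and has $V(r;p)=\gamma q(r)-\beta D_G(p,r)>0$, because $D_G(p,r)\to D_G(r,r)=0$. Hence its best response carries positive approval, and this happens on a set of positive measure, a contradiction.

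With both caps, split on the best response as the paper does:
\begin{itemize}
\item if $r^*(p)<p_{\min}$, then $\tau(p)=0$;
\item otherwise $\gamma\tau(p)\le\beta C(r^*(p))\le\beta C(1)$, because $C$ is increasing on $[p_{\min},1]$.
\end{itemize}
Either way $\tau\le\lambda_{\mathrm{sat}}$, and your welfare bound then goes through unchanged.
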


Probabilistic approval substitutes for an infeasible calibration budget: scaling the stake down to $\lambda_{\mathrm{sat}}\gamma$ scales the deterrence requirement down with it, until the cost of reporting $r = 1$ is again exactly what the marginal type will pay.

\section{Marketplace Operation}\label{sec:market}

Section~\ref{sec:behavioral} carried the paper's main argument: an oversight designer optimizing against a strategic reporter endogenously produces the non-affinity that destroys truthful reporting. This section applies the same elicitation-theoretic frame to a second and economically independent setting, a marketplace operator who executes an allocation-payment mechanism on behalf of bidders. The explicit model makes the abstract objects of Section~\ref{sec:model} concrete against a classical mechanism-design construction, and places them where the credibility question has already been posed and answered by other means, so the frame can be checked against known results rather than only against itself. The market credibility theory proper, including the trilemma between revenue optimality, agent incentive compatibility and operator credibility on polymatroidal feasible regions, is developed in a companion paper \citep{loven2026trilemma}, which we cite for everything market-specific.

The mapping to the credibility game of Definition~\ref{def:credibility-game} is: $\theta = \mathbf{b}$ (the true bid profile), $r = \hat{\mathbf{b}}$ (effective bids), $g(\mathbf{b}) = \mathbf{b}$ (faithful execution), $\Sbar = -\delta_{\mathrm{rep}}\|\hat{\mathbf{b}} - \mathbf{b}\|^2$ (reputational compliance), and $h = R$, where $R(\hat{\mathbf{b}}) = \sum_i p_i^*(\hat{\mathbf{b}})$ is total DSIC revenue under the \citet{archer2001truthful} payment identity:
\begin{equation}\label{eq:AT-payment}
p_i^*(\hat{\mathbf{b}}) = \hat{b}_i x_i^*(\hat{\mathbf{b}})
- \int_0^{\hat{b}_i} x_i^*(z, \hat{\mathbf{b}}_{-i})\, dz,
\end{equation}
with $x^*$ the Edmonds greedy allocation, exact on a polymatroid \citep{conforti1984submodular}. The combined objective~\eqref{eq:combined-objective} is then $V(\hat{\mathbf{b}}; \mathbf{b}, \gamma) = -\delta_{\mathrm{rep}}\|\hat{\mathbf{b}} - \mathbf{b}\|^2 + \gamma R(\hat{\mathbf{b}})$.

\begin{definition}[Marketplace Game]\label{def:marketplace-game}
A \emph{marketplace game} is $\calG_M = (\calN, \Theta_i, x, p, \nu, \calI_M)$: $n$ agents with types $v_i \in [0, \bar{v}]$ drawn independently from distributions $F_i$ with continuous densities $f_i > 0$; allocation $x(\mathbf{b}) \in P(\nu) = \{x \in [0,1]^n : \sum_{i \in T} x_i \leq \nu(T)\;\forall\, T \subseteq \calN\}$ for monotone submodular $\nu$ \citep[cf.][]{conforti1984submodular}; DSIC payments~\eqref{eq:AT-payment}; non-modularity gap $\kappa_{ij} = \nu(\{i\}) + \nu(\{j\}) - \nu(\{i,j\}) \geq 0$; and sealed-bid information structure $\calI_M$ where agent $i$ observes only $(b_i, x_i(\hat{\mathbf{b}}), p_i(\hat{\mathbf{b}}))$. Ties in the greedy ordering are broken by agent index, so $x^*$ and the payments~\eqref{eq:AT-payment} are single-valued functions of $\hat{\mathbf{b}}$. The DSIC equilibrium concept requires $b_i = v_i$ to be dominant given faithful execution.
\end{definition}

A notational caution for readers holding both papers: our $\gamma$ is the weight on the perturbation payoff, whereas \citet{loven2026trilemma} write $\gamma_{ij \mid S}$ for a non-modularity gap conditional on the set $S$ of agents already processed, of which our $\kappa_{ij}$ is the case $S = \emptyset$.

The operator's profitable deviation is an inflation of a lower-priority bid: raising $\hat{b}_j$ by $\delta$ while the greedy ordering is preserved leaves a higher-priority agent $i$'s allocation unchanged but contracts the integral in~\eqref{eq:AT-payment}, raising $i$'s payment by $\delta \cdot \kappa_{ij}$ on the welfare branch. For $n \geq 3$ the governing quantity is the conditional gap rather than the unconditional one, and the exact increment, its admissible window, its revenue-optimal branch and the welfare accounting of the resulting transfer are established in \citet{loven2026trilemma}; we use only that the increment is strictly positive on a set of profiles of positive measure. For $n = 2$ the section is self-contained: when $\nu(\{i\}) \leq 1$, Edmonds greedy and~\eqref{eq:AT-payment} give $R(\hat{\mathbf{b}}) = \kappa_{12}\min(\hat{b}_1, \hat{b}_2)$ exactly, so the increment is $\kappa_{12}$ on each ordering region.

Fix a greedy ordering $\sigma$, let $\calR_\sigma \subseteq [0,\bar{v}]^n$ be the closed set of profiles it ranks, on which $R$ is affine and hence $C^\infty$, and write $\calG_M^\sigma$ for the credibility game of Definition~\ref{def:credibility-game} with reports and types restricted to $\calR_\sigma$. The restriction makes Assumption~\ref{assum:regularity} checkable here: $\calR_\sigma$ is convex and compact with non-empty interior, so~(iii) holds; $g(\mathbf{b}) = \mathbf{b}$ is interior for every profile off the $\mu$-null boundary $\partial\calR_\sigma$, as the footnote to Definition~\ref{def:credibility-game} allows, so~(i) holds where $\Cbind$ is taken; and~(ii) is the hypothesis Corollary~\ref{cor:market} places on $\Sbar$, met by the quadratic compliance score with $-\nabla^2_{\hat{\mathbf{b}}}\Sbar = 2\delta_{\mathrm{rep}} I$. NT1 holds because faithful execution does not maximize revenue when some pair has a strictly positive gap. The gradient clause of NT2, the clause Lemma~\ref{lem:perturbation}(ii) consumes, holds because the one-sided derivative of $R$ in $\hat{b}_j$ is strictly positive throughout $\operatorname{int}\calR_\sigma$; the non-affinity clause does not, since $R$ is affine on each region and non-affine only across the $\mu$-null set of profiles at which the ordering changes, so Corollary~\ref{cor:market} invokes the lemma rather than Definition~\ref{def:nontrivial} in full. That also separates this instance from the benign case of Section~\ref{sec:model}: inside one region $R$ is affine and the induced shift is common to every type there, but the shift differs across regions, so the type-dependence comes from which region a type sits in rather than from curvature within one, and a receiver who cannot place $\mathbf{b}$ in a region cannot undo it. NT3 holds because under $\calI_M$ agent $i$'s signal $(x_i^*, p_i^*)$ has the same distribution whether the operator inflated $b_j$ or agent $j$ bid $\hat{b}_j$, provided the inflated profile still lies in $\calR_\sigma$, which the separation condition
\begin{equation}\label{eq:market-separation}
\min_{i \neq j} |b_i - b_j| \;>\; \frac{\gamma}{2\delta_{\mathrm{rep}}}\,\max_{i \neq j} \kappa_{ij}
\end{equation}
secures; the separated set carries positive measure whenever $(n-1)\gamma \max_{i \neq j} \kappa_{ij} < 2\delta_{\mathrm{rep}}\bar{v}$, and for $n \geq 3$ the conditional gap replaces $\kappa_{ij}$ on the right.

\begin{corollary}[Market credibility impossibility]\label{cor:market}
Let $\calG_M$ be a marketplace game under sealed-bid execution whose polymatroidal feasible region is non-modular, $\kappa_{ij} > 0$ for some pair, let $\calG_M^\sigma$ be its restriction to a greedy-ordering region, and let $\Sbar$ be strictly proper with $g(\mathbf{b}) = \mathbf{b}$ and satisfy Assumption~\ref{assum:regularity}(ii) on $\calR_\sigma$. Then NT1 and the gradient clause of NT2 hold on $\calG_M^\sigma$ as verified above, so Lemma~\ref{lem:perturbation}(ii) applies: no operator strategy is simultaneously faithful~(T) and rational~(R) under the combined objective $V$. On the separated set~\eqref{eq:market-separation} the optimal deviation stays inside the region, and NT3, delivered there by the sealed-bid structure, leaves it unpunishable and licenses the word impossibility in the sense reserved in Section~\ref{sec:fenchel}.
\end{corollary}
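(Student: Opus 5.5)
The plan is to reduce the corollary to Lemma~\ref{lem:perturbation}(ii) applied to the restricted game $\calG_M^\sigma$, and to check the lemma's hypotheses one at a time against the verifications already given in the text. Three structural inputs come first. The region $\calR_\sigma$ is convex and compact with non-empty interior, so it serves both as type space and as report space. The compliance score $\Sbar$ is strictly proper with $g(\mathbf b)=\mathbf b$. Assumption~\ref{assum:regularity} holds as follows: (i) off the $\mu$-null boundary $\partial\calR_\sigma$; (ii) by hypothesis, with $-\nabla^2\Sbar=2\delta_{\mathrm{rep}}I$ in the quadratic case; and (iii) by compactness.

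Next I check that $h=R$ is $C^1$ on $\operatorname{int}\calR_\sigma$. On that set the greedy order is fixed, so $x^*$ is constant in $\hat{\mathbf b}$ and the payment identity~\eqref{eq:AT-payment} makes $R$ affine there.

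The substantive step is the gradient clause. Fix a pair $(i,j)$ with $\kappa_{ij}>0$ and $i$ ranked above $j$ by $\sigma$. Raising $\hat b_j$ inside the region leaves $x_i^*$ unchanged but shrinks the integral term in $p_i^*$, which raises $p_i^*$ at rate $\kappa_{ij}$ (or the conditional gap when $n\ge3$, as taken from \citet{loven2026trilemma}). No other payment falls, since on the welfare branch the increment is nonnegative. For $n=2$ the formula $R=\kappa_{12}\min(\hat b_1,\hat b_2)$ gives this explicitly. Hence $\partial_{\hat b_j}R>0$ throughout $\operatorname{int}\calR_\sigma$, so $\nabla h(g(\mathbf b))\neq0$ on a set of positive measure. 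NT1 follows from the same computation: $R(\mathbf b)<R(\mathbf b+\delta e_j)\le\sup R$, so we may take $\Cbind=\operatorname{int}\calR_\sigma$, which has positive $\mu$-measure because the densities are positive. Lemma~\ref{lem:perturbation}(ii) then yields (T)$\wedge$(R) failing on a positive-measure set, for every $\gamma>0$. I invoke the lemma directly rather than Theorem~\ref{thm:impossibility}, because NT2's non-affinity clause fails inside a region.

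For the last sentence, on the separated set~\eqref{eq:market-separation} I bound the optimal deviation. Because $R$ is affine with gradient entries at most $\max\kappa_{ij}$ on the region, the first-order condition of $V=-\delta_{\mathrm{rep}}\|\hat{\mathbf b}-\mathbf b\|^2+\gamma R$ gives $\hat{\mathbf b}-\mathbf b=(\gamma/2\delta_{\mathrm{rep}})\nabla R$, each coordinate of which is at most $\gamma\max\kappa_{ij}/(2\delta_{\mathrm{rep}})$. The separation condition then keeps the ordering intact, so the deviation stays in $\calR_\sigma$. The step I expect to be the main obstacle is confirming that this interior critical point is the global maximizer over the unrestricted space. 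Crossing into another region could in principle raise $R$ discontinuously in slope, and I would handle this by noting that $R$ is continuous and piecewise affine with bounded slopes, so the quadratic penalty dominates any cross-region gain. Once that is settled, NT3 follows from the sealed-bid observation already given, since $(x_i^*,p_i^*)$ coincides with truthful bidding of $\hat b_j$.
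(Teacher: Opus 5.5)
Your reduction is the paper's own. You restrict to $\calG_M^\sigma$ so that $R$ is affine and Assumption~\ref{assum:regularity} becomes checkable. You get NT1 and the gradient clause of NT2 from the strictly positive derivative of $R$ in a lower-ranked bid. You apply Lemma~\ref{lem:perturbation}(ii) directly, because the non-affinity clause fails inside a region. Finally you read NT3 off the sealed-bid structure once \eqref{eq:market-separation} keeps the deviation in $\calR_\sigma$.

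The step you flag as the main obstacle, however, is both unnecessary and, as you propose to handle it, wrong.

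\emph{Unnecessary.} The corollary is a statement about $\calG_M^\sigma$, whose report space is $\calR_\sigma$. On that convex set $V=-\delta_{\mathrm{rep}}\|\hat{\mathbf b}-\mathbf b\|^2+\gamma R$ is a strictly concave quadratic. So once your first-order computation places $\mathbf b+(\gamma/2\delta_{\mathrm{rep}})\nabla R$ in $\calR_\sigma$, it is the unique maximizer there. NT3 then needs only the sealed-bid signal equivalence with the mimicking type $\mathbf b'=r^*(\mathbf b,\gamma)\in\calR_\sigma=\Theta$.

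\emph{Wrong as proposed.} Continuity, piecewise affinity and bounded slopes do not make a quadratic penalty dominate the gains from leaving a piece. Already in one dimension, $-\delta x^2+\gamma L(x-a)_+$ has a local maximum at $0$ with value $0$. It also has a global maximum at $x=\gamma L/(2\delta)$ with value $\gamma^2L^2/(4\delta)-\gamma La>0$ whenever $0<a<\gamma L/(4\delta)$. What would deliver the unrestricted claim is concavity of $R$. That holds for $n=2$, where $R=\kappa_{12}\min(\hat b_1,\hat b_2)$ makes $V$ strictly concave on the whole cube. It fails for $n\geq 3$: for a single unit, $R$ is the second-highest bid, which is not concave. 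Neither the corollary nor the paper claims the unrestricted version, so drop the step rather than repair it.

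There is a smaller slip in the same step. Your bound $\partial_{\hat b_m}R\leq\max_{i\neq j}\kappa_{ij}$ is right for $n=2$ but not in general. On a region, $\partial_{\hat b_m}R$ is a sum over the agents $k$ ranked above $m$ of the conditional gaps $\nu(S\cup\{k\})+\nu(S\cup\{m\})-\nu(S\cup\{k,m\})-\nu(S)$, with $S$ the other agents above $m$. These can exceed every unconditional gap. Take $\nu(\{i\})=1$, $\nu(\{1,2\})=\tfrac32$ and $\nu(\{1,3\})=\nu(\{2,3\})=\nu(\{1,2,3\})=2$, a monotone submodular capacity. The payment identity~\eqref{eq:AT-payment} then gives $R=\tfrac12\hat b_2+\hat b_3$ on $\{\hat b_1>\hat b_2>\hat b_3\}$. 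So $\partial_{\hat b_3}R=1>\tfrac12=\max\kappa_{ij}$. For $n\geq 3$ your per-coordinate bound therefore has to use the region's actual gradient rather than $\max\kappa_{ij}$. The paper signals this by stating that for $n\geq3$ the conditional gap replaces $\kappa_{ij}$ in \eqref{eq:market-separation}.
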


\begin{remark}[Relation to Akbarpour--Li and to the polymatroid trilemma]\label{rem:al-relationship}
\citet{akbarpour2020credible} establish their impossibility through extensive-form sequential rationality, and \citet{loven2026trilemma} extend that frame to polymatroidal feasible regions, proving under competitive priors a three-legged trilemma with the conditional gap as its primitive. Corollary~\ref{cor:market} is neither result and subsumes neither. It is elicitation-theoretic: the operator is scored by a strictly proper compliance rule, and undetectability is Bayesian indistinguishability of the agents' signal distributions (NT3), a strictly stronger requirement than the support-style safe-deviation property under which the extensive-form results are proved. Its conclusion is correspondingly weaker, two legs rather than three, on the welfare branch and without a competitive-prior hypothesis. The two share one economic force, that conflicting objectives destroy credibility, but neither implies the other.
\end{remark}

\paragraph{Compliance score: reduced-form status and form dependence.}
The quadratic compliance score $\Sbar = -\delta_{\mathrm{rep}}\|\hat{\mathbf{b}} - \mathbf{b}\|^2$ is a reduced-form assumption, not derived from Savage--McCarthy foundations. Properness can arise endogenously from reputation dynamics: the career-concerns logic of \citet{holmstrom1999managerial} and \citet{mailath2001pays} makes a patient platform's long-run objective loss-minimizing under Bayesian updating on outcomes, which by the de~Finetti--Savage characterization \citep{definetti1937prevision, savage1971elicitation} corresponds to maximizing a proper scoring rule. Formalizing that requires a state space, an outcome mapping and a loss function for the reputation game, which we do not supply, and we claim nothing beyond the reduced form. The form dependence is then explicit. Corollary~\ref{cor:market} does not use the quadratic shape; it uses Assumption~\ref{assum:regularity}(ii), which strict properness on its own does not deliver and which Lemma~\ref{lem:perturbation} needs. The quadratic form buys exactly one thing, a closed form for the size of the deviation through the constant Hessian $-2\delta_{\mathrm{rep}} I$ in~\eqref{eq:perturbation-formula}, and it buys it on the separated set~\eqref{eq:market-separation}: off that set the shifted report crosses into a neighboring greedy region, where $R$ switches branch, and the formula no longer locates the optimum. Smoothness, by contrast, is not cosmetic: a kinked score such as $-\delta_{\mathrm{rep}}\|\hat{\mathbf{b}} - \mathbf{b}\|_1$, strictly proper in the sense of Definition~\ref{def:strict-properness} though it admits no Savage generator (Remark~\ref{rem:savage-characterization}), pins faithful execution as the unique maximizer whenever $\gamma \sup \|\nabla R\|_\infty < \delta_{\mathrm{rep}}$, and the impossibility fails.

\paragraph{Resolution.} An ascending auction with public broadcast makes any operator deviation detectable, since agents observe the entire price path, and so eliminates NT3 at the cost of bid privacy; \citet{loven2026trilemma} execute that program for this market, proving the ascending clinching auction credible and incentive compatible under an explicit broadcast-channel model. The instance thus contributes a reading rather than a market result: operator credibility here is an elicitation problem, and its failure is the same argmax shift that Section~\ref{sec:behavioral} showed an oversight designer cannot escape by design.

\section{Discussion}\label{sec:discussion}

\subsection{The Endogeneity}

The mechanism is self-undermining under rational design: the principal's welfare-optimal approval rule is the object that makes truthful reporting suboptimal for the agent it screens. The perturbation is not an external shock; it is the principal's own optimum. Two familiar phenomena sit close to this one, and each differs in a specific way.

\emph{Goodhart's Law} \citep{goodhart1984problems} and \emph{the Lucas critique} \citep{lucas1976econometric}. Both warn that behavior shifts once the measure or the policy evaluating it is known; \eqref{eq:perturbation-formula} gives that shift in closed form, a quantitative instance of causal Goodhart \citep{manheim2018categorizing}, and here the regime is the design variable rather than a hazard to extrapolation.

The critical slope also reconciles two accounts of the agent that would otherwise read as competing modeling assumptions. Above $\Lambda_c(G)$ the best response sits at a corner and the induced screening is the first-best indicator exactly; below it the best response is interior and the screening is a smooth profile of partial approval probabilities (Theorem~\ref{thm:critical-slope}, Proposition~\ref{prop:below-threshold}). Which of the two an analyst observes is set by the slope budget the designer works under, not by an assumption about the agent's rationality.

\subsection{Screening-Sufficient and Calibration-Essential Applications}

Inflated reports are compatible with first-best decisions, so a principal need not always care about the inflation. The line is decidable at the modeling stage: whether its payoff is a function of $q(r)$ alone, as in~\eqref{eq:principal-utility}, or of $r$ as well.

\emph{Screening-sufficient applications} are of the first kind, and in them the report is an intermediate object no other party consumes. An approval gate for autonomous execution is one, where an agent requests permission to act without human review and the overseer grants or withholds it; admission to a marketplace is another, where an operator admits or excludes a participant on a self-reported capability score that is discarded once admission is settled (Section~\ref{sec:market}). In both, the threshold absorbs the predictable inflation, and, where the reserve is feasible, a principal who insists on truthful reports pays for them without buying anything (Theorem~\ref{thm:welfare-equivalence}, Appendix~\ref{app:benchmark}).

\emph{Calibration-essential applications} are those in which the report is itself an argument of welfare, by one of three routes. The number may be consumed by a downstream decision maker who acts on it rather than on the approval, as with a triage score read by a clinician: the overseer is then one consumer among several, and the others receive a shifted number with no threshold to reinterpret it. Reports may be aggregated, so that an inflation common to a pool of forecasters survives averaging instead of cancelling. Or the report may be priced into a premium or a capital charge, so that a shift in the report is a shift in money. In all three the report must be corrected rather than absorbed.

External regulation earns its cost in exactly the calibration-essential case, and in the screening-sufficient case only when oversight is not doing its job. The gain is the avoided harm from mis-approving types below $p_{\min}$, so regulation improves welfare when $\Pr(p < p_{\min}) \cdot \E[|\Pi(p)| \mid p < p_{\min}]$ exceeds the cost of regulating, and in the marketplace instance when the welfare loss from bid inflation under sealed-bid execution exceeds the cost of mandating ascending formats.

\subsection{The Price of the Restricted Architecture}

\paragraph{Relation to the revelation principle.}
Reports are inflated in equilibrium throughout this paper. The revelation principle says that every equilibrium outcome of every mechanism is reproduced by some direct mechanism under which truthful reporting is optimal, and it holds in this setting, with Theorem~\ref{thm:canonical-form} giving the canonical form of the direct mechanism it lands in. The relabelling changes the transfer and not only the message: the direct mechanism reproducing the outcome of an approval rule pays a proper score for the report and charges exactly $\gamma$ on the approval event, and that charge is the instrument the architecture of Definition~\ref{def:oversight-game} does not have. The architecture therefore restricts the instrument set rather than the message space, and it is not without loss of generality: its incentive-compatible members are exactly the rules constant in the report on $(0,1)$ (Theorem~\ref{thm:welfare-equivalence}(iii)), which screen nobody, so a principal that screens at all within it induces misreporting. The loss falls on truthfulness and not on welfare, since wherever the reserve is feasible the architecture attains what the unrestricted class attains.

Appendix~\ref{app:benchmark} prices the two deletions in coordinates and carries the algebra: freezing the generator is innocuous, and deleting the approval charge is the whole restriction. The extra instrument buys an honest message and no welfare, at a charge equal to the agent's own value of autonomy, and an agent with no budget and a reward channel fixed upstream by its training cannot be charged in that currency.

\subsection{Transfer to Deployed Systems}

The abstracted structure is recognizable in deployed systems: a model states a confidence, an overseer grants or withholds permission to act on it without human review, and the model's training rewards both accuracy and being permitted to act. Where a learned reward model scores both, $\gamma/\beta$ is the weight on being permitted to act relative to the weight on being accurate, and it is a design variable: lowering it, by weighting accuracy more heavily in high-stakes deployments or by making permission less rewarding, moves the reserve $r_0(G)$ toward $p_{\min}$ and shrinks the induced inflation without changing which requests are granted. The second lever is where the rule may be steep, and a cliff is not required: climbing at rate $\Lambda_c(G)$ suffices, and climbing more slowly loses welfare only at third order in the shortfall (Theorem~\ref{thm:critical-slope} and Proposition~\ref{prop:below-threshold}).

Three features do not transfer without further work. The overseer is assumed to know $\gamma$ and $\beta$, properties of the agent's training rather than of its own choice, so the reserve is only as accurate as the estimate of that ratio. The interaction is one-shot. The type is scalar with a binary outcome and a single decision, whereas a deployed agent reports on several dimensions at once; Appendix~\ref{sec:multi-dim} records what carries to $d$ dimensions, where the reserve is a level set of the divergence rather than a point.

\subsection{Open Questions}

Five directions are open above: describing the slope-premium-minimizing class in terms of the score, not its generator's curvature; the exact constant in the cubic law, the value of an obstacle problem with an endogenous obstacle, because a rule may rise above the ramp at a price paid by types below $p_{\min}$ (Proposition~\ref{prop:below-threshold}); whether $q_{\mathrm{sat}}$ beats every rule approving types below $p_{\min}$ (Proposition~\ref{prop:lambda-sat}); the $\R^d$ analogue of the critical slope (Appendix~\ref{sec:multi-dim}); and which non-proper losses, paired with which approval rules, dominate the proper-score architecture, since a principal willing to abandon properness gives up the elicitation anchor that defines the design problem here.

\begin{openquestion}[Dynamic monitoring]\label{oq:dynamic-monitoring}
The model is static. Repeated interaction and reputation dynamics \citep{fudenberg1989reputation} may partially restore truthfulness: an overseer able to credibly threaten revocation of autonomy on detecting miscalibration lowers the agent's effective value of approval by an amount increasing in the discount factor. The monitoring precision of Remark~\ref{rem:detection-horizon} supplies the detection side of that threat; the equilibrium of the repeated game, and whether the critical slope falls as detection improves, are not treated here.
\end{openquestion}

\section{Conclusion}\label{sec:conclusion}

A principal who screens an agent on a scored report cannot use an affine approval rule, because an affine rule cannot separate types. The welfare-optimal rule is therefore non-affine whenever the approval prize is feasible for the marginal type (Theorem~\ref{thm:optimal-nonaffine}), and a non-affine, report-dependent payoff displaces the agent's optimal report away from the truth. The design that maximizes the principal's welfare is the design that destroys the calibration of the reports it reads.

The distortion is predictable, and predictable distortion is absorbed rather than fought. Setting the bar at the reserve report $r_0(G)$, where the marginal type's calibration cost of pretending exhausts the approval prize, screens every type correctly at no welfare cost even though nothing undoes the inflation (Theorem~\ref{thm:reserve}); an unrestricted mechanism buys back the truthful message and nothing else, at a charge equal to the agent's own value of autonomy (Theorem~\ref{thm:welfare-equivalence}). Steepness rather than smoothness binds the implementation. A Lipschitz rule with a single kink induces the same outcome for every type but one, so no regulator needs to implement a cliff, while under strict feasibility no continuously differentiable rule attains the first best exactly (Theorem~\ref{thm:kink}); first-best screening is attainable within a slope budget $\Lambda$ if and only if $\Lambda \geq \Lambda_c(G)$ (Theorem~\ref{thm:critical-slope}), and below that budget the welfare cost is cubic in the shortfall (Proposition~\ref{prop:below-threshold}). The reserve and the critical slope are functionals of the score, the first-best threshold and the agent's price of approval alone; the type distribution enters only the price of an insufficient slope budget (Proposition~\ref{prop:f-free}).

Two instances carry the analysis: AI agent oversight, where the principal's approval rule is itself the perturbation, and marketplace operation, where non-modular capacity makes revenue non-affine in the reported bids under sealed-bid execution. The same structure appears wherever a scored report is produced by the party the score is used to authorize, as with credit rating agencies and financial auditors. The oversight instance develops the binary-outcome scalar-type setting and Appendix~\ref{sec:multi-dim} records what carries to $d$-dimensional types; the marketplace instance is $n$-dimensional and uses only the impossibility half, which is proved in $\R^d$.

\section*{Notation Summary}

Symbols local to the appendices are defined where they occur and are not repeated here.

\begin{center}
\small
\renewcommand{\arraystretch}{0.92}
\begin{tabular}{lp{8cm}l}
\hline
Symbol & Meaning & Domain \\
\hline
$\Theta$, $\calR$, $\Omega$ & Type, report and outcome spaces & All \\
$S(r,\omega)$, $\Sbar$ & Scoring mechanism and its expected value (strictly proper) & All \\
$g(\theta)$ & Truthful report function & All \\
$h(r)$ & Perturbation payoff & All \\
$\gamma$ & Perturbation weight & All \\
$V(r;\theta,\gamma)$ & Reporter's combined objective & All \\
$r^*(\theta,\gamma)$ & Reporter's optimal report & All \\
$\mu$ & Common prior on $\Theta$ & All \\
$\Cbind$ & Binding conflict set (NT1): types with $h(g(\theta)) < \sup_r h(r)$ & All \\
$\calG$ & Credibility game & All \\
$G$, $G''$ & Strictly convex generator of the score (Savage--McCarthy), and its curvature & Scoring rules \\
$D_G(p,r)$ & Bregman divergence of $G$: calibration loss of reporting $r$ at type $p$ & Scoring rules \\
$\kappa_{ij}$ & Non-modularity gap of the polymatroid capacity function & Market \\
$p$ & True success probability (agent type) & AI oversight \\
$q(r)$ & Approval rule & AI oversight \\
$\beta$ & Accuracy weight (price per unit of Bregman cost) & AI oversight \\
$p_{\min}$ & First-best threshold: the principal's break-even type & AI oversight \\
$u_s, u_f, u_d$ & Principal's utilities (success, failure, delegation) & AI oversight \\
$A$ & Stake of the decision, $A \coloneqq u_s - u_f > 0$ & AI oversight \\
$F$, $f$ & Type distribution and its density & AI oversight \\
$C(r)$ & Marginal type's calibration cost, $D_G(p_{\min},r)$ & AI oversight \\
$r_0(G)$, $m$ & Reserve report, $C(r_0) = \gamma/\beta$, and the miscalibration it induces, $m = r_0(G) - p_{\min}$ & AI oversight \\
$\mathrm{ramp}(r)$, $s(r)$ & Bregman ramp $\min\{1,(\beta/\gamma)C(r)\}$ above $p_{\min}$, and its uncapped slope & AI oversight \\
$\Lambda$ & Slope budget: the largest slope the approval rule may have & AI oversight \\
$\Lambda^*(G)$, $\Lambda_c(G)$ & Tangency slope of the ramp at the reserve, and critical slope: least budget attaining first best & AI oversight \\
$\mathcal Q_\Lambda$, $\mathcal L_\Lambda$ & $\Lambda$-Lipschitz rules, monotone and unrestricted & AI oversight \\
$\delta_\Lambda(G)$ & Welfare gap under a slope budget $\Lambda$ & AI oversight \\
$\lambda_{\mathrm{sat}}$ & Approval intensity in the saturated regime $\gamma/\beta > D_G(p_{\min},1)$ & AI oversight \\
$W(q)$, $W^*$ & Principal's expected utility under $q$, and its first-best value & AI oversight \\
$\Pi(p)$ & Net gain to the principal from approving type $p$ & AI oversight \\
$\tau(p)$ & Induced screening, $q(r^*(p))$ & AI oversight \\
\hline
\end{tabular}
\end{center}

\begin{appendix}

\section{Proofs}\label{app:proofs}

Proofs in main-text order. Notation, the assumption levels (A1)--(A3), the weak and strict feasibility conditions, the global-best-response convention and the two $D_G$ monotonicities are as set out in Section~\ref{sec:behavioral}, with $\partial_rD_G(p,r)=G''(r)(r-p)$ and $C'(r)=G''(r)(r-p_{\min})$ under (A2); each proof names the weakest hypothesis it needs, and every use of strict rather than weak feasibility is flagged where it occurs. Argmax ties are broken as Section~\ref{sec:behavioral} fixes them, which matters only in the two places that compare welfare across rules. \suppnote

\begin{lemma}[Pivot identity]\label{lem:pivot}
Under (A1), $-\beta D_G(p,r)+\gamma\cdot\frac{\beta}{\gamma}C(r)=-\beta\Phi(r;p)$ for all $p,r$, where $\Phi(r;p)\coloneqq G(p)-G(p_{\min})-G'(r)(p-p_{\min})$. In $r$, $\Phi(\cdot\,;p)$ is strictly decreasing for $p>p_{\min}$, strictly increasing for $p<p_{\min}$, and constant and $0$ for $p=p_{\min}$; and $\Phi(p_{\min};p)=D_G(p,p_{\min})$.
\end{lemma}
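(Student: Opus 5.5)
The plan is a direct algebraic expansion followed by a sign argument on the one $r$-dependent term.

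\emph{The identity.} Write $D_G(p,r) = G(p)-G(r)-G'(r)(p-r)$ and $C(r) = D_G(p_{\min},r) = G(p_{\min})-G(r)-G'(r)(p_{\min}-r)$. The middle term is $\gamma\cdot(\beta/\gamma)C(r) = \beta C(r)$, so the left-hand side equals $-\beta\,[D_G(p,r)-C(r)]$. In the difference the terms $-G(r)$ and $G'(r)r$ cancel, leaving
\[
D_G(p,r)-C(r) \;=\; G(p)-G(p_{\min})-G'(r)(p-p_{\min}) \;=\; \Phi(r;p).
\]
This holds for $r\in(0,1)$, where (A1) makes $G'$ finite. At the endpoint reports both sides are defined by the limits fixed in (A1), so the identity carries over wherever those limits are finite. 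The log score, with $C(1)=+\infty$, needs the extended-value reading; I would flag this but not belabour it, since every later use is on $[p_{\min},r_0]$ with $r_0<1$ or with finite values.

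\emph{Monotonicity.} The only $r$-dependence of $\Phi(\cdot\,;p)$ is through $-G'(r)(p-p_{\min})$. Strict convexity of $G$ under (A1) makes $G'$ strictly increasing on $(0,1)$. Hence $\Phi(\cdot\,;p)$ is strictly decreasing when $p>p_{\min}$ and strictly increasing when $p<p_{\min}$. When $p=p_{\min}$ the coefficient vanishes and $\Phi\equiv G(p_{\min})-G(p_{\min})=0$. Strict monotonicity extends to the closed interval by continuity of the extended $G'$ limits. This is where I expect the only care to be needed: the sign must survive at endpoint reports, where $G'$ may be infinite.

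\emph{Evaluation at $p_{\min}$.} Substituting $r=p_{\min}$ gives $\Phi(p_{\min};p) = G(p)-G(p_{\min})-G'(p_{\min})(p-p_{\min}) = D_G(p,p_{\min})$ by the definition of the Bregman divergence.

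The obstacle is thus only the endpoint bookkeeping; the rest is a one-line cancellation. This identity is what makes the ramp a pivot: on it every type's payoff is the affine-in-$G'(r)$ quantity $\Phi$, which is the fact the envelope argument of Lemma~\ref{lem:envelope} uses.
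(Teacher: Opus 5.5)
Your proposal is correct and follows the paper's proof: expand both Bregman divergences, cancel the $G(r)$ and $G'(r)r$ terms, read the monotonicity off $-G'(r)(p-p_{\min})$ with $G'$ strictly increasing, and substitute $r=p_{\min}$. Your endpoint caveat is extra care that the paper's one-line proof omits, since at $r=1$ under the log score the identity ``for all $p,r$'' reads $\infty-\infty$. One small correction there: strictness at an endpoint does not come from continuity, which only preserves weak inequalities; it comes from the monotone limit of a strictly increasing $G'$ strictly exceeding every interior value.
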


\begin{proof}[Proof sketch]
Expand the two Bregman divergences and cancel the terms in $G(r)$ and $G'(r)r$; the verification is in \suppsec{S.1}.
\end{proof}

\subsection{Lemma~\ref{lem:envelope}: envelope sufficiency}\label{app:envelope}

\begin{proof}[Proof sketch]
(C1) and (C2) reduce the agent's problem to the envelope $\bar V(\cdot\,;p)$, whose shape in $r$ is read off the pivot identity of Lemma~\ref{lem:pivot}, and the three type regimes are then checked in turn; the verification is in \suppsec{S.2}.
\end{proof}

\subsection{Lemma~\ref{lem:perturbation}: the residual types}\label{app:perturbation}

Parts~(i) and~(ii) are the classical layer, attributed in the lemma to \citet{boutilier2012eliciting} and proved by the sketch of Section~\ref{sec:fenchel}. Ours is the refinement below: the \emph{residual} types $\theta_0\in\Cbind$ with $\nabla h(g(\theta_0))=0$, on which (ii) is silent, are governed by a finite threshold, not excluded by a measure-zero argument.

\begin{proof}[Proof sketch]
The second-order expansion of $V$ at $g_0$ is governed by $M_\gamma\coloneqq H_S(\theta_0)+\gamma\nabla^2h(g_0)$, and two sub-cases exhaust it: if $\nabla^2h(g_0)$ has a positive eigenvalue then $M_\gamma$ acquires a positive direction at a finite $\gamma$ and $g_0$ stops being even a local maximizer; if $\nabla^2h(g_0)\preceq0$ then $g_0$ stays a local maximizer at every $\gamma$ and only a global comparison against an interior $r_1$ with $h(r_1)>h(g_0)$, available by NT1, decides, at the finite threshold $\Delta S/\Delta h$. The full verification is in \suppsec{S.3}.
\end{proof}

\subsection{Theorem~\ref{thm:optimal-nonaffine}}\label{app:optimal-oversight}

\begin{proof}[Proof sketch]
The cap of Theorem~\ref{thm:critical-slope}(i) forces an affine rule to vanish on $[0,p_{\min}]$, hence to vanish identically, hence to approve nobody, which is the regime-free screening half. Under weak feasibility the step rule of Theorem~\ref{thm:reserve}(i) attains $W^*$ while $\tau\Pi\leq\max(\Pi,0)$ caps every rule at $W^*$, so an optimum exists, attains first-best screening and is therefore not affine. The full verification is in \suppsec{S.4}.
\end{proof}

\subsection{Theorem~\ref{thm:reserve}}\label{app:reserve}

\begin{proof}[Proof sketch]
For~(i), $C$ is continuous and strictly increasing on $[p_{\min},1)$ from $C(p_{\min})=0$ with $\lim_{r\uparrow1}C(r)\geq\gamma/\beta$, so the intermediate value theorem gives a unique $r_0$, the limit rather than an endpoint value admitting the log score; the step rule at $r_0$ then satisfies (C1) and (C2) of Lemma~\ref{lem:envelope}. For~(ii), imposing $r_0=p_{\min}+c\sqrt{\gamma/\beta}$ across the whole feasible range forces $C(r)=(r-p_{\min})^2/c^2$, and integrating that identity makes the difference quotient of $G$ affine, hence $G$ quadratic; the converse is a computation. The full verification is in \suppsec{S.5}.
\end{proof}

\subsection{Theorem~\ref{thm:kink}}\label{app:kink}

\begin{proof}[Proof of part~(i)]
$q_G = \mathrm{ramp}$ is non-decreasing, $[0,1]$-valued and, under (A2), Lipschitz with constant $\sup_{[p_{\min},r_0]}s$; it satisfies (C1) and (C2) of Lemma~\ref{lem:envelope} with equality throughout, which gives every claim of part~(i), the argmax at $p_{\min}$ being the whole interval $[p_{\min},r_0]$ because $q_G$ agrees with the ramp there. Comparing the two argmax descriptions shows that $q_G$ and the step rule of Theorem~\ref{thm:reserve} induce the same outcome for every type except $p_{\min}$.
\end{proof}

\begin{proof}[Proof of part~(ii)]
Assume (A2) and \emph{strict} feasibility and suppose $q\in C^1([0,1],[0,1])$ attains first-best screening. Strict feasibility puts $r_0$ in $(p_{\min},1)$, and Theorem~\ref{thm:critical-slope}(ii), applicable since $q$ is continuous, gives $q(r_0)=1$; as $q\leq1$ and $r_0$ is interior, $r_0$ is an interior maximizer and $q'(r_0)=0$, while Theorem~\ref{thm:critical-slope}(iii) gives $q'(r_0^-)\geq\Lambda^*=(\beta/\gamma)G''(r_0)(r_0-p_{\min})>0$. Differentiability at $r_0$ alone suffices, by a chain that bypasses the continuity hypothesis of (ii) and (iii): part~(i) needs only (A1) and gives $q\leq\mathrm{ramp}<1$ below $r_0$; part~(ii) without continuity supplies $r_n\downarrow r_0$ with $q(r_n)=1$, and differentiability at $r_0$ gives continuity there, so $q(r_0)=1$; then every left difference quotient at $r_0$ is at least $\Lambda_c(G)>0$ and every right one at most $0$, so $q$ is not differentiable at $r_0$ after all.
\end{proof}

Strict feasibility enters exactly once, and it cannot be weakened: at $\gamma/\beta=D_G(p_{\min},1)$ the reserve is $r_0=1$, not an interior maximizer, and the ramp is itself $C^1$ on $[0,1]$ and attains first best by part~(i).

\begin{proof}[Proof of part~(iii)]
Assume (A2) and strict feasibility. Fix a $C^\infty$ transition $\varphi:\R\to[0,1]$ with $\varphi\equiv0$ on $(-\infty,0]$, $\varphi\equiv1$ on $[1,\infty)$, $\varphi'\geq0$, $c_\varphi\coloneqq\|\varphi'\|_\infty\leq2$, and for $\varepsilon\in(0,m)$, $m\coloneqq r_0-p_{\min}$, put $q_\varepsilon(r)\coloneqq\mathrm{ramp}(r)+(1-\mathrm{ramp}(r))\varphi((r-r_0+\varepsilon)/\varepsilon)$. Then $q_\varepsilon=\mathrm{ramp}$ on $[0,r_0-\varepsilon]$, $q_\varepsilon\equiv1$ on $[r_0,1]$, $\mathrm{ramp}\leq q_\varepsilon\leq1$, and $q_\varepsilon'=\mathrm{ramp}'(1-\varphi)+(1-\mathrm{ramp})\varphi'/\varepsilon\geq0$. Every derivative vanishes at $r_0^-$, each Leibniz term carrying $\varphi^{(k)}(1)=0$ with $k\geq1$, or $(1-\varphi)(1)=0$, or $(1-\mathrm{ramp})(r_0)=0$; and $\mathrm{ramp}'(p_{\min})=0$ under (A2), so $q_\varepsilon\in C^1$ across $p_{\min}$ too, with $\|q_\varepsilon'\|_\infty\leq\sup s+c_\varphi(1-\mathrm{ramp}(r_0-\varepsilon))/\varepsilon\leq3\sup_{[p_{\min},r_0]}s$ uniformly in $\varepsilon$.

Types $p\geq r_0$ contribute nothing, since $q_\varepsilon(p)=1$ and $V(r;p)\leq\gamma-\beta D_G(p,r)<\gamma=V(p;p)$ for $r\neq p$. Types $p\in(p_{\min},r_0)$ are the dominant channel: for $r\in[p_{\min},r_0-\varepsilon]$, $q_\varepsilon=\mathrm{ramp}=(\beta/\gamma)C$ and $V(r;p)=-\beta\Phi(r;p)<-\beta\Phi(r_0;p)=V(r_0;p)$ by Lemma~\ref{lem:pivot}, while for $r<p_{\min}$ the ramp vanishes and the low-report bound of Lemma~\ref{lem:envelope}'s proof gives $V(r;p)=-\beta D_G(p,r)\leq-\beta\Phi(p_{\min};p)<V(r_0;p)$; so $r^*(p)>r_0-\varepsilon$ and hence either $\tau(p)=1$, or $r^*\in(r_0-\varepsilon,r_0)$ and optimality against the deviation $r=r_0$ gives, with $\bar G''_\varepsilon\coloneqq\max_{[r_0-\varepsilon,r_0]}G''$,
\[
\gamma(1-\tau(p))\leq\beta[D_G(p,r_0)-D_G(p,r^*)]=\beta\!\int_{r^*}^{r_0}\!G''(z)(z-p)\,dz\leq\beta\bar G''_\varepsilon\varepsilon(r_0-p),
\]
which integrated against the weight of~\eqref{eq:welfare-gap-identity} contributes $Af_{\max}(\beta/\gamma)\bar G''_\varepsilon\varepsilon\int_0^mx(m-x)dx=Af_{\max}\beta\bar G''_\varepsilon m^3\varepsilon/(6\gamma)$. Types $p<p_{\min}$ are subdominant: $q_\varepsilon=\mathrm{ramp}$ on $[0,r_0-\varepsilon]$ gives $V(r;p)<0$ there for $r\neq p$ by the low-type branch of Lemma~\ref{lem:envelope}'s proof, applicable because (C1) holds with equality on that interval, and $V(r;p)\leq\gamma-\beta D_G(p,r_0)<0$ for $r\geq r_0$, so a false approval needs $\sup_{(r_0-\varepsilon,r_0)}V>0$, hence by $q_\varepsilon\leq1$ needs $\beta D_G(p,r_0-\varepsilon)<\gamma$, i.e. $\rho(p)>r_0-\varepsilon$ where $D_G(p,\rho(p))=\gamma/\beta$, i.e. $p_{\min}-p<c_\rho\varepsilon$ with $c_\rho$ the Lipschitz constant of the \emph{inverse} map near $r_0$, since $\rho$ is increasing with $\rho(p_{\min})=r_0$ and the condition inverts to $p>\rho^{-1}(r_0-\varepsilon)$; implicit differentiation of $D_G(p,\rho(p))=\gamma/\beta$ gives $c_\rho=1/\rho'(p_{\min})=G''(r_0)(r_0-p_{\min})/[G'(r_0)-G'(p_{\min})]$ under (A2), which is $1$ under Brier and differs from $\mathrm{Lip}(\rho)$ whenever the curvature is not constant. Bounding $\tau\leq1$ there contributes $Af_{\max}c_\rho^2\varepsilon^2/2$. So $W^*-W(q_\varepsilon)\leq Af_{\max}\beta\bar G''_\varepsilon m^3\varepsilon/(6\gamma)+Af_{\max}c_\rho^2\varepsilon^2/2$, the infimum over $C^1$ rules is $0$, and by part~(ii) it is not attained. If $G\in C^\infty$ the family may be taken in $C^\infty$ by multiplying the ramp by a smooth cutoff supported in $[p_{\min},1]$, which only lowers $q_\varepsilon$ below the ramp near $p_{\min}$.
\end{proof}

\subsection{Theorem~\ref{thm:critical-slope}}\label{app:critical-slope}

\begin{proof}[Proof of part~(i), the cap]
Assume (A1) and let the measurable $q$ attain first-best screening, with $N\subset(0,p_{\min})$ the null set off which $\tau=0$. Weak feasibility follows rather than being assumed, so $r_0$ has a referent: some $p>p_{\min}$ has $q(r^*(p))=1$, and the cap below then gives $C(1)\geq C(r^*(p))\geq\gamma/\beta$. For $p\in(0,p_{\min})\setminus N$ and any $r$, global optimality gives $-\beta D_G(p,r^*(p))+\gamma q(r^*(p))\geq-\beta D_G(p,r)+\gamma q(r)$, and the left side equals $-\beta D_G(p,r^*(p))\leq0$ because $\tau(p)=0$; so $\gamma q(r)\leq\beta D_G(p,r)$ for every $r$. Letting $p\uparrow p_{\min}$ along that full-measure set and using continuity of $p\mapsto D_G(p,r)$ gives $\gamma q\leq\beta C$ pointwise, whence $q(p_{\min})=0$, $q\leq\mathrm{ramp}$ on $[p_{\min},1]$, and $q<1$ on $[p_{\min},r_0)$. Taking instead $p\to r$ along the same set, for a fixed $r<p_{\min}$, gives $\gamma q(r)\leq\beta D_G(r,r)=0$, so $q\equiv0$ on $[0,p_{\min})$ as well: a rule that pays anything at a report below $p_{\min}$ is bought by the types near that report, all of which lie below $p_{\min}$. Hence $q\leq\mathrm{ramp}$ on all of $[0,1]$. The argument never uses where a low type best-responds, only that $\tau=0$ caps its equilibrium payoff at zero; and the limit is a limit of a pointwise inequality in a parameter entering $D_G$ continuously, so no continuity of $q$ is assumed anywhere. So both cap inequalities hold for every measurable $q$ with $\tau=0$ for $F$-a.e. $p<p_{\min}$, whether or not it attains first-best screening; Proposition~\ref{prop:lambda-sat} uses them in that weaker form, the saturated regime leaving $r_0$ and the final clause of part~(i) without content.
\end{proof}

\begin{proof}[Proof of part~(ii), the pinch]
Assume (A1) and strict feasibility and let $q$ attain first-best screening. Types in $(p_{\min},r_0)$ have positive measure, so pick such a $p$ with $\tau(p)=1$ and put $r_1\coloneqq r^*(p)$, so $q(r_1)=1$. Part~(i) puts $q\equiv0$ on $[0,p_{\min}]$, so $r_1\geq p_{\min}$. Then $1=q(r_1)\leq(\beta/\gamma)C(r_1)$ gives $C(r_1)\geq C(r_0)$, hence $r_1\geq r_0$. Finally, optimality against $r=p$ gives $D_G(p,r_1)\leq\gamma/\beta$; taking $p_n\downarrow p_{\min}$ with $\tau(p_n)=1$, $r_1^n\coloneqq r^*(p_n)\in[r_0,1]$, and a convergent subsequence $r_1^n\to\bar r$, joint continuity of $D_G$ gives $D_G(p_{\min},\bar r)\leq C(r_0)$, so $\bar r\leq r_0$ and $\bar r=r_0$. Since $q<1$ below $r_0$, $\inf\{r:q(r)=1\}=r_0$, with $q=1$ at points accumulating at $r_0$ from the right, and continuity of $q$ upgrades this to $q(r_0)=1$. No assumption that $\{q=1\}$ is an interval was used.
\end{proof}

\begin{proof}[Proof of part~(iii), the chord bound]
Assume (A2) and strict feasibility and let $q$ be continuous and attain first-best screening. By part~(ii), $q(r_0)=1$; by part~(i), $q(a)\leq\mathrm{ramp}(a)$. So $q(r_0)-q(a)\geq1-\mathrm{ramp}(a)$ for every $a\in[p_{\min},r_0)$, and dividing by $r_0-a>0$ and taking the supremum gives $\mathrm{Lip}(q)\geq\Lambda_c$. Letting $a\uparrow r_0$, the right side tends to $(\beta/\gamma)C'(r_0)=\Lambda^*$ under (A2), so every left difference quotient of $q$ at $r_0$ has $\liminf\geq\Lambda^*$ and $q'(r_0^-)\geq\Lambda^*>0$ where it exists. No differentiability of $q$ is used for the chord bound itself. The condition is global: the same two facts bound every chord to $(r_0,1)$, not merely the tangent, so the chord family rather than its limit binds.
\end{proof}

\begin{proof}[Proof of part~(iv), sufficiency and the criterion]
Assume (A2) and strict feasibility and let $\Lambda\geq\Lambda_c$. Put $q_\Lambda(r)\coloneqq\operatorname{med}\{0,1-\Lambda(r_0-r),1\}$, a clipped affine function of slope $\Lambda$, hence non-decreasing, $[0,1]$-valued and $\Lambda$-Lipschitz; its zero is at $r_0-1/\Lambda$, and $a=p_{\min}$ in the definition of $\Lambda_c$ gives $\Lambda\geq\Lambda_c\geq1/m$, so $q_\Lambda\equiv0$ on $[0,p_{\min}]$. It lies under the ramp, since for $a\in[p_{\min},r_0)$ the definition of $\Lambda_c$ gives $1-\Lambda(r_0-a)\leq\mathrm{ramp}(a)$, and for $r\geq r_0$ both sides equal $1$. So $q_\Lambda$ satisfies (C1) and (C2) of Lemma~\ref{lem:envelope}, and the lemma gives first-best screening with a unique best response for every $p\neq p_{\min}$. With part~(iii) this settles \emph{attainment} in both classes at once, since necessity is proved over $\mathcal{L}_\Lambda$ and the construction is monotone: a rule attaining first-best screening exists in $\mathcal{Q}_\Lambda$, equivalently in $\mathcal{L}_\Lambda$, if and only if $\Lambda\geq\Lambda_c(G)$. The threshold is attained: $a\mapsto(1-\mathrm{ramp}(a))/(r_0-a)$ is continuous on $[p_{\min},r_0)$ with limit $\Lambda^*$ at $r_0$, so $q_{\Lambda_c}$ is itself admissible and first-best.

The equivalence claimed for the welfare gap needs one step more, since $\delta_\Lambda(G)=0$ concerns a supremum and non-attainment alone does not force a positive gap, as Theorem~\ref{thm:kink}(ii)--(iii) witnesses. Compactness closes it: both classes are uniformly bounded and uniformly $\Lambda$-Lipschitz, and the range $[0,1]$, the Lipschitz constant and (for $\mathcal{Q}_\Lambda$) monotonicity survive uniform limits, so both are compact in $C([0,1])$ by Arzel\`a--Ascoli \citep[Thm.~3.85]{aliprantis2006infinite}. Under (A1) $r\mapsto D_G(p,r)$ is continuous into $[0,+\infty]$, so $V(\cdot\,;p)$ is upper semicontinuous on the compact report space with non-empty compact argmax, while $|\max_rV(r;p)-\max_rV'(r;p)|\leq\gamma\|q-q'\|_\infty$ uniformly in $p$; Berge's maximum theorem \citep[Thm.~17.31]{aliprantis2006infinite} then makes the best-response correspondence upper hemicontinuous in $q$, so along $q_n\to q$ every limit of maximizers maximizes the limit problem. Hence $q\mapsto\max\{q(r)\Pi(p):r\text{ a best response at }p\}$ is upper semicontinuous for each $p$, and, its integral in~\eqref{eq:principal-utility} being dominated by $Af$, reverse Fatou makes the welfare functional $\widehat W$ that takes this selection everywhere upper semicontinuous on each class. Section~\ref{sec:behavioral} selects the largest approval among best responses, which is the principal's preference above $p_{\min}$ but not below it, whereas $\widehat W$ takes the principal's preference at every type; so $\widehat W$ dominates $W$ under that selection and under every other, and being upper semicontinuous on a compact set it attains its supremum, at some $q^\dagger$; were $\widehat W(q^\dagger)=W^*$ then~\eqref{eq:welfare-gap-identity} with $A>0$ and $f>0$ would make $q^\dagger$ attain first-best screening, and $q^\dagger$ is Lipschitz hence continuous, so parts~(ii) and~(iii) would give $\Lambda\geq\mathrm{Lip}(q^\dagger)\geq\Lambda_c(G)$. So $\delta_\Lambda(G)=\delta^{\mathcal{L}}_\Lambda(G)=0$, under any selection, if and only if $\Lambda\geq\Lambda_c(G)$, and part~(iv) rests on (A2) and strict feasibility alone rather than on Proposition~\ref{prop:below-threshold}, which also disposes of the corner case $\Lambda_c>\Lambda^*$ that the proposition leaves open.
\end{proof}

\begin{proof}[Proof sketch of Corollary~\ref{cor:first-best}]
Necessity is parts~(i) and~(ii) applied to a type in $(p_{\min},r_0)$ that is approved, whose best response must land at $r_0$; sufficiency is that $q\equiv1=\mathrm{ramp}$ on $[r_0,1]$, so (C1) and (C2) of Lemma~\ref{lem:envelope} hold. The verification is in \suppsec{S.10}.
\end{proof}

\begin{proof}[Proof sketch of Corollary~\ref{cor:lambda-star}]
$\Lambda_c\geq\Lambda^*$ always, by part~(iii); equality holds exactly when every chord to $(r_0,1)$ has slope at most $\Lambda^*$, which rewrites as $C$ lying above its tangent at $r_0$, a condition convexity of $C$ implies and is not implied by. The verification is in \suppsec{S.11}.
\end{proof}

\subsection{Proposition~\ref{prop:below-threshold}}\label{app:below-threshold}

\begin{proof}
\emph{Notation and the stopping barrier.} Write $x\coloneqq p-p_{\min}$, $\Lambda=\Lambda^*(1-\zeta)$, $K\coloneqq\Lambda^{*2}/s'(r_0)$, $\Gamma_0\coloneqq G'(r_0)-G'(p_{\min})$, $b\coloneqq(\beta/\gamma)\Gamma_0$, and $s_p(z)\coloneqq(\beta/\gamma)G''(z)(z-p)$, so that $s_{p_{\min}}=s$ and $\beta[D_G(p,r')-D_G(p,r)]=\gamma\int_r^{r'}s_p$. Since $s_p(z)=s(z)\cdot(z-p)/(z-p_{\min})$ is a product of non-negative non-decreasing functions of $z$ on $(p,1)$, the hypothesis that $s$ is non-decreasing makes $s_p$ non-decreasing there. For $x\in(0,m\zeta)$ this makes
\begin{equation}\label{eq:stopping-barrier}
r_c(p)\coloneqq\min\{r:s_p(r)\geq\Lambda\}
\end{equation}
well defined and interior to $(p,r_0)$, since $s_p(p)=0<\Lambda<\Lambda^*(1-x/m)=s_p(r_0)$, and puts $s_p\geq\Lambda$ on all of $[r_c(p),1)$. Fix $\zeta_0$ small enough that $s_p$ is strictly increasing on $[r_c(p),r_0]$ throughout the band, which $s'(r_0)>0$ and the continuity of $s'$ under (A3) supply uniformly. Then for every $q\in\mathcal{L}_\Lambda$, absolutely continuous with $|q'|\leq\Lambda$ a.e., and every $r'>r_c(p)$, $V(r';p)-V(r_c(p);p)=\gamma\int_{r_c(p)}^{r'}[q'-s_p]\leq\gamma\int_{r_c(p)}^{r'}[\Lambda-s_p]<0$. So $r^*(p)\leq r_c(p)$: no rule in $\mathcal{L}_\Lambda$, monotone or not, moves a band type past $r_c(p)$, and the barrier is global rather than local.

\emph{Part~(i), the upper bound.} Let $\hat q_\Lambda(r)\coloneqq\min\{1,\inf_{a\in[p_{\min},1]}[\mathrm{ramp}(a)+\Lambda|r-a|]\}$ for $r\geq p_{\min}$ and $0$ below: the largest $\Lambda$-Lipschitz function under the ramp, non-decreasing and at most $\mathrm{ramp}$, hence in $\mathcal{Q}_\Lambda$, so by the low-type branch of Lemma~\ref{lem:envelope}'s proof it produces no false approvals at all. Because $s$ is non-decreasing it has three pieces: $\mathrm{ramp}$ on $[p_{\min},r_t]$ where $s(r_t)=\Lambda$, then the line of slope $\Lambda$ through $(r_t,\mathrm{ramp}(r_t))$, then $1$ from $R_\Lambda\coloneqq\inf\{r:\hat q_\Lambda(r)=1\}=r_0+\eta/\Lambda$ on, with $\eta\coloneqq\sup_a[(1-\mathrm{ramp}(a))-\Lambda(r_0-a)]>0$ the shortfall at the reserve; expanding $1-\mathrm{ramp}(r_0-h)=\int_{r_0-h}^{r_0}s$ around $r_0$ gives $\eta=\Lambda^{*2}\zeta^2/(2s'(r_0))(1+o(1))$, quadratically small in the deficit. Under this rule the barrier holds with equality, the agent's problem being unimodal: $\partial_rV=\beta G''(r)x>0$ on $(p_{\min},r_t)$, $\partial_rV=\gamma[\Lambda-s_p(r)]$ on $(r_t,R_\Lambda)$, positive exactly below $r_c(p)$, $\partial_rV=-\gamma s_p(r)<0$ above $R_\Lambda$, and $-\beta D_G(p,\cdot)$ increasing on $[0,p_{\min})$; so $r^*(p)=r_c(p)$ uniquely. \emph{The approval paid at the stopping point.} On the terminal segment the inf-convolution is the line of slope $\Lambda$ landing on $1$ at $R_\Lambda$, so $1-\tau(p)=\Lambda(R_\Lambda-r^*(p))$. Writing $r_c(p)=r_0-y(x)$ and solving $s_p(r_c)=\Lambda$ to first order gives $y(x)=\Lambda^*(\zeta-x/m)/s'(r_0)+O(\zeta^2)$, so the affected band is $x\in(0,m\zeta)$ and on it $1-\tau(p)=K(\zeta-x/m)(1+O(\zeta))$. Integrating against the weight of~\eqref{eq:welfare-gap-identity},
\[
\delta_\Lambda(G)\leq W^*-W(\hat q_\Lambda)=Af(p_{\min})\frac{\Lambda^{*2}}{s'(r_0)}\int_0^{m\zeta}\!x\Bigl(\zeta-\frac{x}{m}\Bigr)dx\,(1+O(\zeta))=\frac{Af(p_{\min})\Lambda^{*2}m^2}{6\,s'(r_0)}\zeta^3(1+O(\zeta)).
\]
For the Brier generator every step is exact rather than asymptotic: $s'\equiv2\beta/\gamma$, $\Lambda^*=2\beta m/\gamma$, $\eta=(\beta/\gamma)m^2\zeta^2$, $y(x)=m\zeta-x$, and $1-\tau(p)=(2\zeta-\zeta^2)-2(1-\zeta)x/m$ on the band, which vanishes at $x_1=m(2\zeta-\zeta^2)/(2(1-\zeta))$; so for $F$ uniform $W^*-W(\hat q_\Lambda)=Af(\beta m^4/\gamma)(2\zeta-\zeta^2)^3/[24(1-\zeta)^2]=Af(\beta m^4/\gamma)[\zeta^3/3+\zeta^4/6+O(\zeta^5)]$, exact rather than a two-term truncation.

\emph{Part~(ii), the lower bound.} Fix $q\in\mathcal{L}_\Lambda$ and let $w$ range over $(0,\bar w]$ for a fixed $\bar w<p_{\min}$, small enough that $G'$ stays finite on $[p_{\min}-\bar w,r_0]$. The engine is the envelope inequality $\gamma q(r)-\beta D_G(p,r)\leq\gamma\tau(p)$, valid for all $p,r$ and every $q$ because the left side is $V(r;p)\leq V(r^*(p);p)=\gamma\tau(p)-\beta D_G(p,r^*(p))\leq\gamma\tau(p)$. At $p=p_{\min}-w$ it gives, for $r\in[p_{\min},r_0]$, the shifted cap $q(r)\leq\mathrm{ramp}(r)+\tau(p_{\min}-w)+bw(1+O(w))$, since $(\beta/\gamma)[D_G(p_{\min}-w,r)-D_G(p_{\min},r)]=(\beta/\gamma)\int_{p_{\min}-w}^{p_{\min}}[G'(r)-G'(u)]\,du\leq(\beta/\gamma)w[G'(r_0)-G'(p_{\min}-w)]$. Put $\epsilon\coloneqq\inf_{w}[\tau(p_{\min}-w)+bw(1+O(w))]$, so that $q\leq\mathrm{ramp}+\epsilon$ on $[p_{\min},r_0]$ while $\tau(p_{\min}-w)\geq\epsilon-bw(1+O(w))$ for every $w$. The two halves of~\eqref{eq:welfare-gap-identity} are then charged separately and no case split on $q$ is needed.

\emph{False rejections.} For a band type $p=p_{\min}+x$ with $r^*(p)\geq p_{\min}$, the barrier~\eqref{eq:stopping-barrier} and the monotonicity of $\mathrm{ramp}$ give $\tau(p)\leq\mathrm{ramp}(r_c(p))+\epsilon$, while $1-\mathrm{ramp}(r_c(p))=\int_{r_c(p)}^{r_0}s\geq\Lambda\,y(x)$ because $s\geq s_p\geq\Lambda$ there. The identity $s_p(r_0)-\Lambda=\Lambda^*(\zeta-x/m)$ is exact and $s_p(r_0)-s_p(r_0-y)\leq y\,s'(r_0)(1+O(\zeta))$, so $y(x)\geq\Lambda^*(\zeta-x/m)(1-O(\zeta))/s'(r_0)$ and $1-\tau(p)\geq K(\zeta-x/m)(1-O(\zeta))-\epsilon$. The remaining case is dispatched by the same envelope inequality, at $p_{\min}-w=r^*(p)$: if $r^*(p)<p_{\min}$ then $\tau(p)\leq\tau(r^*(p))$, and either $\tau(p)\leq\tfrac12$, when the display holds trivially once $K\zeta<\tfrac12$, or $q>\tfrac12$ at a report below $p_{\min}$, when the Lipschitz bound and $\tau\geq q$ there put $\tau\geq\tfrac14$ on an interval of length $\ell\coloneqq\min\{1/(4\Lambda),p_{\min}\}$ inside $[0,p_{\min})$, a false-approval term above $Af_{\min}\ell^2/8$ that is independent of $\zeta$ and dominates the proposition's bound for $\zeta<\zeta_0$. Integrating $Af_{\min}x$ against the positive part over the band therefore contributes at least $Af_{\min}Km^2\hat\zeta^3(1-O(\zeta))/6$, with $\hat\zeta\coloneqq(\zeta-\epsilon/K)_+$.

\emph{False approvals, and the sum.} By~\eqref{eq:welfare-gap-identity} the types below $p_{\min}$ contribute at least $Af_{\min}\int_0^{\bar w}w\,\tau(p_{\min}-w)\,dw\geq Af_{\min}\int_0^{\epsilon/b}w(\epsilon-bw)\,dw\,(1-O(\zeta))=Af_{\min}\epsilon^3(1-O(\zeta))/(6b^2)$, a step that uses no monotonicity of $q$ or of $\tau$, which is why part~(ii) may quantify over $\mathcal{L}_\Lambda$. Put $\Gamma_1\coloneqq K/(K+bm)=\gamma\Lambda^{*2}/[\gamma\Lambda^{*2}+\beta\,\Gamma_0\,s'(r_0)\,m]\in(0,1)$ and write $\epsilon=tK\zeta$ with $t\geq0$, so that $K^3/b^2=Km^2\Gamma_1^2/(1-\Gamma_1)^2$. The two contributions sum to
\[
\frac{Af_{\min}Km^2\zeta^3}{6}\Bigl[(1-t)_+^3+\frac{\Gamma_1^2}{(1-\Gamma_1)^2}\,t^3\Bigr](1-O(\zeta))\;\geq\;\frac{Af_{\min}\Lambda^{*2}m^2}{6\,s'(r_0)}\,\Gamma_1^2\,\zeta^3\,(1-O(\zeta)),
\]
the bracket being minimized over $t\geq0$ at $t=1-\Gamma_1$, where it equals $\Gamma_1^2$. The rule determined $\epsilon$ and hence $t$, and the bound is uniform in $t$, so it holds for every $q\in\mathcal{L}_\Lambda$ and all $\zeta<\zeta_0$: this is part~(ii), and dividing part~(i)'s constant by it gives part~(iii)'s ratio $f(p_{\min})/(f_{\min}\Gamma_1^2)$. The factor $\Gamma_1$ weighs how fast the band loses approval against how fast the cap tilts as the anchoring type falls below $p_{\min}$, and is the one place where the lower constant sees the generator beyond $(\Lambda^*,m,s'(r_0))$. Under Brier $\Gamma_0=2m$, $\Lambda^*=2\beta m/\gamma$ and $s'(r_0)=2\beta/\gamma$ give $\Gamma_1=\tfrac12$, so the lower constant is $Af_{\min}\Lambda^{*2}m^2/(24\,s'(r_0))$ and the ratio is $4$ for uniform $F$.
\end{proof}

\subsection{Proposition~\ref{prop:f-free}}\label{app:f-free}

\begin{proof}[Proof sketch]
Every object in the statement is built from $C$, $r_0$ and $\gamma/\beta$, into which $F$ does not enter and the payoffs enter only through $p_{\min}$; the verification is in \suppsec{S.6}.
\end{proof}

\subsection{Proposition~\ref{prop:lambda-sat}}\label{app:lambda-sat}

\begin{proof}[Proof sketch]
$q_{\mathrm{sat}}$ is a hypothesis check against Lemma~\ref{lem:envelope} with $\bar r=1$, and optimality within the stated class follows from the two cap inequalities of Theorem~\ref{thm:critical-slope}(i) in the weaker form noted in its proof; the verification is in \suppsec{S.7}.
\end{proof}

\subsection{The Unrestricted Benchmark}\label{app:benchmark}

In the unrestricted direct mechanism the principal chooses a decision rule $x:[0,1]\to[0,1]$ and a transfer $t:[0,1]\times\{0,1\}\to\R$, and a type-$p$ agent reporting $r$ receives $u(r,p)=\gamma x(r)+\bar t(r;p)$ with $\bar t(r;p)=p\,t(r,1)+(1-p)t(r,0)$. The mechanism is \emph{incentive compatible} if $u(p,p)\geq u(r,p)$ for all $p,r$, strictly so if the inequality is strict for $r\neq p$. The principal's payoff over this class is the same decision payoff~\eqref{eq:principal-utility}, with $\tau=x$ on path; transfers are welfare-neutral and no participation or budget constraint is imposed, so the benchmark class is if anything too large and both comparisons below are conservative. The term $\gamma x(r)$ is not an instrument: the autonomy payoff attaches to the decision itself, so the principal can offset it but not switch it off.

\begin{theorem}[Canonical form of every incentive-compatible direct mechanism]\label{thm:canonical-form}
$(x,t)$ is incentive compatible if and only if there is a convex $\Psi:[0,1]\to\R$, with $\Psi'$ a selection of its subdifferential, such that
\begin{equation}\label{eq:canonical-form}
t(r,\omega)=\underbrace{\Psi(r)+\Psi'(r)(\omega-r)}_{\text{proper score with generator }\Psi}-\underbrace{\gamma\,x(r)}_{\text{approval charge}} .
\end{equation}
The rule $x$ is unrestricted, $\Psi$ is unrestricted among convex functions admitting a finite subgradient at every point of $[0,1]$, the two are chosen independently, and the agent's on-path payoff is $\Psi(p)$.
\end{theorem}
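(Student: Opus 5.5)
The plan is to reduce the statement to the classical Savage characterization of strictly proper scores (Remark~\ref{rem:savage-characterization}), applied to the transfer after the approval term has been absorbed into it. Put $\tilde t(r,\omega)\coloneqq t(r,\omega)+\gamma x(r)$. Because $\gamma x(r)$ does not depend on $\omega$, we have $p\,\tilde t(r,1)+(1-p)\tilde t(r,0)=\bar t(r;p)+\gamma x(r)=u(r,p)$. Incentive compatibility of $(x,t)$ is therefore exactly the statement that $\tilde t$ is a (weakly) proper score on the binary outcome. The rest of the proof is the one-dimensional characterization of proper scores, which I would carry out directly rather than only cite, since the paper's remark is phrased for strict properness.

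\emph{Sufficiency.} Suppose $t$ has the form~\eqref{eq:canonical-form}. Then $u(r,p)=\Psi(r)+\Psi'(r)(p-r)$. This is the value at $p$ of the supporting line of $\Psi$ taken at $r$. Convexity of $\Psi$ gives $\Psi(p)\ge\Psi(r)+\Psi'(r)(p-r)$, so $u(p,p)=\Psi(p)\ge u(r,p)$. This is incentive compatibility, and it also shows that the on-path payoff is $\Psi(p)$.

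\emph{Necessity.} Define $\Psi(p)\coloneqq u(p,p)=\sup_r u(r,p)$, where the equality is incentive compatibility. Each map $p\mapsto u(r,p)$ is affine in $p$, so $\Psi$ is a supremum of affine functions and hence convex. It is finite, because $u(p,p)$ is a real number for every $p$. At each $r$, the affine function $\ell_r(p)\coloneqq u(r,p)$ satisfies $\ell_r\le\Psi$ everywhere and $\ell_r(r)=\Psi(r)$, so $\ell_r$ supports $\Psi$ at $r$. Its slope $\tilde t(r,1)-\tilde t(r,0)$ is therefore a subgradient of $\Psi$ at $r$, and we set $\Psi'(r)$ equal to that slope. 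This choice of subgradient is a selection from the subdifferential, and it is finite at every point of $[0,1]$, including the endpoints, because the transfer is real-valued. Since $\ell_r(p)=\Psi(r)+\Psi'(r)(p-r)$, evaluating at $p=\omega\in\{0,1\}$ gives $\tilde t(r,\omega)=\Psi(r)+\Psi'(r)(\omega-r)$. Subtracting $\gamma x(r)$ recovers~\eqref{eq:canonical-form}.

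\emph{Independence of $x$ and $\Psi$.} The sufficiency step never used any property of $x$. Any rule $x$ and any convex $\Psi$ with finite subgradients therefore combine into an incentive-compatible mechanism, so the two are chosen independently.

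The main obstacle is the endpoint bookkeeping. A convex function on $[0,1]$ can have an empty subdifferential at $0$ or $1$, for instance when its one-sided derivative is infinite, as for the negative-entropy generator. Necessity must show that the $\Psi$ arising from a real-valued $t$ always has a finite subgradient at the endpoints. The supporting-line argument above handles this automatically, since $\ell_0$ and $\ell_1$ are themselves finite supporting lines. Sufficiency must explicitly restrict $\Psi$ to functions that admit such subgradients, which is why the statement includes that proviso. I would also remark that strict incentive compatibility corresponds to strict convexity of $\Psi$, with the same argument and strict inequalities throughout.
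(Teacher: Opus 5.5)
Your proof is correct and is essentially the paper's argument. Both take $\Psi$ to be the indirect utility $U(p)=\max_r u(r,p)$, get convexity from its being a pointwise maximum of affine functions of $p$, and read the canonical form off the fact that incentive compatibility makes $u(r,\cdot)$ a supporting line of $U$ at $r$ (so $t(r,1)-t(r,0)\in\partial U(r)$); both prove the converse by the supporting-line inequality. You skip the paper's detour through monotonicity of $B(r)=t(r,1)-t(r,0)$ and the integral representation $U(p)=U(0)+\int_0^p B$, which the statement does not need. Your endpoint remark also correctly identifies the finite-subgradient proviso as automatic for necessity and needed only for sufficiency.
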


\begin{proof}[Proof sketch]
This is the standard envelope characterization of incentive compatibility in a one-dimensional linear environment \citep{myerson1981optimal,rochet1987necessary,vohra2011mechanism}, carried out at every pair of types rather than almost everywhere; the verification is in \suppsec{S.8}.
\end{proof}

Two coordinate readings price the restriction, written in the coordinates $B(r)\coloneqq t(r,1)-t(r,0)$, $\alpha(r)\coloneqq t(r,0)+\gamma x(r)$ and $U_0\coloneqq\max_ru(r,0)$ of \suppsec{S.8}, in which $u(r,p)=\alpha(r)+pB(r)$. Taking $\Psi(r)=\Psi_0+r^2/2$ makes the first bracket a scaled Brier score and leaves $x$ free, so every decision rule, the first-best included, is truthfully and strictly implementable and the unrestricted principal attains $W^*$. The restricted architecture is the sub-family of~\eqref{eq:canonical-form} with two deletions, of which exactly one binds: freezing the generator at $\Psi=\beta G$ is innocuous, since $t(r,\omega)=\beta[G(r)+G'(r)(\omega-r)]-\gamma x(r)+c$ has $B=\beta G'$ non-decreasing and $\alpha(r)=\beta G(r)-r\beta G'(r)+c$, which is~\eqref{eq:canonical-form} with $\Psi=\beta G+c$; whereas deleting the approval charge is fatal, since with $t(r,\omega)=\Psi_B(r)+\Psi_B'(r)(\omega-r)$ one has $\alpha(r)=\Psi_B(r)-r\Psi_B'(r)+\gamma x(r)$, and incentive compatibility demands $\alpha(r)=U_0+\int_0^r\Psi_B'-r\Psi_B'(r)$, forcing $\gamma x(r)$ constant. Truthfulness here requires the approval charge; redesigning the score is neither necessary nor sufficient for it.

\begin{theorem}[Welfare equivalence, truthfulness inequivalence]\label{thm:welfare-equivalence}
Fix the environment, a strictly convex $G\in C^2$, and $\beta,\gamma>0$. \emph{(i)} The unrestricted benchmark attains $W^*$ with a strictly incentive-compatible mechanism, and $W^*$ bounds every mechanism in either class. \emph{(ii)} The restricted architecture attains $W^*$ if and only if $\gamma/\beta\leq D_G(p_{\min},1)$, and when it does, both the step rule of Theorem~\ref{thm:reserve} and the Bregman ramp of Theorem~\ref{thm:kink} attain it. \emph{(iii)} No member of the restricted class attaining $W^*$ is incentive compatible; the incentive-compatible members are exactly the rules constant on $(0,1)$ and no larger at the two endpoints, all strictly below $W^*$.
\end{theorem}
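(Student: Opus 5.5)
The three parts are handled separately, each resting on an earlier result. Part~(i) comes from Theorem~\ref{thm:canonical-form}. Take $\Psi(r)=r^2/2$ and $x(r)=\mathbf 1\{r\ge p_{\min}\}$. The agent's payoff from reporting $r$ is then $u(r,p)=\Psi(r)+\Psi'(r)(p-r)=\Psi(p)-D_\Psi(p,r)$, so the approval term cancels exactly against the charge. Because $\Psi$ is strictly convex, truthful reporting is the strict best response. On path $\tau=x$, so the induced screening is first best and welfare equals $W^*$. The upper bound holds in both classes because welfare has the form $u_d+\int\tau\Pi f$ with $\tau\in[0,1]$, and this is at most $u_d+\int\max(\Pi,0)f=W^*$. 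Transfers are welfare-neutral, so they cannot raise the bound.

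Part~(ii) has two directions. For sufficiency, assume $\gamma/\beta\le D_G(p_{\min},1)$. Then Theorem~\ref{thm:reserve}(i) shows the step rule attains $W^*$, and Theorem~\ref{thm:kink}(i) shows the Bregman ramp does too. For necessity, suppose some restricted rule attains $W^*$. By~\eqref{eq:welfare-gap-identity}, with $A>0$ and $f>0$, it must attain first-best screening $F$-a.e. Theorem~\ref{thm:critical-slope}(i) then forces weak feasibility. The argument is that some approved type $p>p_{\min}$ has $q(r^*(p))=1$, and the cap $\gamma q\le\beta C$ gives $\gamma/\beta\le C(r^*(p))\le C(1)$. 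This step needs only (A1) and measurability of $q$, so the standing convention that a best response exists suffices.

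Part~(iii) is where the real work is. I would embed the restricted architecture into the canonical form. The restricted transfer is $t(r,\omega)=\beta[G(r)+G'(r)(\omega-r)]$, the agent additionally receives $\gamma q(r)$, and the decision is $x=q$. Theorem~\ref{thm:canonical-form} says incentive compatibility holds iff $t=\Psi+\Psi'(\omega-r)-\gamma q$ for some convex $\Psi$. I would match coefficients using the coordinates $B(r)=t(r,1)-t(r,0)$ and $\alpha(r)$. This gives $\Psi'=\beta G'$ wherever the subgradient is pinned, so $\Psi=\beta G+c$ up to the selection. The $\omega$-free part then reads $\beta G(r)-r\beta G'(r)=\Psi(r)-r\Psi'(r)-\gamma q(r)$, which forces $\gamma q(r)$ to equal a constant on $(0,1)$. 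This is exactly the ``deleting the approval charge is fatal'' computation already recorded above the theorem.

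The endpoints are the main obstacle. At $r\in\{0,1\}$, $G'$ may diverge (the log score) and the subgradient selection is free, so the constant can fail to propagate there. To handle this I would argue directly from incentive compatibility. For the $r=1$ endpoint, take a type $p$ near $1$: the condition $u(p,p)\ge u(1,p)$ reads $\gamma c\ge\gamma q(1)-\beta D_G(p,1)$, and letting $p\uparrow1$ gives $q(1)\le c$. The $r=0$ endpoint is handled symmetrically, giving $q(0)\le c$. Conversely, any rule with $q\equiv c$ on $(0,1)$ and $q(0),q(1)\le c$ is incentive compatible by strict properness, which gives the stated characterization. Such a rule has $\tau\equiv c$ $F$-a.e., so its welfare is $u_d+c\int\Pi f$. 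Since $\int\tau\Pi f<\int\max(\Pi,0)f$ whenever $F$ puts mass on both sides of $p_{\min}$, this is strictly below $W^*$. In particular no incentive-compatible member attains $W^*$.
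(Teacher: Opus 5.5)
Parts (i) and (ii) match the paper's proof. For (i) you use the Brier generator in the canonical form, with the approval charge cancelling $\gamma x$. For (ii) you use Theorems~\ref{thm:reserve}(i) and~\ref{thm:kink}(i) for sufficiency and the cap of Theorem~\ref{thm:critical-slope}(i) for necessity. In the necessity step, say explicitly that $r^*(p)>p_{\min}$ (because $q\equiv0$ on $[0,p_{\min}]$), since $C$ is monotone only on $[p_{\min},1]$.

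For the constancy step in (iii) you take a different route from the paper. The paper argues pairwise. Truthfulness of two interior types $p,r$ in a compact $[a,b]\subset(0,1)$ gives $\gamma|q(r)-q(p)|\le\beta\max\{D_G(p,r),D_G(r,p)\}\le\tfrac12\beta\bar G''_{[a,b]}(r-p)^2$. This quadratic H\"older bound forces $q$ constant, and the paper then exhausts $(0,1)$. You instead match coefficients in Theorem~\ref{thm:canonical-form}: $B=\beta G'$ pins $\Psi=\beta G+c$, and the $\omega$-free coordinate then forces $\gamma q\equiv c$. That is the ``deleting the approval charge is fatal'' reading the paper records before the theorem but does not use in its proof.

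Each route buys something. Yours yields an exact identity rather than an inequality, needs no curvature bound, and names the missing instrument. The paper's is elementary, uses only incentive constraints between interior types, and never has to ask which types and transfers the envelope representation covers.

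That last question is where your write-up needs tightening, on two points.

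First, the domain of incentive compatibility. Theorem~\ref{thm:canonical-form} is stated for incentive compatibility at every type in $[0,1]$ with finite transfers. Under that definition, with $G'$ finite at the endpoints, it forces $q(0)=q(1)=c$, and your converse then fails. A type $p=1$ with $q(1)<c$ gains by reporting $r\uparrow1$, since $D_G(1,r)\to0$. The characterization in (iii), with endpoint values merely no larger than $c$, is truthfulness for types in $(0,1)$, which is how the paper's proof reads it. So adopt that notion throughout. Apply the canonical form on compact subintervals $[a,b]\subset(0,1)$, where its proof runs verbatim, and then exhaust.

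Second, the endpoint limit. Your argument uses $D_G(p,1)\to0$ as $p\uparrow1$, which requires $G'$ to be finite at $1$. That fails for the log score you cite as the obstacle: there $D_G(p,1)=+\infty$ for every $p<1$, and interior types place no restriction on $q(1)$ at all. The paper's endpoint step makes the same tacit use of finiteness. The real reason to treat the endpoints separately is the domain of the incentive constraints, not a divergent $G'$.
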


\begin{proof}[Proof sketch]
(i) is the first coordinate reading above with the pointwise bound in~\eqref{eq:principal-utility}, (ii) combines Theorem~\ref{thm:reserve} with the cap of Theorem~\ref{thm:critical-slope}(i), and (iii) turns on a quadratic H\"older condition that forces $q$ constant on $(0,1)$; the verification is in \suppsec{S.9}.
\end{proof}

The truthfulness-collapse step in~(iii) is the fixed-score, no-compensation specialization of \citeauthor{boutilier2012eliciting}'s (\citeyear{boutilier2012eliciting}) characterization of proper compensation rules, and it covers the same ground as \citeauthor{chen2014eliciting}'s (\citeyear{chen2014eliciting}) result that strictly proper decision making requires completely mixed decision rules. The two architectures are thus welfare-equivalent and truthfulness-inequivalent on the whole feasible region.

\subsection{Lemma A.1: the pivot identity}

\paragraph*{Lemma A.1 (Pivot identity, restated).}
Under (A1), $-\beta D_G(p,r)+\gamma\cdot\frac{\beta}{\gamma}C(r)=-\beta\Phi(r;p)$ for all $p,r$, where $\Phi(r;p)\coloneqq G(p)-G(p_{\min})-G'(r)(p-p_{\min})$. In $r$, $\Phi(\cdot\,;p)$ is strictly decreasing for $p>p_{\min}$, strictly increasing for $p<p_{\min}$, and constant and $0$ for $p=p_{\min}$; and $\Phi(p_{\min};p)=D_G(p,p_{\min})$.

\begin{proof}
$-\beta D_G(p,r)+\beta C(r)=-\beta[G(p)-G(r)-G'(r)(p-r)]+\beta[G(p_{\min})-G(r)-G'(r)(p_{\min}-r)]=-\beta[G(p)-G(p_{\min})-G'(r)(p-p_{\min})]$; the monotonicity in $r$ is that of $-G'(r)(p-p_{\min})$, with $G'$ strictly increasing. Set $r=p_{\min}$ for the last claim.
\end{proof}

\subsection{Lemma 4.5: envelope sufficiency}

\paragraph*{Lemma 4.5 (Envelope sufficiency, restated).}
Assume (A1). Write $\bar r \coloneqq \min\{r_0,1\}$, taking $\bar r \coloneqq 1$ in the saturated regime $\gamma/\beta > D_G(p_{\min},1)$ where no reserve exists, and put $\bar q \coloneqq \mathrm{ramp}(\bar r)$, so $\bar q = 1$ under weak feasibility and $\bar q = \lambda_{\mathrm{sat}} \coloneqq \beta D_G(p_{\min},1)/\gamma$ in the saturated regime. Let $q : [0,1] \to [0,1]$ satisfy
\begin{enumerate}
\item[(C1)] \emph{(cap)} $q \le \mathrm{ramp}$ on $[0,\bar r]$, and
\item[(C2)] \emph{(terminal agreement)} $q = \mathrm{ramp}$ on $[\bar r,1]$.
\end{enumerate}
Then $\tau(p) = \bar q\,\mathbf 1\{p \ge p_{\min}\}$ for every $p \ne p_{\min}$, with a unique best response, namely $r^*(p) = p$ for $p < p_{\min}$ and $r^*(p) = \max(p,\bar r)$ for $p > p_{\min}$. At $p = p_{\min}$ the argmax is $\{p_{\min}\} \cup \{r \in [p_{\min},\bar r] : q(r) = \mathrm{ramp}(r)\}$, a single ($F$-null) type. No monotonicity, continuity or semicontinuity of $q$ is assumed.

\begin{proof}
By Lemma~A.1 and (C1), $V(r;p) \le \gamma\,\mathrm{ramp}(r) - \beta D_G(p,r) \eqqcolon \bar V(r;p)$ for every $r$, with equality at every $r \in [\bar r,1]$ by (C2); and $\bar V(r;p) = -\beta\Phi(r;p)$ on $[p_{\min},\bar r]$ and $\gamma\bar q - \beta D_G(p,r)$ on $[\bar r,1]$, where $\gamma \bar q = \beta C(\bar r)$ in both regimes. Since $\mathrm{ramp} \equiv 0$ on $[0,p_{\min}]$, (C1) forces $q \equiv 0$ there.

For $p > p_{\min}$: on $[p_{\min},\bar r]$, $\Phi(\cdot\,;p)$ is strictly decreasing, so $V(r;p) \le \bar V(r;p) < \bar V(\bar r;p) = V(\bar r;p)$ for $r < \bar r$; on $[0,p_{\min}]$, $V(r;p) = -\beta D_G(p,r) \le -\beta D_G(p,p_{\min}) = -\beta\Phi(p_{\min};p) < V(\bar r;p)$; and on $[\bar r,1]$, $V(r;p) = \gamma\bar q - \beta D_G(p,r)$ is maximized uniquely at $r = \max(p,\bar r)$. So $r^*(p) = \max(p,\bar r)$ uniquely and $\tau(p) = \bar q$.

For $p < p_{\min}$: on $[p_{\min},\bar r]$, $\Phi(\cdot\,;p)$ is strictly increasing, so $V \le \bar V(p_{\min};p) = -\beta D_G(p,p_{\min}) < 0$; on $[\bar r,1]$, $V \le \gamma\bar q - \beta D_G(p,\bar r) < \gamma\bar q - \beta D_G(p_{\min},\bar r) = 0$, since $p \mapsto D_G(p,\bar r)$ is strictly decreasing below $\bar r$; on $[0,p_{\min}]$, $V = -\beta D_G(p,r) \le 0$ with equality only at $r = p$. So $r^*(p) = p$ uniquely and $\tau(p) = 0$.

For $p = p_{\min}$: $\Phi(\cdot\,;p_{\min}) \equiv 0$, so $\bar V \equiv 0$ on $[p_{\min},\bar r]$ and $\bar V < 0$ elsewhere off $r = p_{\min}$, and $V \le \bar V$ gives the stated argmax. No strict feasibility is used; in the saturated regime $[\bar r,1]$ degenerates to the point $r = 1$.
\end{proof}

\subsection{Lemma 3.1(iii): the residual types}

\paragraph*{Lemma 3.1(iii) (Perturbation Lemma, residual types, restated).}
\emph{(Residual types.)} If $\nabla h(g(\theta_0)) = 0$ for some $\theta_0 \in \Cbind$ with $h(g(\theta_0)) < \sup_r h(r)$, and $h \in C^2$ near $g(\theta_0)$, then there is a type-dependent threshold $\bar\gamma(\theta_0) > 0$ such that $g(\theta_0)$ fails to maximize $V(\cdot\,;\theta_0,\gamma)$ for every $\gamma > \bar\gamma(\theta_0)$.

\begin{proof}[Proof of Lemma~3.1(iii)]
Assume $h\in C^2$ near $g_0\coloneqq g(\theta_0)$, a hypothesis of part~(iii) and not a consequence of NT2, and NT1 at $\theta_0$, so $h(g_0)<\sup_rh$. Local behavior is governed by $M_\gamma\coloneqq H_S(\theta_0)+\gamma\nabla^2h(g_0)$ through $V(r;\theta_0,\gamma)=V(g_0;\theta_0,\gamma)+\tfrac12(r-g_0)^\top M_\gamma(r-g_0)+o(\|r-g_0\|^2)$; two sub-cases exhaust the possibilities.

\emph{(iii-a) $\nabla^2h(g_0)$ has a positive eigenvalue $\lambda_+$, unit eigenvector $v$.} Then $v^\top M_\gamma v=v^\top H_Sv+\gamma\lambda_+>0$ once $\gamma>\bar\gamma_{\mathrm{local}}\coloneqq|v^\top H_Sv|/\lambda_+$, finite and positive since $v^\top H_Sv<0$, and then $g_0$ is not even a local maximizer.

\emph{(iii-b) $\nabla^2h(g_0)\preceq0$, which includes the case $\nabla^2h(g_0)=0$ where the expansion is inconclusive at every order the hypotheses control.} Then $M_\gamma\prec0$ for every $\gamma>0$ and $g_0$ stays a local maximizer, so only a global comparison decides: by NT1 pick an interior $r_1$ with $\Delta h\coloneqq h(r_1)-h(g_0)>0$ and put $\Delta S\coloneqq\Sbar(g_0;\theta_0)-\Sbar(r_1;\theta_0)>0$ by strict properness; then $V(r_1;\theta_0,\gamma)-V(g_0;\theta_0,\gamma)=-\Delta S+\gamma\Delta h>0$ whenever $\gamma>\bar\gamma_{\mathrm{global}}(\theta_0)\coloneqq\Delta S/\Delta h$.

With $\bar\gamma(\theta_0)\coloneqq\min\{\bar\gamma_{\mathrm{local}},\bar\gamma_{\mathrm{global}}(\theta_0)\}$, and $\bar\gamma_{\mathrm{local}}=\infty$ in (iii-b), truthful reporting fails at $\theta_0$ for every $\gamma>\bar\gamma(\theta_0)$.
\end{proof}

\subsection{Theorem 4.3: optimal oversight is non-affine}

\paragraph*{Theorem 4.3 (Optimal Oversight Non-Affinity, restated).}
Assume (A1) and weak feasibility $\gamma/\beta \le D_G(p_{\min},1)$. Let the principal choose an approval rule $q : [0,1] \to [0,1]$ to maximize expected utility under Stackelberg timing, given that the agent best-responds, and suppose the type distribution $F$ places positive mass on both sides of the first-best threshold $p_{\min}$. Then no affine $q$ is optimal for the principal. Feasibility is used only to pass from screening to optimality: that no affine $q$ attains first-best \emph{screening} holds in every regime.

\begin{proof}
Assume (A1). Let $q(r)=a+br$ be affine, map $[0,1]$ into $[0,1]$, and attain first-best screening. Theorem~4.8(i) gives $q\leq\mathrm{ramp}$ on $[0,1]$, hence $q\equiv0$ on $[0,p_{\min}]$, which forces $b=0$ and so $q\equiv0$, whence $\tau\equiv0$, contradicting $\tau=1$ on the positive-measure set $p>p_{\min}$. So no affine rule attains first-best screening, and this half is regime-free. Assume in addition weak feasibility. Then the step rule of Theorem~4.6(i) attains first-best screening, hence $W^*$ by~(4.4), while the pointwise bound $\tau\Pi\leq\max(\Pi,0)$ inside~(4.2) gives $W\leq W^*$ for every rule; so an optimal rule exists, attains first-best screening by~(4.4) again with $A>0$ and $f>0$, and is not affine. In the saturated regime no rule attains first-best screening at all (Theorem~A.3(ii)) and the optimum is not characterized here.
\end{proof}

\subsection{Theorem 4.6: hard oversight at $r_0(G)$}

\paragraph*{Theorem 4.6 (Hard oversight at $r_0(G)$, restated).}
Assume (A1) and weak feasibility $\gamma/\beta \le D_G(p_{\min},1)$, and let $r_0 \in (p_{\min},1]$ be the unique solution of $D_G(p_{\min},r_0) = \gamma/\beta$. Then:
\begin{enumerate}
\item[(i)] The step rule $q(r) = \mathbf 1\{r \ge r_0\}$ induces $\tau(p) = \mathbf 1\{p \ge p_{\min}\}$ for every $p \ne p_{\min}$, with strict optimality of the induced report; at $p = p_{\min}$ the agent is exactly indifferent between $r = p_{\min}$ and $r = r_0$, a single ($F$-null) type. Hence the step rule attains first-best welfare under any tie-breaking.
\item[(ii)] The reserve is an affine function of $\sqrt{\gamma/\beta}$, that is $r_0 = p_{\min} + \kappa_0\sqrt{\gamma/\beta}$ for a constant $\kappa_0$ independent of $(p_{\min},\gamma/\beta)$ across the whole feasible range, if and only if $G$ is quadratic, and $\kappa_0 = 1$ if and only if $G'' \equiv 2$, the Brier normalization.
\end{enumerate}

\begin{proof}[Proof of part~(i)]
Assume (A1) and weak feasibility. \emph{Existence and uniqueness of $r_0$:} $C$ is continuous on $[p_{\min},1)$ with $C(p_{\min})=0$, strictly increasing, and $\lim_{r\uparrow1}C(r)=C(1)\geq\gamma/\beta$ in $[0,+\infty]$, so the intermediate value theorem on $[p_{\min},1)$ and strict monotonicity give a unique $r_0\in(p_{\min},1]$ with $C(r_0)=\gamma/\beta$. Writing the limit rather than an endpoint value admits the log score, for which $C(1)=+\infty$. No divergence of the calibration cost is used, and under bounded curvature none is available.

The step rule $q = \mathbf 1\{r \ge r_0\}$ satisfies the hypotheses of Lemma~4.5: it vanishes on $[0,r_0)$, where the ramp is non-negative, and equals $1 = \mathrm{ramp}$ on $[r_0,1]$. So $\tau = \mathbf 1\{p \ge p_{\min}\}$ off $p_{\min}$ with a unique best response, at $p_{\min}$ the argmax is $\{p_{\min},r_0\}$, and first-best welfare follows under any tie-breaking.
\end{proof}

\begin{proof}[Proof of part~(ii)]
Assume (A1). Suppose $r_0=p_{\min}+c\sqrt{\gamma/\beta}$ for a constant $c$ and all feasible $(p_{\min},\gamma/\beta)$. Fixing $p_{\min}$ and varying $\gamma/\beta$ over its whole feasible range gives $C(r)=(r-p_{\min})^2/c^2$ for every $r\in(p_{\min},1)$, that is $G'(r)(r-p_{\min})-[G(r)-G(p_{\min})]=(r-p_{\min})^2/c^2$; the left side is $(r-p_{\min})^2$ times the derivative of the difference quotient $r\mapsto[G(r)-G(p_{\min})]/(r-p_{\min})$, so that quotient equals $r/c^2+k$ for a constant $k$ and $G(r)=G(p_{\min})+(r-p_{\min})(r/c^2+k)$ is quadratic on $(p_{\min},1)$; letting $p_{\min}\downarrow0$ propagates it to $(0,1)$. No second derivative is used, so the step stays inside (A1). Conversely $G''\equiv2/c^2$ gives $D_G(p,r)=(p-r)^2/c^2$ and $r_0=p_{\min}+c\sqrt{\gamma/\beta}$. So the reserve is an affine function of $\sqrt{\gamma/\beta}$ if and only if $G$ is quadratic, and $c=1$ if and only if $G''\equiv2$, the Brier normalization; the scale must be named, since rescaling $G$ is observationally equivalent to rescaling $\beta$.
\end{proof}

\subsection{Proposition 4.12: distribution-freeness}

\paragraph*{Proposition 4.12 (The reserve and the critical slope are distribution-free, restated).}
Under weak feasibility, $r_0(G)$ and $\Lambda_c(G)$ are functionals of $(G,p_{\min},\gamma/\beta)$ alone, as is $\Lambda^*(G)$ under (A2). They do not depend on $F$, and they do not depend on the payoffs $(u_s,u_f,u_d)$ except through the single number $p_{\min} = (u_d-u_f)/(u_s-u_f)$. Consequently, under (A2) and strict feasibility, the set of slope budgets that admit first-best screening is $[\Lambda_c(G),\infty)$, and it is distribution-free: the same approval rule screens correctly under every full-support type distribution. In the saturated regime $\gamma/\beta > D_G(p_{\min},1)$ that set is empty for every $F$, and the second-best intensity $\lambda_{\mathrm{sat}} = \beta D_G(p_{\min},1)/\gamma$ of Proposition~4.14 is again a functional of $(G,p_{\min},\gamma/\beta)$ alone, so distribution-freeness holds across the whole parameter space.

\begin{proof}
The reserve solves $D_G(p_{\min},r_0)=\gamma/\beta$, and $\Lambda^*$ and $\Lambda_c$ are built from $C$, $r_0$ and $\gamma/\beta$; no $F$ appears, and the payoffs enter only through $p_{\min}$. That the induced screening is first-best for every $p\neq p_{\min}$ (Theorems~4.6(i), 4.7(i) and 4.8(iv)) is a statement about each type separately, so it holds under every $F$; and the necessity direction (Theorem~4.8(i)--(iii), or Corollary~4.9 at the weak-but-not-strict boundary $r_0=1$, which those parts do not cover) used $F$ only through full support, to make $(0,p_{\min})$ and $(p_{\min},r_0)$ carry positive measure. Hence $[\Lambda_c(G),\infty)$ is distribution-free.
\end{proof}

\subsection{Proposition 4.14: exact screening at reduced intensity}

\paragraph*{Proposition 4.14 (Exact screening at reduced intensity, restated).}
Assume (A1) and $\gamma/\beta > D_G(p_{\min},1)$. Put $\lambda_{\mathrm{sat}} \coloneqq \beta D_G(p_{\min},1)/\gamma \in (0,1)$ and $q_{\mathrm{sat}} \coloneqq \lambda_{\mathrm{sat}}\,\mathbf 1\{r \ge 1\}$. Then $q_{\mathrm{sat}}$ induces $\tau(p) = \lambda_{\mathrm{sat}}\,\mathbf 1\{p \ge p_{\min}\}$ for every $p \ne p_{\min}$, so the screening boundary lands exactly at the first-best cutoff, and
\[
W(q_{\mathrm{sat}}) \;=\; u_d + \lambda_{\mathrm{sat}}\int_{p_{\min}}^1 \Pi(p) f(p)\, dp \;>\; u_d .
\]
This is optimal within the class of approval rules that reject almost every type below $p_{\min}$.

\begin{proof}
Let $\gamma/\beta>D_G(p_{\min},1)$, so no reserve exists, and put $\lambda_{\mathrm{sat}}\coloneqq\beta D_G(p_{\min},1)/\gamma\in(0,1)$ and $q_{\mathrm{sat}}=\lambda_{\mathrm{sat}}\mathbf 1\{r\geq1\}$. Here $\bar r=1$ and $\bar q=\lambda_{\mathrm{sat}}$, and $q_{\mathrm{sat}}$ satisfies (C1) and (C2) of Lemma~4.5: it vanishes below $r=1$, where the ramp is non-negative, and equals $\mathrm{ramp}(1)=\lambda_{\mathrm{sat}}$ at $r=1$. So $\tau=\lambda_{\mathrm{sat}}\mathbf 1\{p\geq p_{\min}\}$ and $W(q_{\mathrm{sat}})=u_d+\lambda_{\mathrm{sat}}\int_{p_{\min}}^1\Pi f>u_d$. For optimality within the rules that reject almost every type below $p_{\min}$, both cap inequalities of Theorem~4.8(i) are available in the weaker form noted in its proof, which this class supplies and which does not presuppose that first best is attainable. Split on the best response: if $r^*(p)<p_{\min}$ then $q(r^*(p))=0$, so $\tau(p)=0$; if $r^*(p)\geq p_{\min}$ then $\gamma\tau(p)\leq\beta C(r^*(p))\leq\beta C(1)$ because $C$ is increasing there. Either way $\tau\leq\lambda_{\mathrm{sat}}$, whence $W(q)\leq W(q_{\mathrm{sat}})$.
\end{proof}

\subsection{Theorem A.2: canonical form of the direct mechanisms}

\paragraph*{Theorem A.2 (Canonical form of every incentive-compatible direct mechanism, restated).}
$(x,t)$ is incentive compatible if and only if there is a convex $\Psi:[0,1]\to\R$, with $\Psi'$ a selection of its subdifferential, such that
\begin{equation*}
t(r,\omega)=\underbrace{\Psi(r)+\Psi'(r)(\omega-r)}_{\text{proper score with generator }\Psi}-\underbrace{\gamma\,x(r)}_{\text{approval charge}} .
\end{equation*}
The rule $x$ is unrestricted, $\Psi$ is unrestricted among convex functions admitting a finite subgradient at every point of $[0,1]$, the two are chosen independently, and the agent's on-path payoff is $\Psi(p)$.

\begin{proof}
Write $B(r)\coloneqq t(r,1)-t(r,0)$ and $\alpha(r)\coloneqq t(r,0)+\gamma x(r)$, so $u(r,p)=\alpha(r)+pB(r)$ is affine in the type. The argument is the standard envelope characterization of incentive compatibility in a one-dimensional linear environment \citep{myerson1981optimal,rochet1987necessary,vohra2011mechanism}, given in full because the form used below holds at every pair of types rather than almost everywhere; the theorem adds to it the reading of~(A.2), the first bracket as a proper score with generator $\Psi$ and the second as an approval charge of exactly $\gamma x(r)$. Let $U(p)\coloneqq\max_ru(r,p)$, convex as a pointwise maximum of affine functions of $p$. If the mechanism is incentive compatible then $U(p)=\alpha(p)+pB(p)$; adding the incentive constraints at $p$ against $p'$ and at $p'$ against $p$ gives $(p'-p)(B(p')-B(p))\geq0$, so $B$ is non-decreasing at every pair, not only where $U$ is differentiable, and $B(r)\in\partial U(r)$ throughout. $U$ is a pointwise maximum of affine functions, hence convex and lower semicontinuous, so continuous on $[0,1]$, and $U'=B$ off a countable set gives $U(p)=U(0)+\int_0^pB$, and $\alpha(r)=U(r)-rB(r)$ for every $r$. Conversely, if $B$ is non-decreasing and $\alpha(r)=U(r)-rB(r)$ with $U(r)\coloneqq U(0)+\int_0^rB$, then $U$ is convex with $U'=B$, so $u(r,p)=U(r)+B(r)(p-r)\leq U(p)$ by the supporting-hyperplane inequality, with equality at $r=p$. Setting $\Psi\coloneqq U$ gives $t(r,1)-t(r,0)=\Psi'(r)$ and $t(r,0)=\Psi(r)-r\Psi'(r)-\gamma x(r)$, which is~(A.2) at $\omega\in\{0,1\}$.
\end{proof}

\subsection{Theorem A.3: welfare equivalence and truthfulness inequivalence}

\paragraph*{Theorem A.3 (Welfare equivalence, truthfulness inequivalence, restated).}
Fix the environment, a strictly convex $G\in C^2$, and $\beta,\gamma>0$. \emph{(i)} The unrestricted benchmark attains $W^*$ with a strictly incentive-compatible mechanism, and $W^*$ bounds every mechanism in either class. \emph{(ii)} The restricted architecture attains $W^*$ if and only if $\gamma/\beta\leq D_G(p_{\min},1)$, and when it does, both the step rule of Theorem~4.6 and the Bregman ramp of Theorem~4.7 attain it. \emph{(iii)} No member of the restricted class attaining $W^*$ is incentive compatible; the incentive-compatible members are exactly the rules constant on $(0,1)$ and no larger at the two endpoints, all strictly below $W^*$.

\begin{proof}
(i) is the first coordinate reading of Appendix~A.10 with the pointwise bound in~(4.2). For (ii), sufficiency is Theorem~4.6; for necessity, Theorem~4.8(i) gives $\gamma q\leq\beta D_G(p_{\min},\cdot)$ pointwise, and any $p>p_{\min}$ with $\tau(p)=1$ supplies an $r$ with $q(r)=1$; that same part puts $q\equiv0$ on $[0,p_{\min}]$, so $r>p_{\min}$, and $r\mapsto D_G(p_{\min},r)$ is increasing above $p_{\min}$, whence $\gamma\leq\beta D_G(p_{\min},r)\leq\beta D_G(p_{\min},1)$. For (iii), we first show that $q$ induces truthful reporting for every type in $(0,1)$ if and only if $q$ is constant at some $c$ there with $q(0)\leq c$ and $q(1)\leq c$. Sufficiency is immediate, every deviation costing $\beta D_G(p,r)>0$ and buying nothing. For necessity, fix a compact $[a,b]\subset(0,1)$ on which $G''\leq\bar G''_{[a,b]}$; if $r=p$ maximizes the agent's payoff for both $p$ and $r$ in $[a,b]$, comparing from each side gives $\gamma[q(r)-q(p)]\leq\beta D_G(p,r)$ and $\gamma[q(p)-q(r)]\leq\beta D_G(r,p)$, while $D_G(p,r)=\int_p^rG''(z)(z-p)dz\leq\tfrac12\bar G''_{[a,b]}(r-p)^2$ and symmetrically, so $|q(r)-q(p)|\leq\frac{\beta\bar G''_{[a,b]}}{2\gamma}(r-p)^2$ on $[a,b]$, a quadratic H\"older condition forcing $q$ constant on $(a,b)$; exhausting $(0,1)$ by such intervals gives $q$ constant at some $c$ there, and truthfulness at types approaching an endpoint, where $D_G(p,\cdot)$ vanishes, forces $q(0)\leq c$ and $q(1)\leq c$. Such a $q$ induces $\tau\equiv c$ off the two $F$-null endpoints, with $W(q)=u_d+c\int_0^1\Pi f$, whose best value over $c\in[0,1]$ is strictly below $W^*=u_d+\int_{p_{\min}}^1\Pi f$ because $\int_0^{p_{\min}}\Pi f<0$; and a constant $\tau$ cannot equal $\mathbf 1\{p\geq p_{\min}\}$ on a set of positive measure on both sides of $p_{\min}$.
\end{proof}

\subsection{Corollary 4.9: the first-best rules}

\paragraph*{Corollary 4.9 (The first-best rules, restated).}
Assume (A1) and weak feasibility, and let $q : [0,1] \to [0,1]$ be non-decreasing and admit a global best response at every type (automatic if $q$ is upper semicontinuous). Then $q$ attains first-best screening if and only if $q \le \mathrm{ramp}$ on $[0,1]$ and $q(r_0) = 1$. The step rule of Theorem~4.6 and the Bregman ramp of Theorem~4.7 are two named members of that family; under (A2) and strict feasibility the truncated line $q_\Lambda$ of Theorem~4.8(iv), for any $\Lambda \ge \Lambda_c(G)$, is a third.

\begin{proof}[Proof of Corollary~4.9]
\emph{Necessity.} The cap is part~(i), which gives $q\leq\mathrm{ramp}<1$ on $[0,r_0)$. Pick $p\in(p_{\min},r_0)$ with $\tau(p)=1$, possible since that interval carries positive $F$-measure; its best response, which exists by convention, lands at some $r^*(p)$ with $q(r^*(p))=1$ and hence $r^*(p)\geq r_0$. Under weak-but-not-strict feasibility $r_0=1$ and $q(r_0)=1$ follows at once. Under strict feasibility, part~(ii) supplies $r_n\downarrow r_0$ with $q(r_n)=1$, so $V(r^*(p);p)\geq\gamma-\beta D_G(p,r_n)$ for every $n$, whence $D_G(p,r^*(p))\leq D_G(p,r_0)$ by continuity of $D_G(p,\cdot)$; since $D_G(p,\cdot)$ is strictly increasing above $p$ and $r^*(p)\geq r_0>p$, this forces $r^*(p)=r_0$ and $q(r_0)=1$.

\emph{Sufficiency.} A non-decreasing $q$ with $q(r_0)=1$ and $q\leq1$ satisfies $q\equiv1=\mathrm{ramp}$ on $[r_0,1]$, so (C1) and (C2) of Lemma~4.5 hold and the lemma applies.
\end{proof}

\subsection{Corollary 4.10: closed form under monotone ramp slope}

\paragraph*{Corollary 4.10 (Closed form under monotone ramp slope, restated).}
Assume (A2) and strict feasibility. Then $\Lambda_c(G) = \Lambda^*(G) = (\beta/\gamma) G''(r_0)(r_0 - p_{\min})$ if and only if $C$ lies above its tangent at $r_0$ throughout $[p_{\min},r_0]$, that is $C(a) \ge C(r_0) - C'(r_0)(r_0-a)$ for all $a \in [p_{\min},r_0]$. A checkable \emph{sufficient} condition is that $C$ be convex on $[p_{\min},r_0]$, equivalently that the ramp slope $s(r) = (\beta/\gamma)G''(r)(r-p_{\min})$ be non-decreasing there, equivalently $G'''(r)(r-p_{\min}) + G''(r) \ge 0$ under (A3). Otherwise $\Lambda_c > \Lambda^*$ strictly. In all cases $\Lambda_c \ge 1/m$.

\begin{proof}[Proof of Corollary~4.10]
The inequality $\Lambda_c\geq\Lambda^*$ always holds, since the chord slope tends to $(\beta/\gamma)C'(r_0)=\Lambda^*$ as $a\uparrow r_0$, as established in the proof of Theorem~4.8(iii). So $\Lambda_c=\Lambda^*$ if and only if every chord slope to $(r_0,1)$ is at most $\Lambda^*$, i.e. $1-\mathrm{ramp}(a)\leq\Lambda^*(r_0-a)$ for all $a$, i.e., multiplying by $\gamma/\beta$, $C(r_0)-C(a)\leq C'(r_0)(r_0-a)$: the requirement that $C$ lie above its tangent at $r_0$ on $[p_{\min},r_0]$. Convexity of $C$ there, equivalently monotonicity of $s$, equivalently $G'''(r)(r-p_{\min})+G''(r)\geq0$ under (A3), implies tangent support, a convex function lying above each of its tangents; it is not implied by it, the condition constraining $C$ only relative to the single tangent at $r_0$. Otherwise $\Lambda_c>\Lambda^*$ strictly. The chord at $a=p_{\min}$ has $\mathrm{ramp}(p_{\min})=0$ and length $m$, so $\Lambda_c\geq1/m$ in all cases.
\end{proof}
\section{Multi-Dimensional Types}\label{sec:multi-dim}

The impossibility half of the paper is untouched by $d$-dimensional types, being proved in $\R^d$ to begin with. The design half loses the generator-independence it has on the line, because an agent holding a $d$-dimensional report chooses a direction of inflation as well as a distance and the generator's Hessian field $H_G \coloneqq \nabla^2 G$ prices directions unequally. A hard rule becomes $q(r) = \mathbf{1}\{r \in \mathcal{A}\}$ for an acceptance region $\mathcal{A} \subseteq \calR$, so the reserve is a level set of the divergence rather than a point, and the principal must place it so that the types willing to inflate into $\mathcal{A}$ are exactly the first-best set $\mathcal{A}^*$. Under the Brier score $H_G$ is isotropic and the scalar answer carries exactly: $\mathcal{A}$ is the $\delta$-inner contraction of $\mathcal{A}^*$ with $\delta \coloneqq \sqrt{\gamma/\beta}$, exact when every boundary point of $\mathcal{A}^*$ is touched by a $\delta$-ball inside it, which linear welfare supplies for free and convexity alone does not. Away from Brier the agent's cost is an asymmetric Bregman divergence rather than a distance, so the shape the principal contracts by varies across the type space. \multidimnote{}

The open question that matters most is the multi-dimensional analogue of the critical slope. In the scalar theory steepness rather than regularity is the binding constraint on a continuous approval rule (Theorem~\ref{thm:critical-slope} and Proposition~\ref{prop:below-threshold}), and $\Lambda_c(G)$ is a supremum of chord slopes of the Bregman ramp, each chord on the line having an unambiguous horizontal extent. In $\R^d$ the constraint governs how fast $q$ may climb along a path in report space, saturation occurs on a level set of $D_G(p_{\min},\cdot)$ rather than at a point, and the natural candidate is an infimum over paths of a direction-weighted climb rate, a Finsler-type quantity built from the Hessian field rather than a number read off one chord. Whether that infimum has a closed form, whether it collapses to the tangency slope as the scalar constant does for the standard generators, and whether the cubic law survives in $\R^d$ are open; the vector-valued elicitability framework of \citet{fissler2016higher} is the natural ambient theory for them.

\section{Multi-Dimensional Types: The Geometry of the Acceptance Region}\label{sec:multi-dim-geom}

In the scalar setting the oversight design problem is generator-independent in a strong sense: for every admissible generator a hard threshold at the reserve $r_0(G)$ screens types exactly, and so does a ramp of slope at least $\Lambda_c(G)$ (Theorems~4.6, 4.7 and~4.8). Multi-dimensional types leave the impossibility half of the paper untouched, since it is proved in $\R^d$ to begin with, but they break that independence for a geometric reason: an agent holding a $d$-dimensional report chooses a direction of inflation as well as a distance, and the score prices directions unequally.

The escape mechanism itself is dimension-free. A hard rule becomes $q(r) = \mathbf{1}\{r \in \mathcal{A}\}$ for an acceptance region $\mathcal{A} \subseteq \calR$, the agent's decision is still binary, and that binary choice still partitions the type space into inflating and non-inflating types for any $d$ and any $G$. The object the principal must choose changes. In one dimension the reserve is a single number obtained by solving $D_G(p_{\min},r_0) = \gamma/\beta$, and any target cutoff is reachable because the calibration cost is monotone in the report; in $\R^d$ the principal chooses an entire surface $\partial \mathcal{A}$, and must choose it so that the types willing to inflate into $\mathcal{A}$ are exactly $\mathcal{A}^*$.

Under the Brier score the problem has a clean solution, because the score is isotropic. With $G(p) = \|p\|^2$ the Hessian field $H_G \coloneqq \nabla^2 G$ is $2I_d$ everywhere, so the cost of moving a report from $p$ into $\mathcal{A}$ is $\beta \cdot d(p,\mathcal{A})^2$ with $d(p,\mathcal{A}) = \inf_{a \in \mathcal{A}}\|a-p\|$, the agent inflates to the nearest point of $\mathcal{A}$, and does so exactly when $d(p,\mathcal{A}) \leq \delta \coloneqq \sqrt{\gamma/\beta}$. The approved set is the $\delta$-neighborhood $\mathcal{A} \oplus \delta B$, so the principal takes $\mathcal{A}$ to be the $\delta$-inner contraction of $\mathcal{A}^*$, which recovers the scalar reserve offset with a surface in place of a point.

Exactness then turns on a condition invisible in one dimension: contracting $\mathcal{A}^*$ by $\delta$ and re-expanding must return $\mathcal{A}^*$, which holds precisely when every boundary point of $\mathcal{A}^*$ is touched by a $\delta$-ball contained in $\mathcal{A}^*$. Linear welfare supplies it for free, since an affine $W_A$ makes $\mathcal{A}^*$ a half-space whose inner contraction is a parallel translate of $\partial \mathcal{A}^*$. Convexity alone does not: re-expanding the inner contraction of a convex polytope rounds its corners, so types just inside a corner of $\mathcal{A}^*$ are rejected, at a welfare loss that is strictly positive though it vanishes with $\delta$.

Away from the Brier score the cost is not a distance at all. The agent's calibration loss is the Bregman divergence $D_G(p,r)$, asymmetric in its arguments, whose sublevel sets $\{r : D_G(p,r) \leq \gamma/\beta\}$ are shaped by the local eigenstructure of $H_G$ and change from type to type. The principal is then fitting $\mathcal{A}^*$ with a structuring shape that varies across the type space rather than offsetting it by a fixed ball, and both the direction in which the agent inflates and the price it pays for that direction are set by $H_G$. None of this appears in the scalar case: direction is not a choice, only the distance to the reserve matters, and the attainment results hold for every generator with no curvature hypothesis at all.
\end{appendix}

\paragraph*{Artifact availability.}
Every computed numeral in this paper states its reproduction parameters inline, and each is recomputed in a public archive from the defining objects rather than from the derivations it checks. The printed constants are enclosed in Arb ball arithmetic \citep{johansson2017arb}, in which a comparison holds only when it holds for every point of every ball, so the enclosures are proofs rather than agreements between floating-point runs; the accompanying instance and property checks of results proved here, and the grid-computed illustrations in the footnotes, are floating-point. The archive is at \url{https://github.com/Future-Computing-Group/endomiscal-certificates}, deposited at \href{https://doi.org/10.5281/zenodo.22143871}{DOI~10.5281/zenodo.22143871}.

\begin{acks}
The authors thank colleagues at the Future Computing Group, University of
Oulu, and the Department of Computer Science, University of Helsinki, for
feedback on the marketplace and AI agent oversight instances. Manuscript preparation used Claude
(Anthropic) for drafting assistance.

This work was supported by the Research Council of Finland through the 6G
Flagship programme (grant 318927), the Strategic Research Council affiliated
with the Academy of Finland through the CO2CREATION project (grant 372355),
by Business Finland through the Neural pub/sub research project (diary number
8754/31/2022), and by the European Regional Development Fund (ERDF; project
numbers A81568, A91867).
\end{acks}

\bibliographystyle{ACM-Reference-Format}
\bibliography{references}

\end{document}